\documentclass[10pt]{article}
 \setlength{\columnsep}{.4in}
\usepackage[latin1]{inputenc}
\usepackage[T1]{fontenc}
\usepackage{graphicx}
\usepackage{bm}
\usepackage{amsmath}
\usepackage{dsfont}
\usepackage{lipsum,multicol}
\usepackage{amssymb}
\usepackage{graphicx}
\usepackage{pifont}
\usepackage{nopageno}
\usepackage{algpseudocode}
\usepackage{amsfonts}
\usepackage{appendix}
\usepackage{amscd,amsthm}
\usepackage{mathtools}  
\usepackage{tabulary}
\usepackage{booktabs}
\usepackage{algorithm}
\usepackage{hyperref} 
\usepackage{multirow}
\usepackage[font=scriptsize,labelfont=bf,margin=0.1in]{caption}
\newtheorem{Satz}{Theorem}[section]

\newtheorem{Prop}[Satz]{Proposition}

\newtheorem{Conj}[Satz]{Conjecture}

\usepackage[margin=.6in]{geometry}
\usepackage{xr}
\externaldocument{appendix_material}
%{}
%\usepackage{xc}
%\externalcitedocument{mother}

\title{A rigorous and efficient asymptotic test for power-law cross-correlation}
\author{Duncan A.J.~Blythe}
\date{\today}
\begin{document}
\maketitle
\abstract{ Podobnik and Stanley recently proposed a novel framework \cite{DCCA}, Detrended Cross-Correlation Analysis, for the analysis of power-law cross-correlation between two time-series, a phenomenon which 
occurs widely in physical, geophysical, financial and numerous additional applications. While highly promising in these important application 
domains, to date no rigorous or efficient statistical test has been proposed which uses the
information provided by DCCA across time-scales for the presence of this power-law cross-correlation.
In this paper we fill this gap by proposing a method based on DCCA for testing the hypothesis of power-law cross-correlation; the method synthesizes the information generated by DCCA across time-scales and returns conservative but practically relevant $p$-values for the null hypothesis of zero correlation, which may be efficiently calculated in software. 
Thus our proposals generate confidence estimates for a DCCA analysis in a fully probabilistic fashion.}

\section{Introduction}

The presence of power law autocorrelations in empirical time-series has long been noted and studied in a broad range of physical applications \cite{mandelbrot1982fractal,leland1994self,bak1987self}. In particular, the widespread observation  of power-law autocorrelation in the time-domain has stimulated the development of numerous methods for the estimation
of the exponent of the power law, including Detrended Fluctuation Analysis \cite{PengDFA}. Moreover since the individual time-series are often univariate components of a multi dimensional and complex physical system, for example in neuroscience (e.g. EEG) and finance (e.g. stock index time-series), natural objects of study are the \emph{interactions} between power-law correlated time-series. In particular, under certain assumptions, if these time-series arise from a pair of components
between which interactions exist, then the time-series are power-law \emph{cross}-correlated: that is the time-lagged cross-correlation function between the pair of time-series takes the form of a power-law asymptotically. Accordingly Podobnik and Stanley's
Detrended Cross-Correlation Analysis (DCCA) \cite{DCCA}  constitutes an important tool in analyzing these interactions, by extending DFA to the analysis of cross-correlations across scales between two time-series. Since its recent development DCCA has been applied in finance \cite{DCCA_world_stock,wang2011quantifying}, geophysics \cite{DCCA_physics, shadkhoo2009multifractal, hajian2010multifractal,shang2009chaotic} and socio-geographical data \cite{zebende2011study}. Concomitantly, several important methodological extensions of DCCA have been proposed, including a multi-fractal extension \cite{zhou2008multifractal} %, extensions to detrending of arbitrary polynomial order \cite{} 
and heuristics for using DCCA to test for the presence of power-law cross-correlation \cite{podobnik2011statistical,zebende2011dcca}.
Despite these numerous innovations, and as we shall see, it remains unclear exactly how to use the information yielded by DCCA to perform 
a \emph{statistical test} for power-law cross-correlation which conforms to the highest levels of scientific rigour and which may be executed efficiently without recourse to large scale computation;
as we shall see below in Section~\ref{sec:review}, using DCCA without modification cannot be used to test statistically for the presence long-range cross-correlation and the only contribution of which we are aware in the literature which does purport to yield a statistical test using DCCA
does not satisfy the dual desiderata of rigour and efficiency. This paper proposes a statistical test based on DCCA which allows the data analyst to compute a $p$-value for the null-hypothesis of independence, which may be executed
within seconds in a modest computing environment and which may be shown to be correct within a flexible semi-parametric class of long-range dependent time-series which are possibly contaminated by non-stationary polynomial trends of an arbitrary degree and thus conforms to these standards of both rigour and efficiency.

\subsection{Review of DCCA and the problem setting}
\label{sec:review}
We assume in the following that we are given two time series $Y_1(t)$ and $Y_2(t)$ at least one of which is power-law autocorrelated. I.e., $Y_1(t)$ and $Y_2(t)$ are subject to \emph{Hurst exponents}: $H,G \geq 1/2$ s.t. $H$ or $G>1/2$  and as $s \rightarrow \infty$:

\begin{equation}
 \mathbb{E}(Y_1(t)Y_1(t+s)) \sim \frac{C}{s^{2H-2}} \text{\hspace{0.2in}or\hspace{0.2in}} \mathbb{E}(Y_2(t)Y_2(t+s)) \sim \frac{C'}{s^{2G-2}} \label{eq:LRTC}
 \end{equation}
% \vspace{.05in}
 \newline
 Moreover, in this context of long-range dependence we are interested in testing for the presence of power-law \emph{cross-correlation}, which we will also refer to as LRCC for short, i.e., whether for some $\beta>0$ and $A>0$:  
 \begin{equation}
 \mathbb{E}(Y_1(t)Y_2(t+s)) \sim \frac{A}{s^{\beta}} \label{eq:LRCC}
 \end{equation}
 \newline
 Thus, the phenomenon in which we are interested may be completely defined in terms of the cross-covariance and autocovariance functions of the time-series of interest (equivalently power-spectra). However, although the topic of interest may be defined in terms of covariances, in practice the empirical time-lagged covariances and cross-covariances are unreliable as guides to the analysis of power-law properties in the limit $s \rightarrow \infty$. This is firstly because analyzing the behaviour of time-lagged correlations
 as $s \rightarrow \infty$ implies using the information in the lowest-frequencies of the given signals, which are often contaminated by deterministic trends, for example by recording artifacts in biomedicine:
 low pass filtering to clean these artifacts would delete valuable information contained in the low-frequencies regarding the presence of long-range correlation and cross-correlation.
 Secondly, under the assumption of long-range dependence, certain desirable statistical properties do not hold for the sample lagged correlations, which would otherwise facilitate estimation\footnote{These include, for example, failure to converge to Gaussianity as the number of time-samples grows and minimax convergence rates in estimation} \cite{bardet2000testing}.

 For these reasons numerous novel methodologies have already been proposed for estimating the parameters of and testing for the long-range \emph{autocorrelation} of Equation~\eqref{eq:LRTC}, which circumvent 
 the cited difficulties which apply when using the empirical autocorrelation function. These include Detrended Fluctuation Analysis (DFA) \cite{PengDFA}, which may be used to estimate the \emph{Hurst} parameter $H$ of a given univariate time-series, $Y(t)$. %\footnote{$H$
 %is related to the the exponent of autocorrelation through the relationship: $2H-2 = \alpha$ \cite{heneghan2000establishing}}. 
 DFA can be proven to possess the desirable statistical properties lacking in a autocovariance analysis \cite{DFA_asymp} and implements
 detrending so that deterministic non-stationarities, of potentially arbitrary polynomial order \cite{kantelhardt2001detecting}, have no influence on estimation of $H$. More precisely, DFA involves first forming the aggregate sum of the empirical 
 time-series $Y(t)$:
 \begin{equation}
X(t) = \sum_{i=1}^t Y(i)
 \end{equation}
 
(From now on whenever we refer to $X_j(t)$ we mean the time-series obtained from $Y_j(t)$ by way of this operation.)
Analysis of the fluctuations in $X(t)$ may then be performed by measuring the variance of $X(t)$ in windows of varying size $n$ \emph{after} detrending, i.e., $X(t)$ is split into windows
of length $n$, $X_n^{(1)},\dots,X_n^{(j)},\dots,X_n^{([N/n])}$ and the average variance after detrending the data in these windows is formed; i.e. let $\mathbb{P}_d$ be the operator which generates the mean-squares estimate of the polynomial fit of degree $d$, then the DFA coefficients or detrended variances of degree $d$ are:

\small
\begin{equation}
F_{DFA}^2(n) = \frac{1}{n} \sum_{j}\left(X_n^{(j)}-\mathbb{P}_d\left(X_n^{(j)}\right)\right)^\top\left(X_n^{(j)}-\mathbb{P}_d\left(X_n^{(j)}\right)\right)
\end{equation}
 \normalsize
Crucially, it is possible to show that in the limit of data the slope of $\text{log}(F^2_{DFA}(n))$ against $\text{log}(n)$ converges to $H$ \cite{DFA_asymp}. Thus $Y(t)$ is power-law correlated if and only if
the estimate of $H$, $\widehat{H}$, converges to a number greater than $0.5$ in the limit of data. 

In precise analogy, Podobnik and Stanley propose Detrended Cross-Correlation Analysis (DCCA), an extension of DFA to two time-series, by considering:
 \small
\begin{multline}
F_{DCCA}^2(n) = \\ \frac{1}{n} \sum_{j}\left((X_1)_n^{(j)}-\mathbb{P}_d\left((X_1)_n^{(j)}\right)\right)^\top\left((X_2)_n^{(j)}-\mathbb{P}_d\left((X_2)_n^{(j)}\right)\right)
\end{multline} 
 \normalsize
 
 Thus DCCA generalizes DFA in the sense that if $X_1=X_2$ then $F_{DCCA}^2(n) = F_{DFA}^2(n)$. 
 To simplify the fact that we will need to consider the DFA coefficients of $Y_1$ and $Y_2$ simultaneously, we will also refer to the DCCA coefficients as:
 \begin{eqnarray}
 F_{X_1,X_2}^2(n) &:=& F_{DCCA}^2(n)
 \end{eqnarray}

Given these definitions, it is possible to show (Proposition~\ref{prop:expectation} of the Appendix) that:
\begin{enumerate}
\item If $Y_1(t)$ and $Y_2(t)$ are independent then  \newline $\mathbb{E}(F_{DCCA}^2(n)) = 0$
\item If $Y_1(t)$ and $Y_2(t)$ are positively correlated, then, on average, and for large window sizes  $\text{log}(F_{DCCA}^2(n))$ is linear against $\text{log}(n)$ 
\item If $Y_1(t)$ and $Y_2(t)$ are negatively correlated, then, on average, and for large window sizes  $\text{log}(-F_{DCCA}^2(n))$ is linear against $\text{log}(n)$ 
\end{enumerate}

This implies that for sufficiently large datasets we should be able to distinguish between those which are power-law cross-correlated and those which are not power-law cross-correlated by checking 
which of these cases applies; this is also the approach taken in the paper in which DCCA was originally proposed \cite{DCCA}. Given that we may conclude dependence, we may then further test for power-law correlation by checking whether
the absolute value of the slope of  $\text{log}(F_{DCCA}^2(n))$ against $\text{log}(n)$ is greater than $1/2$.  See Figure~\ref{fig:easy_examples} for two examples
of resp.~independence and dependence for which this method works.

\begin{figure}
\begin{center}
$\begin{array}{c c c c}
\includegraphics[width=40mm,clip=true,trim= 0mm 0mm 8mm 0mm]{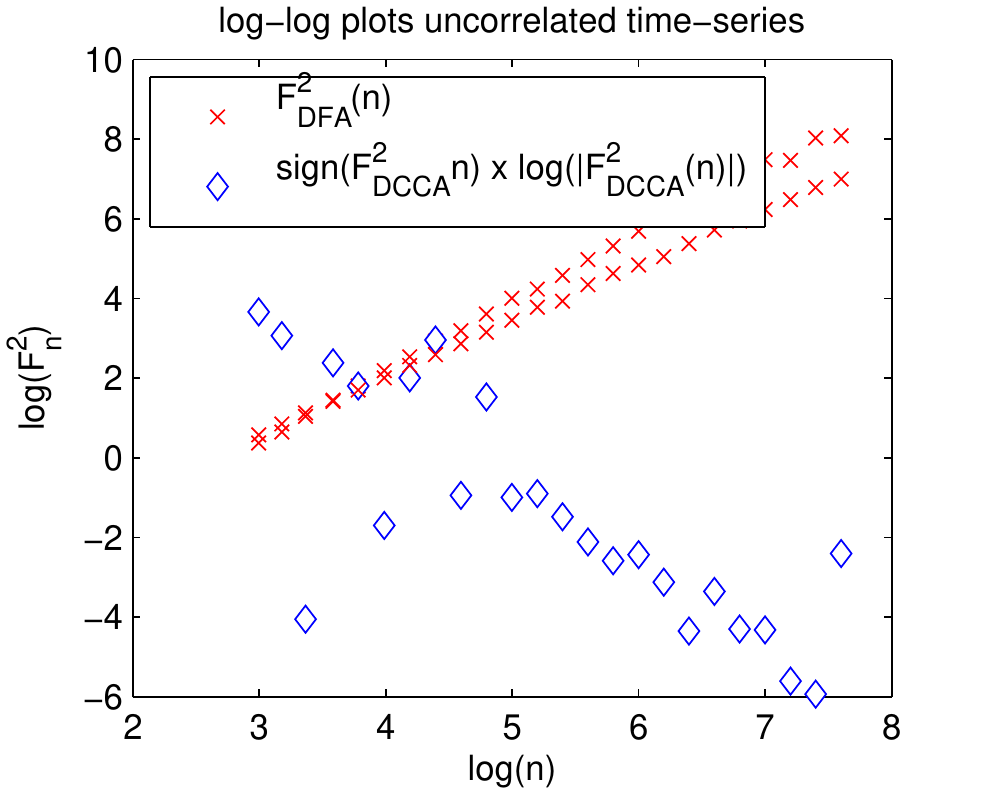} & \includegraphics[width=40mm,clip=true,trim=  0mm 0mm 8mm 0mm]{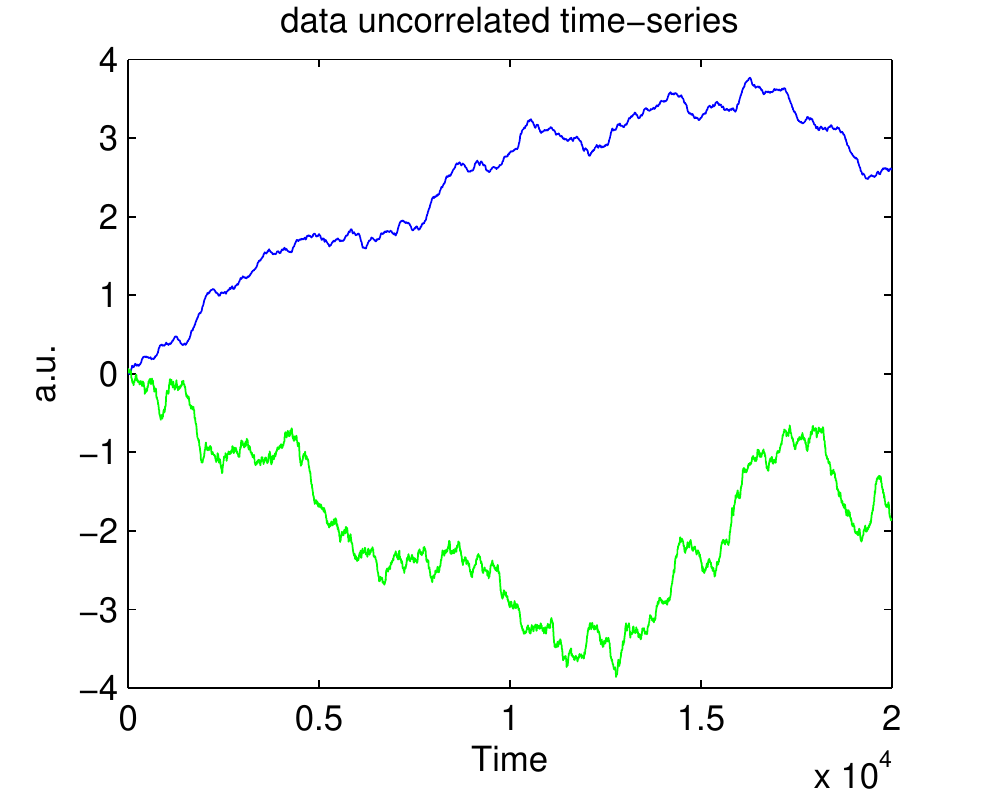} \\
\includegraphics[width=40mm,clip=true,trim=  0mm 0mm 8mm 0mm]{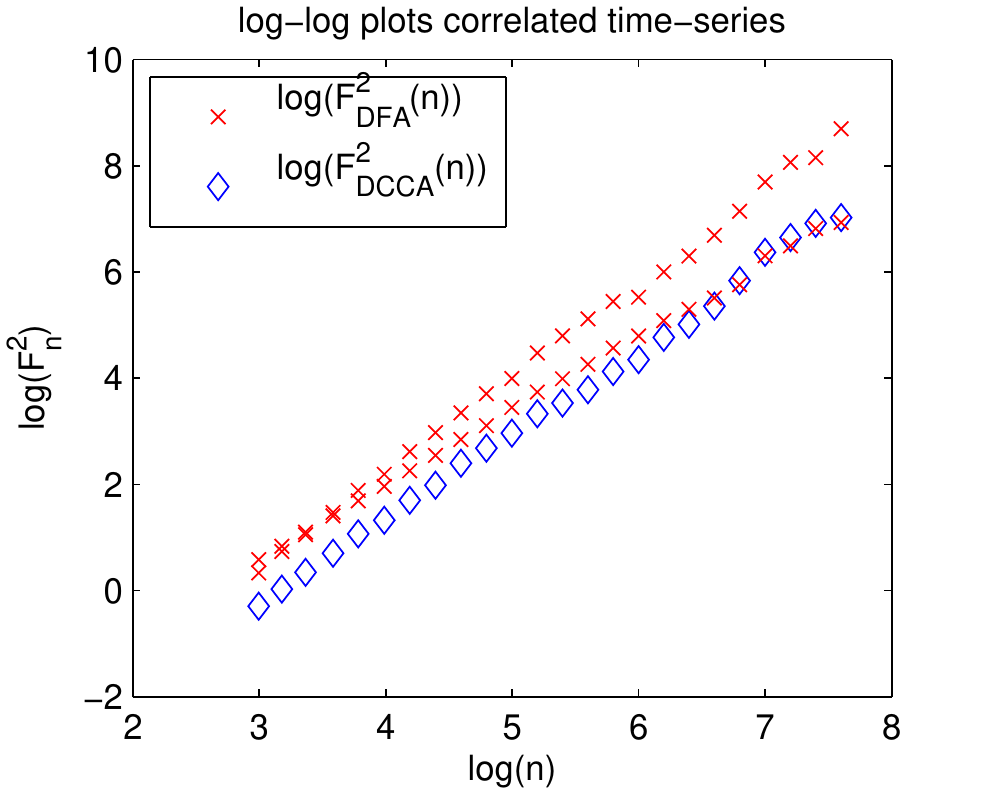} & \includegraphics[width=40mm,clip=true,trim=  0mm 0mm 8mm 0mm]{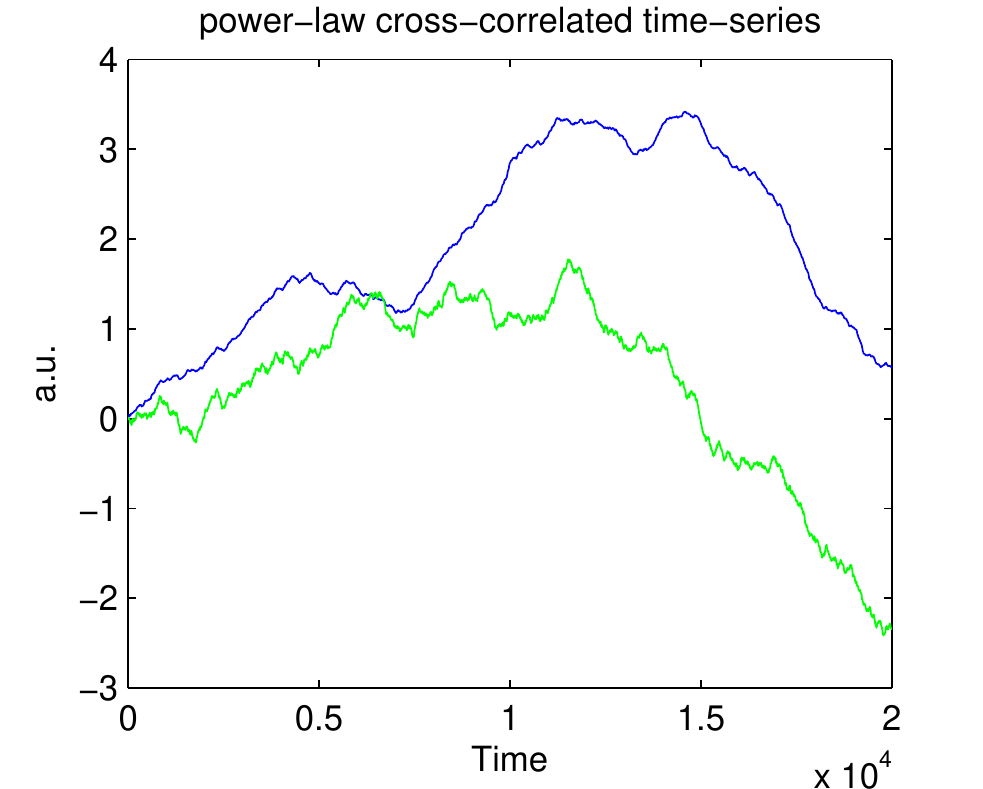} %&
\end{array}$
\end{center}
\caption{Illustration of DCCA for power-law cross-correlated and uncorrelated time-series; the figure displays in the left-column the log-DCCA coefficients and log-DFA coefficients plotted against the log-window size for resp. uncorrelated and correlated fractional Gaussian Noises.
In each case the fractional Gaussian noises are of length 20000 data points, and are subject to Hurst exponents of 0.7 and 0.9 resp.. The aggregated data  ($X_1,X_2$) are plotted in the right hand column in each case.
In the first case the time-series are uncorrelated ($\rho = 0$) and thus the DCCA coefficients oscillate around 0 and display no log-linear behaviour (upper row).
In the second case the time-series are correlated ($\rho = 0.4$) and thus the DCCA coefficients display log-linear behaviour (bottom row left).
\label{fig:easy_examples}}
\end{figure}

In each case we generate two fractional Gaussian noises\footnote{We use fractional Gaussian noise for illustration because it possesses certain properties of special significance which are possessed asymptotically by all time-series within the model of Equations~\eqref{eq:LRTC} and~\eqref{eq:LRCC}; these properties are discussed later in the paper.}, time-series which conform to the model of Equations~\eqref{eq:LRTC} and~\eqref{eq:LRCC}, with Hurst parameters
$0.9$ and $0.7$ respectively and length $20000$ time-points; we choose $n$ to range between $20$ and $2000$. The presence of long-range dependence may be concluded by looking at the slope of $\text{log}(F_{DFA}^2(n))$ against $\text{log}(n)$ which is greater than $1/2$ in each case (in red). 

In the first case the time-series are chosen to be independent; here we see that $F^2_{DCCA}(n)$ fluctuates around zero and that no linear dependence of $\text{log}(F_{DCCA}^2(n))$ against $\text{log}(n)$ is present.
To visualize this we plot $\text{sign}(|F_{DCCA}^2(n))) \times \text{log}(|F_{DCCA}^2(n)|)$ which brings $F^2_{DCCA}(n)$ onto the log scale but allows for inspection of fluctuation around zero.
In the second case, the time-series are correlated with correlation equal to $0.4$: we see that $F^2_{DCCA}(n)$ does not fluctuate around zero and that $\text{log}(F_{DFA}^2(n))$ is linear against $\text{log}(n)$; the slope of the straight line fit is greater than $1/2$, suggesting the presence of long-range cross-correlation.

\subsubsection{Problems with interpreting the DCCA log-log plot}

Thus, testing for power-law cross-correlation under this scheme is a \emph{two step} procedure. The second step is formally equivalent to the use of DFA to test for long-range autocorrelation. The \emph{first} step however has no analogy for DFA;  indeed it is this first step which is problematic. See Figure~\ref{fig:type_I_errors} for examples of log-log plots for independent fractional Gaussian noises when there is less data available ($N=10000$) and the window sizes $n$ run between $10^1$ and $10^3$.
The left hand panel displays $\text{sign}(|F_{DCCA}^2(n))) \times \text{log}(|F_{DCCA}^2(n)|)$ plotted against $\text{n}$. A straight line is clearly visible, even though the underlying time-series are independent. 
To demonstrate that such cases are not pathological but occur frequently, we plot, in the right hand panel, the significance level for a $t$-test (two sided test on Pearson's correlation coefficient at the $p=0.05$ level) for zero correlation as a histogram for 100 simulated fractional Gaussian noises. In 35 cases
the null-hypothesis is rejected at the 0.05 significance level: this demonstrates that the statistics for the appearance of a linear relationship in the DCCA log-log plot are quite different to the statistics for standard regressive testing.
%DCCA is a tool for the analysis of two time-series, $Y_1(t)$ and $Y_2(t)$: the method takes as input the cumulative 
%sum of each of the time-seres: $X_1(t) = \sum_{i=1}^t Y_1(i)$ and $X_2(t) = \sum_{i=1}^t Y_2(t)$ and as output returns coefficients $F^2_{X_1,X_2}(n)$ such that, \emph{if} the two time-series are power-law positively cross-correlated then 
%$\text{log}(F^2_{X_1,X_2}(n))$ is linear on average against $\text{log}(n)$ ($\text{log}(-F^2_{X_1,X_2}(n))$ is linear against $\text{log}(n)$ for negative cross-correlation) and in the case of zero correlation the coefficients $F^2_{X_1,X_2}(n)$
%have expectation zero. 
Thus the presence of a linear relationship between the \emph{empirical} $\text{log}(F^2_{X_1,X_2}(n))$ and $\text{log}(n)$ is unreliable as a guide to interaction; in particular, for smaller data sets, where fewer time-scales are available, the presence of an approximately linear relationship may occur spuriously 
due to the correlations between $F^2_{X_1,X_2}(n_1)$ and $F^2_{X_1,X_2}(n_2)$ for $n_1$ and $n_2$ within a few degrees of magnitude and due to the increasing variance of the $F_{DCCA}^2(n)$ coefficients in $n$. 
The extra information w.r.t. dependence, omitted by considering the slope, in the case displayed in  the left-hand panel of Figure~\ref{fig:type_I_errors}, is contained in the \emph{height} of  the estimated
straight line fit in relation to the DFA straight line fits; compare with the left hand panel of Figure~\ref{fig:easy_examples}. Thus, the \emph{closer} the DCCA coefficient to the DFA coefficients, the more likely it is that correlation exists between the time-series.

\begin{figure}
\begin{center}
$\begin{array}{c c}
\includegraphics[width=40mm,clip=true,trim= 0mm 0mm 0mm 0mm]{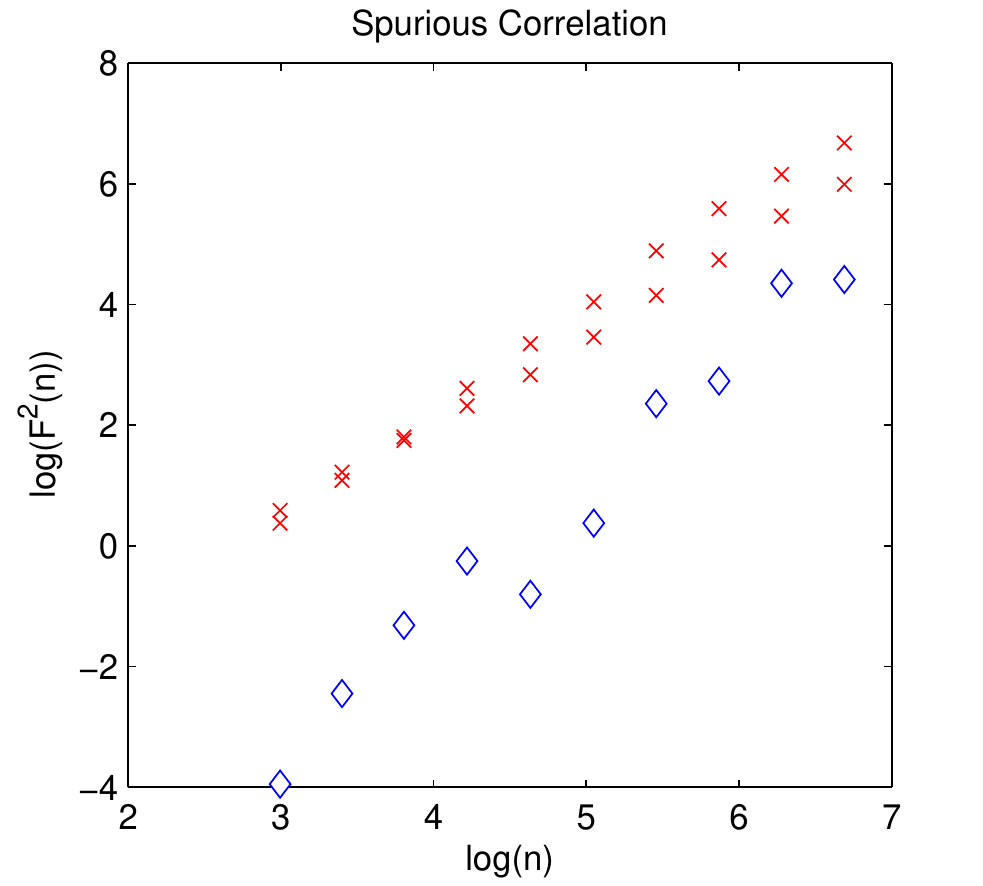}
%&\includegraphics[width=25mm,clip=true,trim= 10mm 10mm 0mm 0mm]{}
& \includegraphics[width=40mm,clip=true,trim= 0mm 0mm 0mm 0mm]{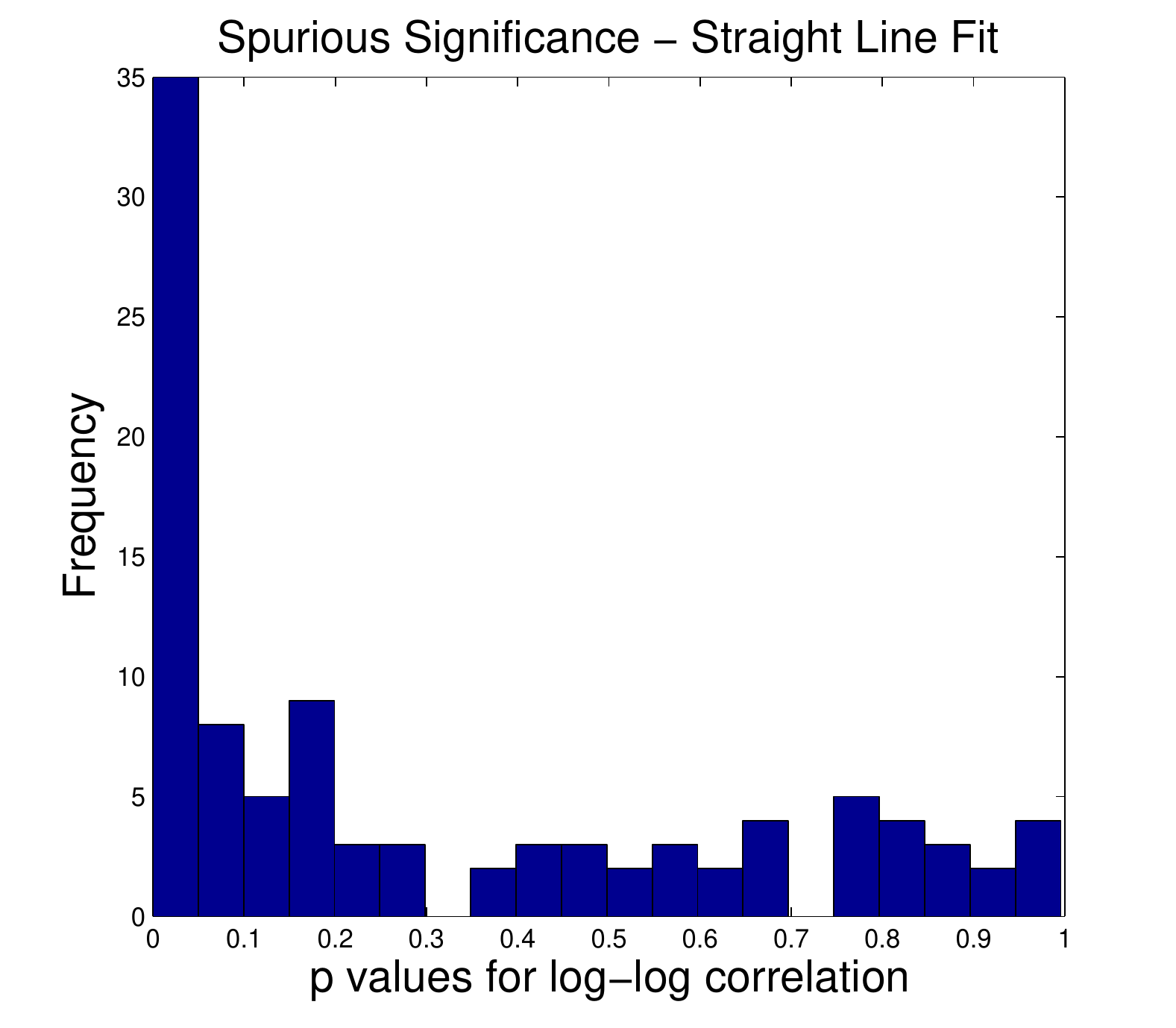}\end{array}$ %&
\end{center}
\caption{The problem of spurious linear relationships in the DCCA log-log plot; the figure displays in the left hand panel a case in which the window lengths $n_i$ lie between $10^1$ and $10^3$ ($H=0.7,G=0.8$) and log-linear behaviour is observed in the DCCA coefficients even though 
the underlying time-series are independent. 
Spurious linear-behaviour is common, as demonstrated by the right hand panel, where the frequency of rejection of the hypothesis of no correlation between the log-DCCA coefficients and the log-scale is displayed.
In approximately one-third of cases a linear relationship is detected on the basis of a $t$-test even though no underlying correlations are present in the time-series.
% Explain this more and put in a caption or whatever that this won't get passed by my method.
\label{fig:type_I_errors}}
\end{figure}

\subsubsection{The DCCA correlation coefficients and statistical testing}
\label{sec:DCCA_testing}
In accordance with the observation that the \emph{magnitude} of the $F_{DCCA}^2$ coefficients carry information relating to the strength of dependencies, Podobnik et al. study, in a later paper ~\cite{podobnik2011statistical}, the \emph{DCCA correlation coefficient}, initially introduced by \cite{zebende2011study}:
\begin{equation}
\rho_{DCCA}(n,X_1,X_2) = \frac{F^2_{X_1,X_2}(n)}{\sqrt{F^2_{X_1,X_1}(n)}\sqrt{F^2_{X_2,X_2}(n)}} \label{eq:rho}
\end{equation}

The coefficient is analogous to Pearson's correlation coefficient, in that the terms on the denominator are the detrended \emph{variances} which are utilized in DFA and the numerator DCCA coefficient is a detrended \emph{cross-covariance}; moreover, as for Pearson's coefficient, $\rho_{DCCA}$ may be shown to lie between $-1$ and $1$ \cite{podobnik2011statistical} and thus presents a promising starting point in testing for long-range interdependence.

However, testing with $\rho_{DCCA}$ is a very different problem to testing with a standard correlation coefficient: the distributional 
theory which applies to the standard coefficient does not apply here on account of the detrending.  In accordance with this fact Podobnik et al.~propose to construct a test for power-law cross-correlation by estimating the quantiles of $\rho_{DCCA}(n,X_1,X_2)$ using a FARIMA assumption and comprehensive computer simulation of time-series of the same length as the empirical time-series. In doing so they provide a guide to interpreting the significance of observing a particular value
of $\rho_{DCCA}(n,X_1,X_2)$ on a given data set by providing insight into the range of $\rho_{DCCA}$ values which may arise under the null hypothesis of independence due to small sample fluctuations.
See Figure~\ref{fig:rho_test}, which displays empirical DCCA correlation coefficients and the critical level $\rho_c$ for each coefficient for a power-law correlated time-series ($\rho = 0.3$- fractional Gaussian noise) and an uncorrelated case ($\rho = 0$): we see that for the uncorrelated case, all DCCA correlation coefficients lie between the 0.05 level critical boundaries, whereas, in the correlated case, all DCCA correlation coefficients lie above the upper boundary (denoting positive correlation).

This proposal does not consider, however the complications induced by the fact that one obtains \emph{several} $\rho_{DCCA}$ coefficients for every data set considered, depending on the number of scales considered.
The first difficulty this generates is that since we are faced with multiple DCCA correlation coefficients, simply checking whether any of the coefficients lie above the critical level is susceptible to multiple
testing errors; when many coefficients are measured, random fluctuations under the null hypothesis will more probably drive at least one empirical coefficient into the critical region. The standard technique for dealing with this issue is to perform a simple multiple testing-correction; one would then report long-range dependence when at least one coefficient is measured as significant after this correction. 

Testing the individual scales separately and then correcting, however, is at odds with the fact that long-range dependence
is a \emph{broadband} phenomenon; thus information relating to long-range temporal dependence should be visible \emph{across} time-scales, not simply in one scale displaying highly significant results. 
Moreover, since the effective sample size generating the coefficients at small scales is greater than at large scales, the coefficients at small scales will display significant readouts even under spurious and weak short-range dependence but long-range independence. Using a multiple testing-correction in such cases
will results in a fallacious rejection of the null-hypothesis of long-range independence. Thus since the coefficients at small scales are highly sensitive to short range cross-correlation properties, a multiple testing correction won't succeed  in controlling the Type I error rate in cases in which the short range properties of the true distribution differ from its long-range properties.
See, the right hand panel Figure~\ref{fig:rho_test} for a case in which 
the multiple testing correction leads us astray. 

\begin{figure}
\begin{center}
$\begin{array}{c c}
\includegraphics[width=40mm,clip=true,trim= 0mm 0mm 0mm 0mm]{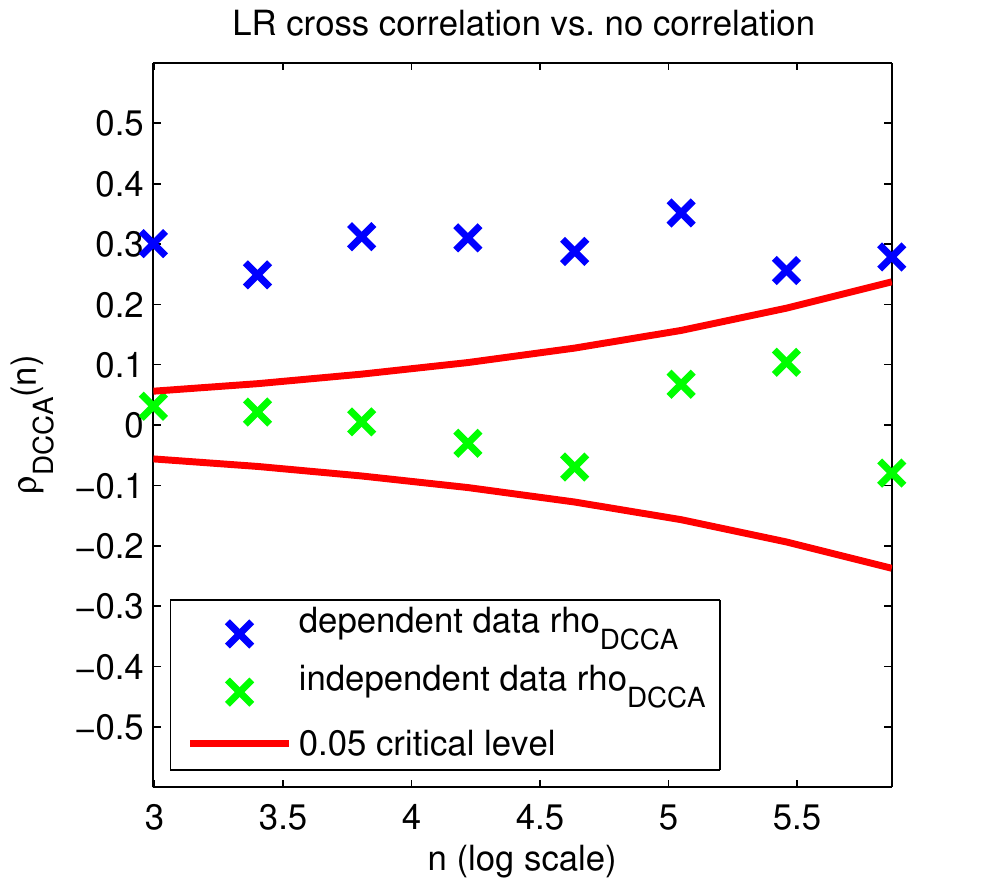} & 
\includegraphics[width=40mm,clip=true,trim= 0mm 0mm 0mm 0mm]{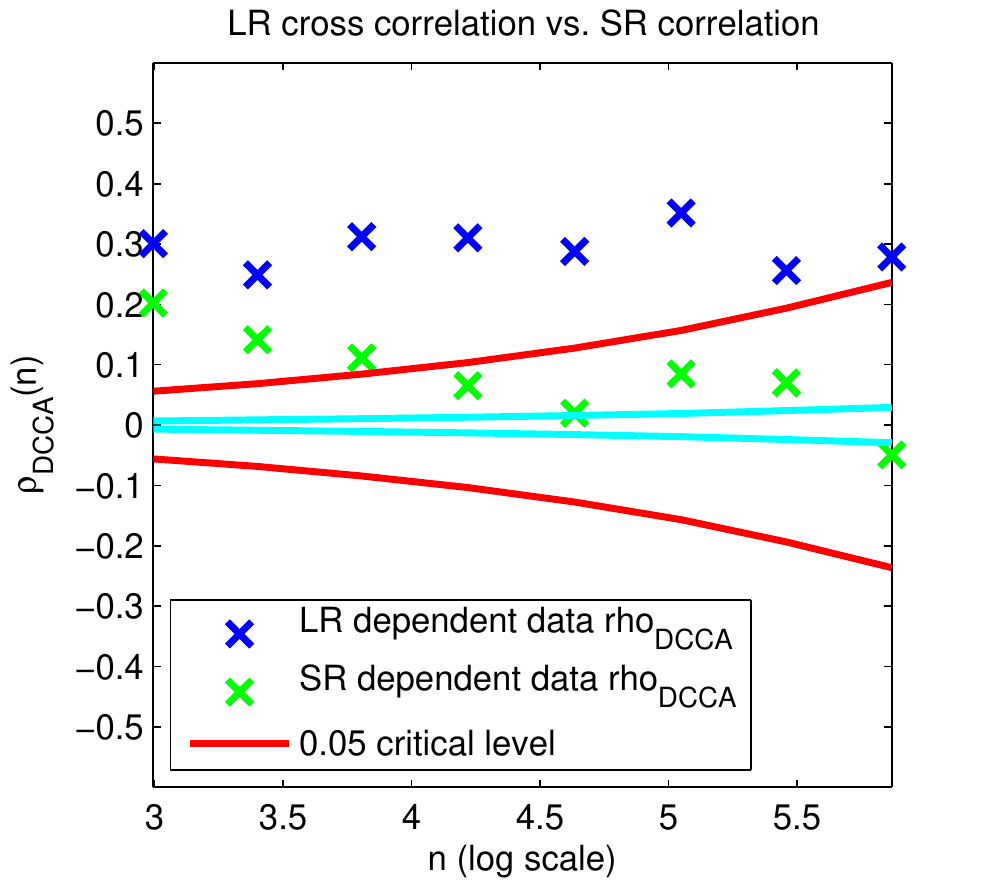}
\end{array}$
\end{center}
\caption{The effect of Spurious short-range correlations on the DCCA correlation coefficients; the figure illustrates that spurious rejection of the null may occur in the presence of short-range correlation. In particular in the left-hand panel we see that in the absence of 
correlations, all coefficients lie below the 0.05 quantile (red) of the $\rho_{DCCA}$ coefficients and when the time-series are correlated on a long-range basis all coefficients lie above the critical line (bivariate fractional Gaussian noise, $H=G=0.9$).
However when each time-series is LRTD but short range correlations exist between the time-series, in the case of \emph{no} long-range cross-correlation, in the right hand panel one sees that the  $\rho_{DCCA}$ coefficients lie above
the critical line at small scales and below at large scales.
 \label{fig:rho_test}}
\end{figure}

The blue points display the $\rho_{DCCA}$ values of two long-range dependent time-series; all points lie above the given critical threshhold. The green points, however display the $\rho_{DCCA}$ coefficients of a bivariate time-series simulated as a superposition of two independent long-range dependent fractional Gaussian noises and \emph{correlated} Gaussian white noises. Thus
the time-series are long-range \emph{independent} but short-range \emph{dependent}. Since we are interested in long-range properties, it is desirable that the testing-procedure should be made robust to these short-range correlation properties. In Figure~\ref{fig:rho_test} we see that the DCCA correlation coefficients (green) exceed the critical boundaries at small scales, due to short range dependence but not at large scales due to long-range independence.
This implies that a test based simply on performing a multiple testing-correction will lead to erroneous rejection of the null-hypothesis when the short-range component of the time-series contains correlations. 

\subsubsection{Proposed Solutions}

The alternative which we propose in this paper involves calculating 
the probability that \emph{all} DCCA correlation coefficients exceed a certain level \emph{simultaneously}. See the right hand panel of Figure~\ref{fig:rho_test} for an illustration of this procedure. Under the null hypothesis of long-range independence but with short range dependence, the DCCA correlation coefficients at small scales lie far above the significance level corrected for multiple comparisons (red). On the other hand, at higher scales, coefficients behave similarly to the coefficients of fractional Gaussian noise and thus the null hypothesis is not rejected using the proposed method based on requiring that \emph{all} coefficients lie above the upper turquoise boundary. 
Thus, critically, the procedure we propose requires using the \emph{full} distribution of the $\rho_{DCCA}$ coefficients across scales.

An immediate question which arises is: what if the opposite situation applies to that illustrated by the right hand panel of Figure~\ref{fig:rho_test}? I.e.: the pair of time-series are long-range dependent but no correlation is 
present in the short range component. This will imply that the $\rho_{DCCA}$ coefficients will point to dependence at large scales but not at small scales. Thus there may be no critical line above which all coefficients probably lie in the case of
dependence. We show in Section~\ref{sec:method}, that the information in the joint distribution of the $\rho_{DCCA}$ coefficients may be used to cater to this case while remaining robust to spurious short-range correlations. 

Once the decision has been made to use the joint distribution of the $\rho_{DCCA}$ coefficients then we need to ask: how should the distribution of the $\rho_{DCCA}$ coefficients be calculated in practice? Derivation of a closed form solution of the exact probability density function even for a parametric family of 
Gaussian process time-series presents a formidable task of analysis due to the high dimensionality of the probability space and the detrending operation which is implicit in calculation of the coefficients.
In the parametric setting, Podobik et al.~\cite{podobnik2011statistical} resort to estimation via simulated surrogate data for estimation of the distribution of a single $\rho_{DCCA}$ coefficient. On the basis of these calculations, the authors conjecture 
that the coefficients are asymptotically normal; however, since no expression is given for the parameters of this normal limit, nor an analysis of the convergence rate, simulated data is still required for its calculation.
This approach may be in principle extended to estimation of the joint distribution of the $\rho_{DCCA}$ coefficients, however, for long time-series the ansatz is computationally inefficient.  For example for estimation of 10,000 time-series, when $N=80,000$, which is approximately the 
number required in order to estimate the type I error distribution in a stable manner, we estimate on a standard desktop system that approximately $28$ hours of computation is necessary.
On the other hand, in cases when the ansatz \emph{is} efficient, empirical data is often insufficient to reliably estimate long-range properties of the data; thus cases in which the time-series considered are short are of less practical 
interest.

Quite apart from these computational difficulties, since we specify the model of interest in a semi-parametric fashion as per Equations~\eqref{eq:LRTC} and~\eqref{eq:LRCC} there is no guarantee that simulation from a parametric model will yield accurate results in approximating the true distribution of the $\rho_{DCCA}$ coefficients.
This approach may only be regarded as valid if we may relate the parametric class chosen on the one hand to behaviour across the 
semi-parametric class on the other; typically this will be effected by means of asymptotic analysis \cite{DFA_asymp,moulines2007spectral}. However, given precise asymptotic estimates across the semi-parametric class, the simulation technique
is no longer necessary and the estimates may be used directly.
We show in this paper in theory and simulation that there exists a central limit to which the $\rho_{DCCA}$ coefficients converge which is universal with respect to a broad
delineation of the semi-parametric class of interest of Equations~\eqref{eq:LRTC} and~\eqref{eq:LRCC}. In addition we show that this limit yields an efficient approach to estimation of the distribution of the $\rho_{DCCA}$ coefficients under the null hypothesis.

\section{Method}
\label{sec:method}
The technical details of our method are catalogued in Algorithm~\ref{alg:test}. Here we provide an informal description and explanation of the major steps involved. 

Our method consists of two steps: firstly we use theoretical results to calculate an approximation to the distribution of the $\rho_{DCCA}$ coefficients under the null hypothesis 
of independence.
Secondly we construct a test-statistic and rejection region which incorporate on the one hand the fact that the Hurst exponents of the two time-series
are known only approximately through estimation and
on the other hand assumptions as to the orders of magnitude over which scaling is expected under the alternative hypothesis of power-law cross-correlation.

\subsection{The asymptotic distribution of the DCCA correlation coefficients}

We showed in the previous section that it is desirable to use the \emph{joint} distribution of the $\rho_{DCCA}$ coefficients in order to judge long-range dependence.
The aim of this section is to describe the theory which allows us to estimate the probability that $\rho_{DCCA}(n_1)>a_1 \& \dots \& \rho_{DCCA}(n_r)>a_r$ for some choice of $a_1, \dots, a_r$ under the null-hypothesis of long-range independence and the steps required in practice for execution of this estimation.

To recap the first step in any DCCA analysis is to form:
\begin{equation}
X_j(t) = \sum_{i=1}^t Y_j(i) \label{eq:subsample}
\end{equation}

Because $X_j(t)$ involves a sum over samples, as $t$ grows, the distribution of $X_j(t)$ may be shown, under certain assumptions, to converge at low frequencies to a specific class of Gaussian process, namely \emph{fractional Brownian motion} (see \cite{taqqu1975weak} and Proposition~\ref{prop:Taqqu}). This may moreover be shown 
to be true for the \emph{joint} distribution of $X_1(t)$ and $X_2(t)$, which may be shown to converge to a bivariate fractional Brownian motion \cite{marinucci2000weak}.  An implication of this convergence which may be shown in simulations and in theory (Proposition~\ref{prop:non_gauss}) is that the distribution of the DFA and DCCA coefficients of the two time-series converge, at large scales, to the distribution of the DFA and DCCA coefficients of a bivariate fractional Brownian motion (resp. bivariate fractional Gaussian noise). Thus if we succeed in calculating the distribution of the DFA and DCCA coefficients of a bivariate fractional Brownian motion, then these will approximate the distribution of the coefficients of a broad
class of long-range dependent time-series.

Thus the next step is to consider the distribution of the DFA and DCCA coefficients specifically for fractional Gaussian noise time-series. The asymptotics of the DFA coefficients in this context have already been studied by \cite{DFA_asymp}. 
In particular, the authors of \cite{DFA_asymp} show that as $[N/n_r] \rightarrow \infty$ then the DFA coefficients of a fractional Brownian motion are normally distributed. 
We extend
this work by showing that this limit generalizes to the DCCA coefficients under the null hypothesis of independence (Proposition~\ref{prop:semi}). Thus as $[N/n_r] \rightarrow \infty$ the DCCA coefficients are normally distributed, assuming independence.
These two asymptotic analyses imply that the $\rho_{DCCA}$ coefficients of a bivariate fractional Gaussian noise are \emph{also} asymptotically normal (Proposition~\ref{prop:CLT}). The covariance matrix of the normal limit may be exactly calculated using the covariance and means
of the DFA and DCCA normal limits and these moments may themselves be approximated in a tractable and accurate manner. The convergence rate is the following:

\begin{equation}
\left(\sqrt{N/n_1})\rho_{DCCA}(n_1),\dots, \sqrt{N/n_r}\rho_{DCCA}(n_r)\right) \rightarrow \mathcal{N}(0, \Theta(H,G)) \label{eq:central_limit}
\end{equation}

Thus we obtain an approximation to the distribution of the $\rho_{DCCA}$ coefficients of a bivariate fractional Gaussian noise; in virtue of the fact that the DCCA coefficients tend in distribution to the DCCA coefficients of fractional
Gaussian noise at large scales, this central limit is thus also an approximation to the distribution of the $\rho_{DCCA}$ coefficients
of $Y_1$ and $Y_2$ across the semi-parametric class for large scales (Proposition~\ref{prop:glue}). The accuracy of this approximation may be shown in simulations to be sufficient to be \emph{useful} in practice (see Sections~\ref{sec:check_fGn},~\ref{sec:check_non_gauss} and~\ref{sec:check_spurious}).
So given that we have access to $H$ and $G$,  we may calculate the probability under the null hypothesis that $\rho_{DCCA}(n_1)>a_1 \& \dots \& \rho_{DCCA}(n_r)>a_r$.

\subsection{Upper bounding the quantiles of the distribution of the DCCA correlation coefficients}

We now deal with the question: what if $H$ and $G$ are known only approximately through estimation? The solution we provide is to calculate a covariance matrix $C$ so that if $0.5\leq H,G<1$ then asymptotically:

\footnotesize
\begin{eqnarray}
&& \text{Pr}(\rho_{DCCA}(n_1)>a_1 \& \dots \& \rho_{DCCA}(n_r)>a_r) \\
&<& \text{Pr}(x_1>a_1 \& \dots \& x_r>a_r | (x_1,\dots,x_r) \sim \mathcal{N}(0,C)) \label{eq:upper_bound}
\label{eq:upper_bound}
\end{eqnarray}
\normalsize

$C$ is calculated by choosing the correlations between dimensions to be identical to the maximum correlations between $\rho_{DCCA}$ coefficients considering $H$ and $G$ in this range and the diagonal entries are chosen so 
that these are equal to the maximum variances for this range. This choice then yields Equation~\eqref{eq:upper_bound} in the asymptotic regime.
Given that more exact information as to the magnitude of the true $H$ and $G$ is available, $C$ may be recalculated to yield a more powerful test.

\subsection{The test-statistic and rejection region}

The test-statistic is defined as follows. If $H$ and $G$ are known exactly then $C$ is given by the limiting covariance of the central limit approximation of Equation~\eqref{eq:central_limit}. Otherwise, $C$ is calculated
according to Equation~\eqref{eq:upper_bound}. Then, for a fixed $\kappa$:

\small
\begin{multline}
\mathcal{T}_\kappa =  \text{max}_{\lambda>0}( \exists n_{r_1} , \dots, n_{r_\kappa}| \\ \rho_{DCCA}(n_{r_1}) > \lambda \sqrt(C_{r_1,r_1}),\dots, \rho_{DCCA}(n_{r_\kappa}) > \lambda \sqrt(C_{r_{\kappa},r_{\kappa}}) \\
\text{ or } \\ -\rho_{DCCA}(n_{r_1}) > \lambda \sqrt(C_{r_1,r_1}),\dots, -\rho_{DCCA}(n_{r_\kappa}) > \lambda \sqrt(C_{r_{\kappa},r_{\kappa}}))
\end{multline}
\normalsize

The parameter $\kappa$ is included because we wish to be able to reject the null-hypothesis of no power-law cross-correlation even if a few of the $\rho_{DCCA}$ coefficients display
no correlation.
Thus the test-statistic incorporates two important aspects: firstly the joint distribution is used ensuring robustness to short range correlations, provided $\kappa$ is close to $r$. Secondly, by choosing $\kappa$ to be close to $r$ but not equal to $r$, 
we include the possibility that, certain scales are either contaminated by confounding noise and thus display \emph{no} correlation or that long-range cross-correlation is only visible over the largest scales; see the discussion and conclusion section
for more on these possibilities (e.g. the delta rhythm in fMRI research and the slow onset of power-law cross-correlation in EEG amplitude time series).

The final step in defining the test involves choosing $\kappa$ and choosing a rejection region for $\mathcal{T}_\kappa$ for a given level $p_{level}$, i.e. we require $\phi$ s.t.:

\begin{equation}
\text{Pr}(|\mathcal{T}_{\kappa}| > \phi\text{ } | \text{ }\mathcal{H}_0) \leq p_{level}
\end{equation}

This may be achieved as follows: if $\kappa = r$ the solution is simple, one uses simply the distribution of $\mathcal{N}(0,C)$. If, however, $\kappa< r$, then we have the following inequality, 
assuming that $n_1,\dots,n_r$ are evenly log-spaced then asymptotically (for large $n_i$ and $N/n_i$):

\small
\begin{multline}
\text{Pr}\left(\rho_{DCCA}(n_{r_1}) > \lambda \sqrt(C_{r_1,r_1}),\dots, \rho_{DCCA}(n_{r_\kappa}) > \lambda \sqrt(C_{r_{\kappa},r_{\kappa}} \right)
\\ \leq \text{Pr}\left(\rho_{DCCA}(n_{1}) > \lambda \sqrt(C_{1,1}),\dots, \rho_{DCCA}(n_{\kappa}) > \lambda \sqrt(C_{{\kappa},{\kappa}} \right)
\end{multline}
\normalsize
Thus: 
\small
\begin{multline*}
\text{Pr}(|\mathcal{T}_{\kappa}| > \phi \text{ } | \text{ } \mathcal{H}_0 ) \\
< 2 \begin{pmatrix}r \\ \kappa \end{pmatrix} \text{ } \text{Pr}\left(\rho_{DCCA}(n_{1}) > \phi \sqrt(C_{1,1}),\dots, \rho_{DCCA}(n_{\kappa}) > \phi \sqrt(C_{{\kappa},{\kappa}}) \right)
\end{multline*}
\normalsize
%
%Although there is no unique way to choose $a_i$, we choose $a_1,\dots,a_r$ so that the marginal probabilities $\text{Pr}(x_i>a_i \sim \mathcal{N}(0,C))$ are identical.
%This choice is justified in that under $\mathcal{N}(0,C)$, each coefficient is given equal significance.
%
%The final step necessary is to deal with the possibility that certain coefficients display \emph{no} information relating to long-range dependence, although the hypothesis of LRD holds. To incorporate this possibility we choose a percentage of coefficients which are required to display LRTCC and allow 
%the remaining coefficients to default. We then perform a multiple testing correction for the number of different ways this may happen. In the most simple scheme this correction may be chosen as a Bonferroni correction. A more sophisticated 
%scheme however incorporates the information contained in the joint distribution of the $\rho_{DCCA}$ coefficients. So long as the number of coefficients which are allowed to default is low, we may successfully control the Type I error rate
%and remain robust to spurious short-range correlations.
%
%Thus, in summary, since the aggregates of Equation~\eqref{eq:subsample} converge at large scales to bivariate fractional Brownian motion and since we assume that $H>1/2$ and $G\geq 1/2$, then rejection of of the null according to the proposed procedure implies
%long-range cross-correlation according to Equation~\eqref{eq:LRCC}, since, for a bivariate fractional Gaussian noise the cross-correlations scale as $s^{H+G-1}$.

The test procedure is implemented in software (described in Section~\ref{sec:software} of the Appendix) which 
is available for download\footnote{The software implementing the test in MATLAB is available at \url{http://www.user.tu-berlin.de/blythed/DCCA_matlab}}.

A final question which may arise is the following: why does rejection of the test imply long-range cross-correlation and not simply cross-correlation? The answer is that it is possible to show that the bivariate time-series $(X_1(t),X_2(t))^\top$ has asymptotic properties identical to those of a bivariate fractional Brownian motion sharing the Hurst exponents of each of the components \cite{marinucci2000weak}. This implies that the exponent of cross-correlation (Equation~\eqref{eq:LRCC}) is long-range if one of $H$ and $G$ is greater than $1/2$, since if fractional Gaussian noises are correlated then they are power-law correlated provided $H,G \geq 1/2$ and one of $H$ and $G$ is $>1/2$ \cite{SimulatorVfbm}.
This implies that if the
time-series are short-range cross correlated but not long-range cross-correlated, then the approximating fBm will possess independent components and for large enough scales the hypothesis of no long-range cross correlation will not be rejected.

\section{Simulations}

In this section we present simulations which check the proposed method for correctness, power and efficiency. The first two simulations (Sections~\ref{sec:check_fGn},~\ref{sec:check_non_gauss})
check the accuracy of the approximation provided by the central limit theorem (Equation~\eqref{eq:central_limit}); the third (Section~\ref{sec:check_spurious}) checks the robustness
of the test-statistic to spurious short-range correlations; the fourth (Section~\ref{sec:check_ub}) checks the upper bound of Equation~\eqref{eq:upper_bound}; the fifth checks the power of the test as a function of the correlation 
between the components of a bivariate fractional Gaussian noise; the final Simulation (Section~\ref{sec:check_speed}) checks the computational efficiency of the proposed test.
%\begin{enumerate}
%\item The first simulation tests the quality of the normal approximation to the $\rho_{DCCA}$ coefficients, posited by Proposition~\ref{prop:CLT}.
%In Figure~\ref{fig:gaussian_approximation} the approximation of the distribution of the $\rho_{DCCA}(n,X_1,X_2)$ values by a Gaussian, in the null case of no correlation, is displayed when computing on fractional Gaussian noise input time-series. A high level of agreement into the tails is visible.
%% Put more info. on the simulation: up to half a page. Etc.
%\item The second simulation checks convergence of the covariance function of the $\rho_{DCCA}$ coefficients. 
%In Figure~\ref{fig:correlation_convergence}, the convergence of the correlation between $\rho_{DCCA}(n,X_1,X_2)$ and $\rho_{DCCA}(m,X_1,X_2)$ for large $m$, $n$ and $N$ under the null hypothesis, is demonstrated. The figures demonstrate that approximations for values of $N$ as low as $5000$, when $n \leq m = 100$ are sufficient to obtain stable values. This implies that the distribution of the test statistics may be approximated efficiently using the formulae for correlations between scales; the decisive quantity in calculating this correlation is $n/m$ -- the contributions due to $[N/n]$ and $[N/m]$ decay quickly. 
%\item 

\subsection{Checking the distribution of the test-statistics for fractional Gaussian noise}
\label{sec:check_fGn}
Since our test is based on approximating the distribution of the test-statistics under the null by means of the asymptotics for fractional Gaussian noise, we first check 
the approximation when the time-series are fractional Gaussian noises.  Thus in the case in which $H$ and $G$ are known exactly and in the case in which $\kappa = r$, the 
central limit approximation should become exact for fractional Gaussian noise in the limit of data points and window sizes (see the Appendix).
In each case we use the fractional Gaussian noise generation method of \cite{SimulatorVfbm} to generate $100,000$ bivariate fractional Gaussian noises with $H=0.7, G=0.8$.
We first check the approximation in the body of the support of the test-statistic for two values of $r$ (number of time-scales on which the $\rho_{DCCA}$ coefficient is calculated). 
The results of these two experiments are displayed in the first row of Figure~\ref{fig:bootstrp_confidence_theory_practice}; comparing the top-left panel with the top-right panel, we see that for larger window sizes, the approximation 
is of higher quality but in both cases an approximation to within 0.005 of the target probability is achieved. The increase in accuracy for higher window sizes relates not to convergence to normality but 
to the approximation of the covariance matrix via tabulation, which is more accurate for larger window sizes.

We then investigate two cases in the tails of the distribution of the test-statistic, which are of more relevance in testing, since a typical test requires control of the Type I error rate to values of at most 0.05. 
Thus on the second row of Figure ~\ref{fig:bootstrp_confidence_theory_practice} we see displayed two cases in which the target quantile is of probability approximately 0.03: the first case involves fewer window sizes than the second case.
Both results of these latter two simulations show that the central limit approximation converges more slowly at the tails of the distribution of the test-statistic than in the
body of the support, although agreement is more than adequate to guarantee usefulness in practice. 
In all cases the results are robust to the exact choice of parameter values, $G$ and $H$, and the quality of the approximations made increase in $[N/n_i]$ (where the $n_i$ are the window sizes). Because we use a detrending operation, the results are robust to polynomial trends 
of degree $d$ in $X_1(t)$ and $X_2(t)$. All simulations cited here set the detrending degree to $d=1$, however
additional simulations (not presented here) confirm that the results generalize to higher order detrending.

\begin{figure}[h]
\begin{center}
$\begin{array}{c c }
\includegraphics[width=40mm]{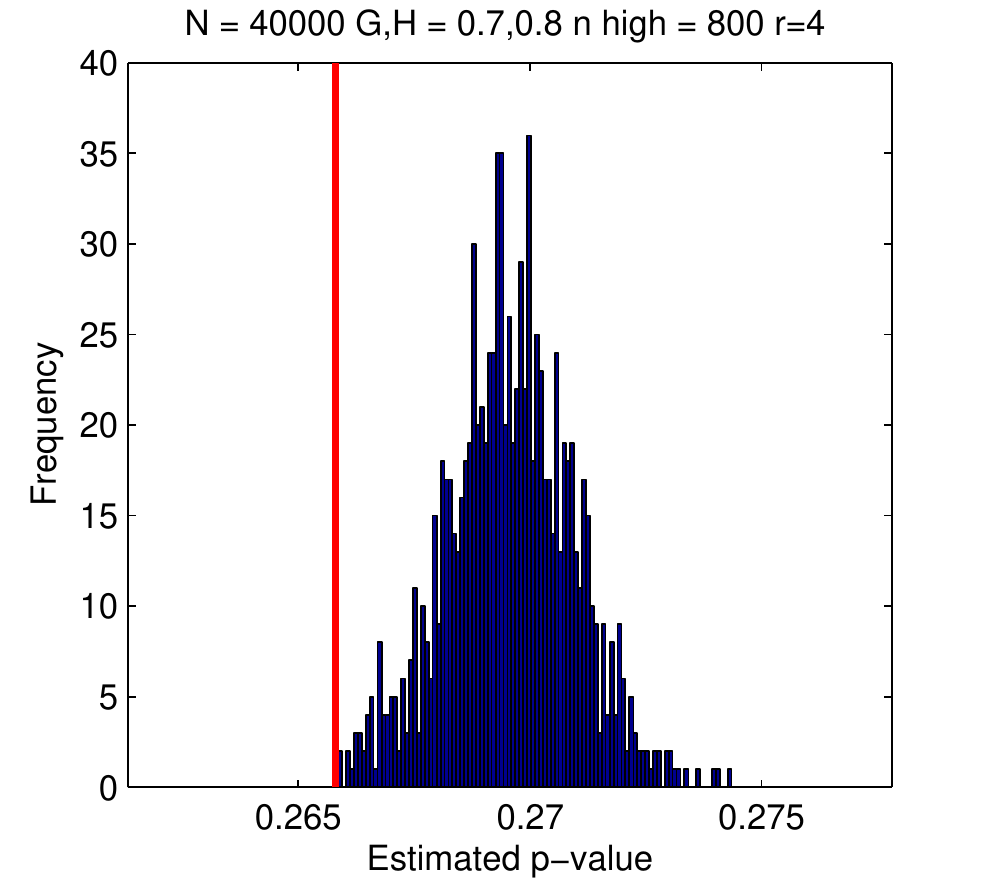} &
\includegraphics[width=40mm]{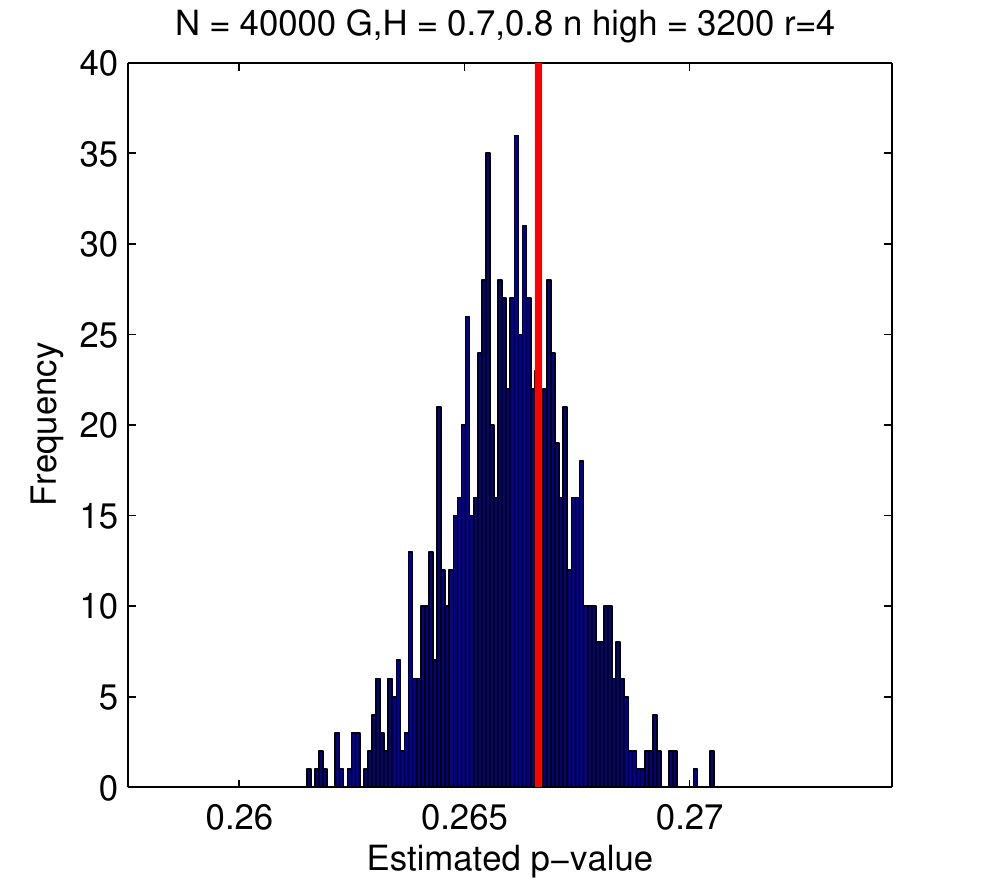}  \\
\includegraphics[width=40mm]{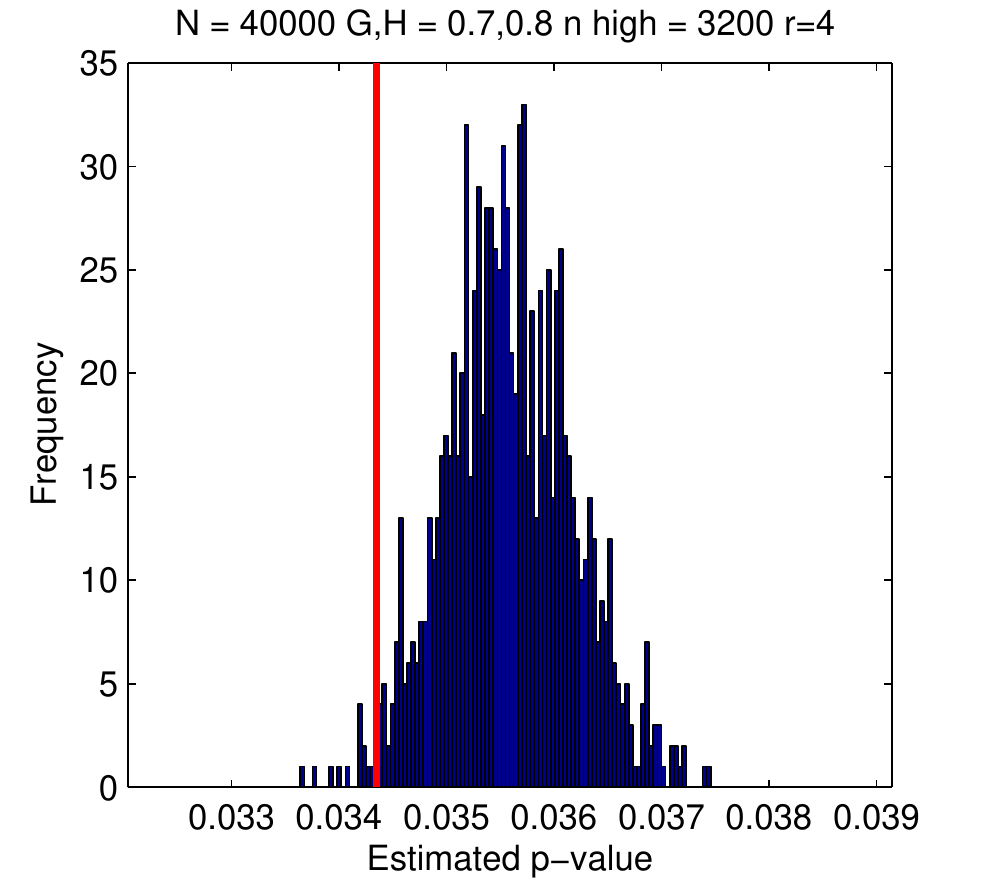} &
 \includegraphics[width=40mm]{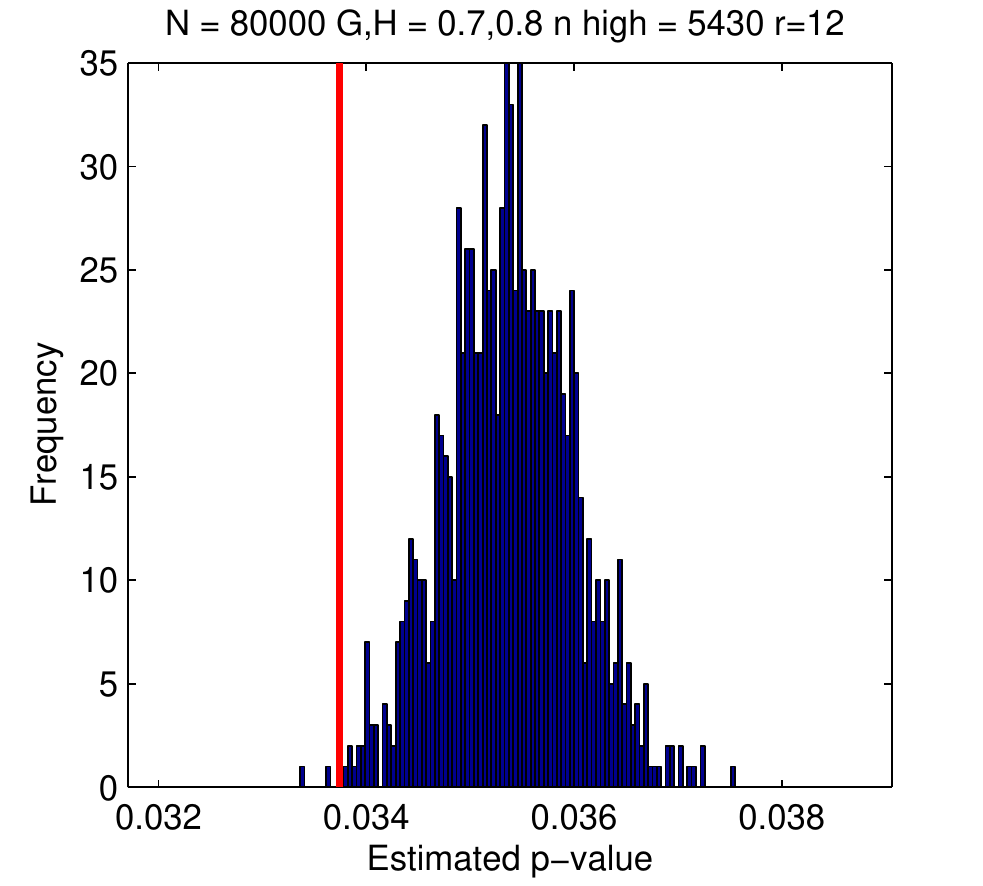}
\end{array}$
\end{center}
\caption{Accuracy of the central limit approximation to the distribution of DCCA correlation coefficients on simulated fractional Gaussian noise time-series; the figure displays the agreement of our theoretically calculated $p$-values with quantiles calculated from simulated fractional Gaussian noise.  Parameter values are given above the figures. In each case, histograms are displayed (given by bootstrapping) to display the variability in the simulated $p$-values, where the red-lines display the theoretically calculated values. We observe a high level of agreement in all cases with marginally slower convergence in the tails of the distribution (bottom middle), where nevertheless, agreement 
lies within $\pm0.002$ when $p\approx0.03$.    \label{fig:bootstrp_confidence_theory_practice}}
\end{figure}

\subsection{Checking the distribution of the test-statistic for non-Gaussian processes}
\label{sec:check_non_gauss}

The second simulation checks the quality of the approximation (Equation~\eqref{eq:central_limit}) for non-Gaussian data and compares the results to the results for Gaussian data. In order to generate an appropriate non-Gaussian time-series we use the result of the paper \cite{blum1994simple};
here it is shown that a non-Gaussian fractional noise may be generated by appropriately filtering a non-Gaussian white noise. Accordingly we generate a super-Gaussian white noise ($x_{nonGaussian} = \text{sign}(x_{Gaussian})\times|x_{Gaussian}|^\phi$)
for $\phi=3$ and filter according to the procedure proposed by \cite{blum1994simple}. The resulting time-series is thus a linear time-series sharing the second-order statistics of fractional-Gaussian noise but differing in its higher order-statistics (see the right
panel of Figure for an illustration of the data).

The 0.05 level calculated on 10,000 samples of this non-Gaussian fractional noise with 
$N=5000,10000,20000,40000$ and $G=0.7,H=0.8$ is compared with the 0.05 level calculated on simulated fractional Gaussian noises and the level generated by the proposed method. 
The results displayed in Figure~\ref{fig:null_test_non_gaussian} show that the accuracy of the theoretical approximation is similar for non-Gaussian fractional noise and Gaussian fractional noise. This may be related
to the fact that the second-order statistics of both processes are similar, and, under the null-hypothesis the covariance matrix which we calculate for the $\rho_{DCCA}$ coefficients
depends only on the second-order statistics of the two processes considered; see Equation~\eqref{eq:second_order_stats} of the Appendix.

\begin{figure}
\begin{center}
$\begin{array}{c c}
\includegraphics[width=40mm,clip=true,trim= 5mm 0mm 2mm 0mm]{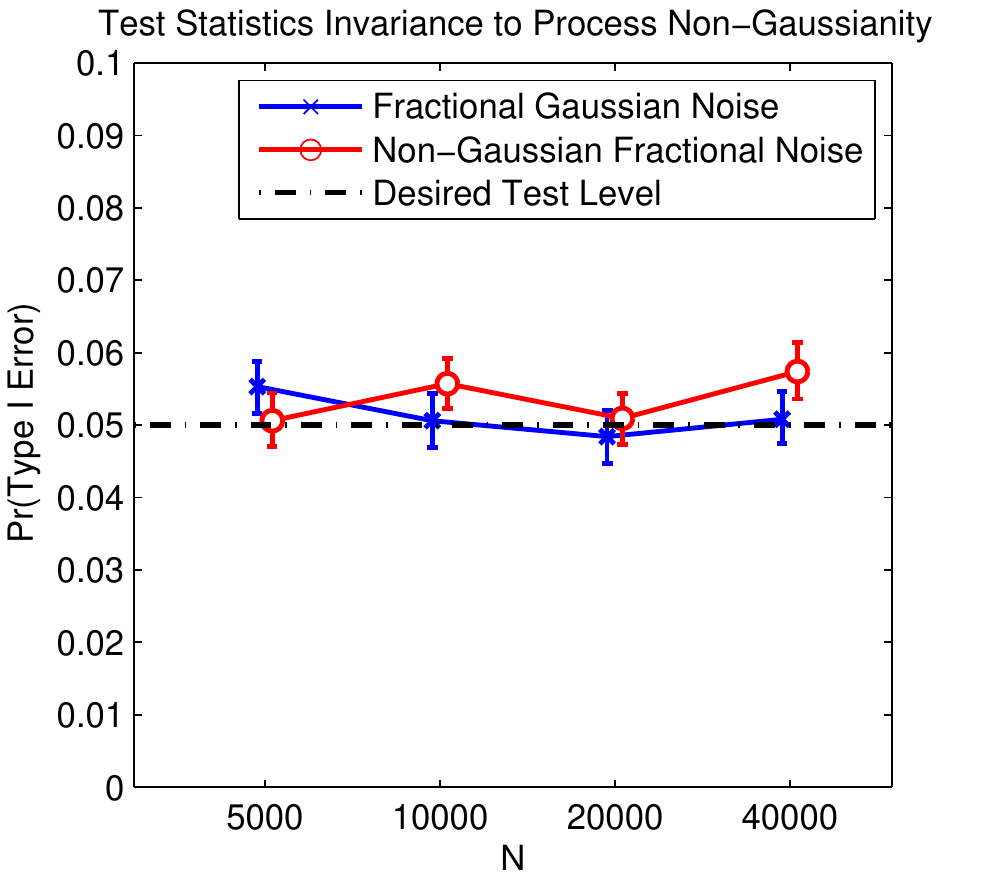} & \includegraphics[width=40mm,clip=true,trim= 5mm 0mm 2mm 0mm]{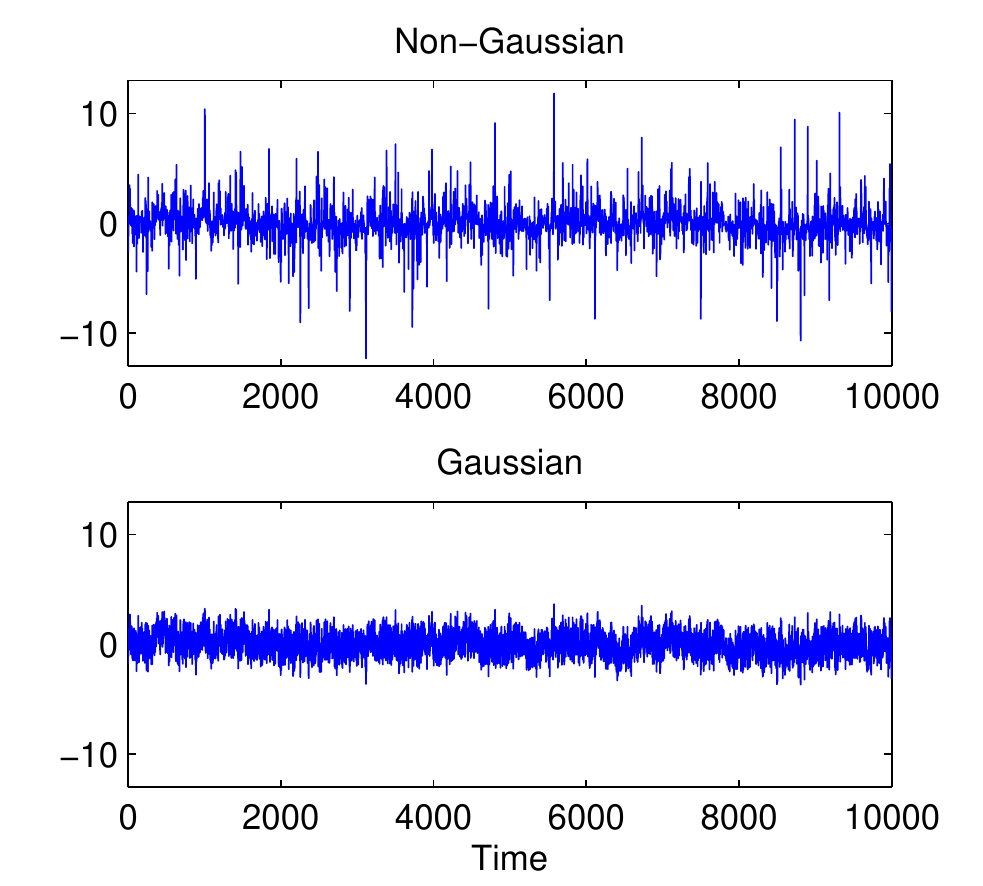}
\end{array}$
\end{center}
\caption{Invariance of the distribution of the test statistic to non-Gaussianity. In the right hand panel the distributions of the two respective time-series tested are displayed. In the 
lower right panel an empirical time-series of the Gaussian model is displayed whilst in the lower panel an empirical time-series of the non-Gaussian model is displayed. The non-Gaussian density thus has fatter tails than the Gaussian.
On the other hand the left hand panel compares the empirical distribution of the $\rho_{DCCA}$ coefficients on these models in comparison to the theoretical prediction (calculated on 10,000 draws from each model), for time-series with  $N = 5000,10000,20000$ and $40000$, $H,G=0.7,0.8$. The results show that the theoretical prediction yields comparable results for both Gaussian and non-Gaussian models.
\label{fig:null_test_non_gaussian}}
\end{figure}

\subsection{Checking the distribution of the test-statistic for Gaussian time-series with a correlated short-range component}
\label{sec:check_spurious}

The third simulation checks the robustness of control over the type I error rate for the proposed method in the presence of a spurious short-range correlated component and compares against a multiple testing-correction. Although
choosing large enough scales reduces the effect of these short-range correlations on the distribution 
of the $\rho_{DCCA}$ coefficients, exactly at what magnitude the smallest scale should be chosen in practice is unclear. In this simulation we investigate a case in 
which the short-range correlations in the time-series exert influence over the distribution of the $\rho_{DCCA}$ coefficients at small scales and evaluate the 
on the distribution of our test-statistic; thus although for increasingly larger windows sizes Propositions~\ref{prop:non_gauss} and~\ref{prop:glue} guarantee
that the central limit theorem provides a more accurate approximation, in the finite window size regime the approximation is coarser. We compare the probability of Type I error
for the test proposed above using the exact asymptotics of $\mathcal{T}_r$ and the multiple testing procedure discussed in Section~\ref{sec:DCCA_testing} with the test-level set for both procedures to 0.05.
(Thus the multiple testing procedure uses the \emph{univariate} asymptotics of the DCCA coefficients and reports LRCC if at least one of the coefficients delivers a significant readout after
a Bonferroni multiple comparisons correction.)
In particular, we 
simulate 10,000 bivariate time-series which are generated as a linear superposition of a bivariate fractional Gaussian noise with $H=G=0.9$ and a correlated
Gaussian White Noise, high pass-filtered to the frequencies above 0.45 times its sampling frequency. This simulation is repeated for 4 log-spaced choices of $N = 5000,10000,20000,40000$
and the probability of rejection at the 0.05 null hypothesis level, calculated using the proposed method compared with the probability of  rejection at the 0.05 null hypothesis using
the multiple testing correction method.

The results are displayed in Figure~\ref{fig:slowly_varying} and show that while the proposed method slightly underestimates the Type I error rate at the 0.05 level ($\approx 0.06$), the multiple testing correction 
grossly underestimates the Type I error rate. Moreover, while the underestimation becomes all the more drastic with increasing $N$ in the case of the multiple testing correction, the
proposed method continues to estimate stably. The reason for this difference is that for the proposed method, the $\rho_{DCCA}$ coefficients at the lowest scales can never be solely
responsible for a rejection of the null-hypothesis: \emph{all} (or most if $\kappa \neq r$) coefficients must display information regarding correlation. On the other hand, since the short-range correlation measured 
in the $\rho_{DCCA}$ coefficients at small scales is measured as significant in these small-scales more often when $N$ is larger (larger sample size), the underestimation of the type I error 
rate grows in $N$ for the multiple testing procedure.

\begin{figure}
\begin{center}
$\begin{array}{c c}
\includegraphics[width=40mm,clip=true,trim= 0mm 0mm 5mm 0mm]{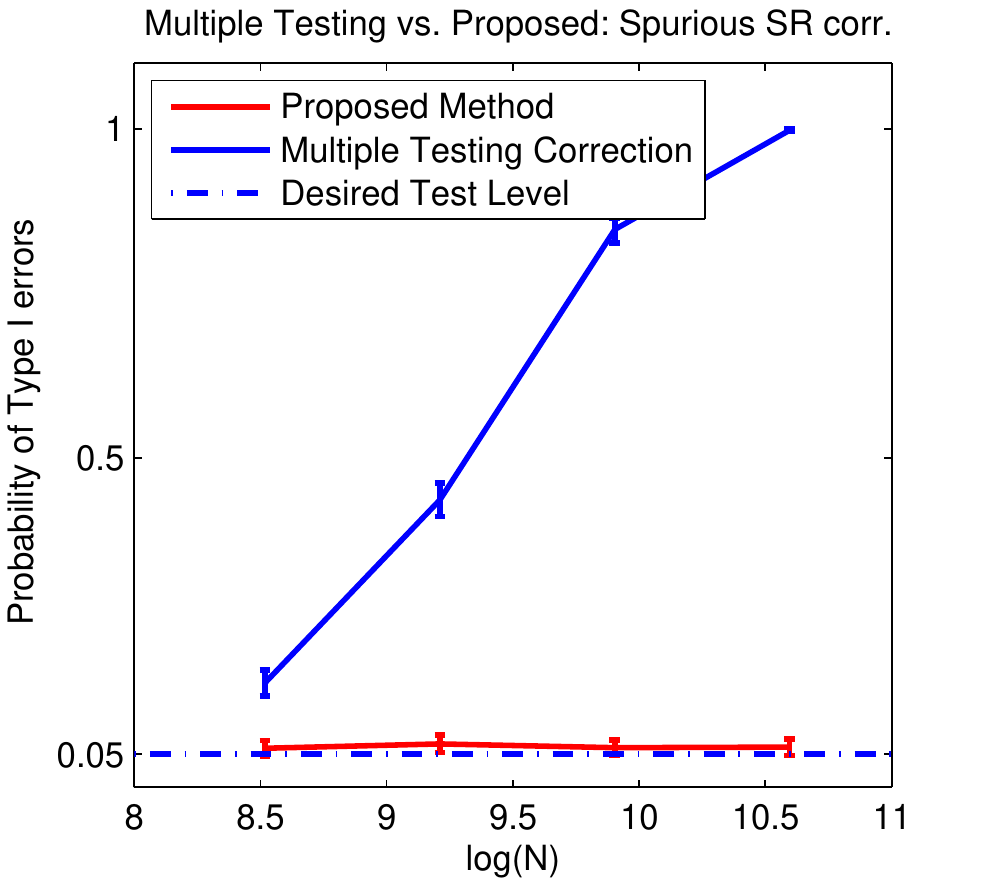} & \includegraphics[width=40mm,clip=true,trim= 0mm 0mm 5mm 0mm]{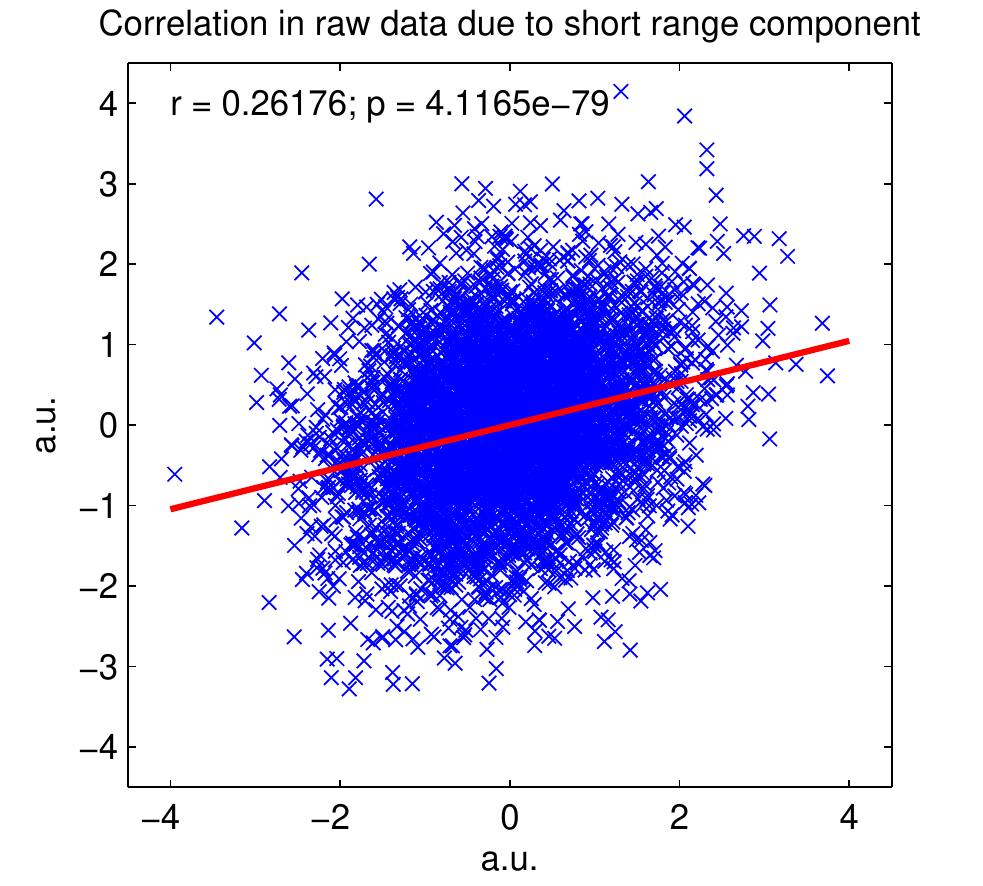}
\end{array}$
\end{center}
\caption{Robustness of the proposed method to short-range correlations; the left-hand panel displays the the probability of rejection of the null-hypothesis using 
a multiple testing correction (blue) of the quantiles of the distribution of the $\rho_{DCCA}$ coefficient on fractional Gaussian noise in comparison to the probability for rejection using the proposed test-statistic.
The proposed test-statistic thus displays a degree of robustness to this rejection not shared by the multiple testing correction. \label{fig:slowly_varying}}
\end{figure}

\subsection{Checking the upper bound for the cases in which the Hurst exponents are estimated}
\label{sec:check_ub}
The fourth simulation checks that the bound we present in Equation~\eqref{eq:upper_bound} may be used to correctly upper bound the probability of a Type I error when $H$ and $G$
are unknown or known only approximately through estimation. To this end we calculate the rejection boundary according to the proposed method when $H$ and $G$ are
known, for $H,G$ in the range $[1/2,1)$, using the central limit theorem and calculation of the limiting covariance matrix and the rejection boundary using the worst case covariance matrix.
The rejection boundaries when $H$ and $G$ are known are plotted in Figure~\ref{fig:check_upper_bound} in blue and the rejection boundary using the worst case upper bound
is plotted in red. The results confirm that the worst case boundary correctly lies above the boundaries when $H$ and $G$ are known, for all values of $H$ and $G$.

\begin{figure}
\begin{center}
\includegraphics[width=60mm]{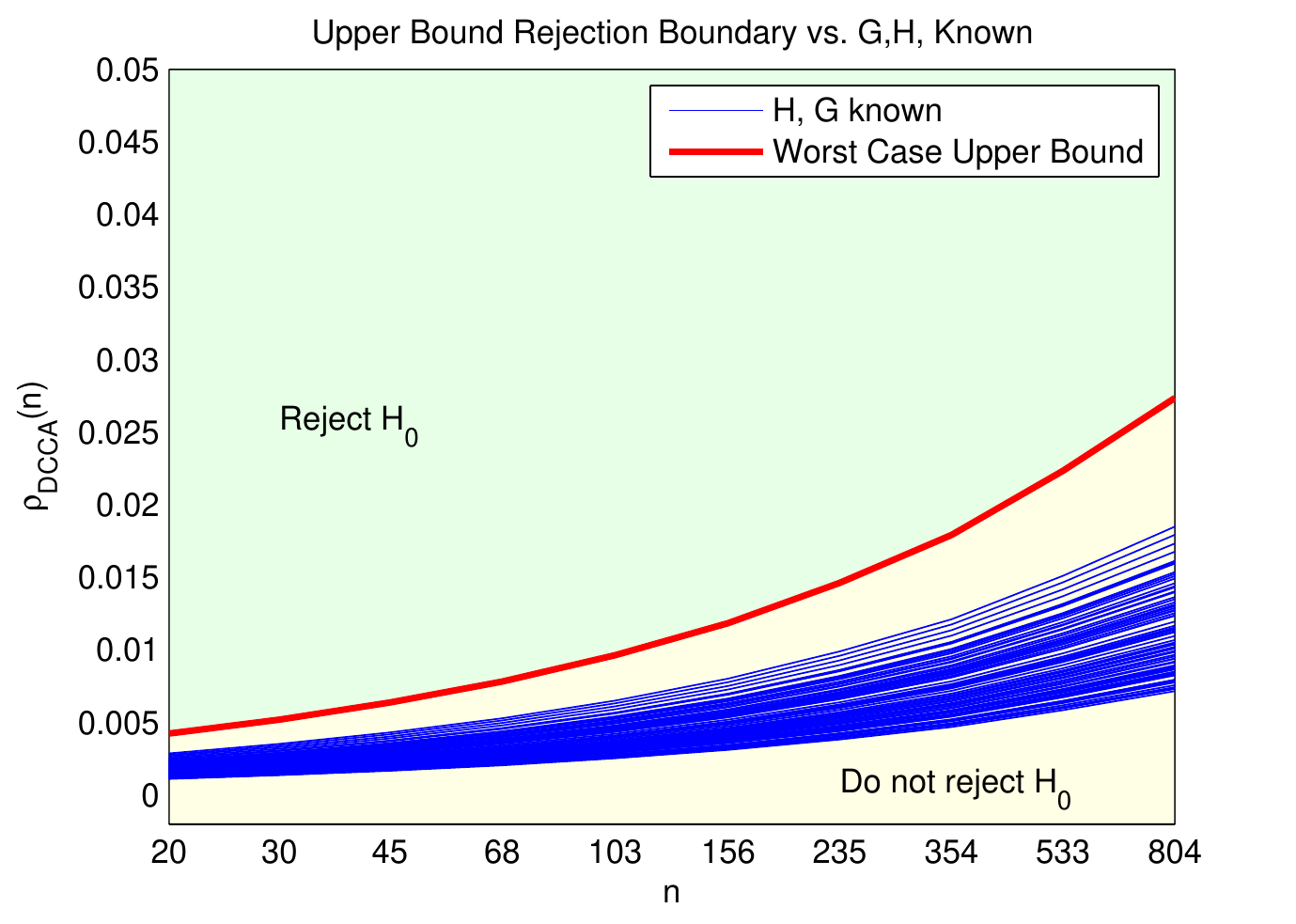}
\end{center}
\caption{Correctness of the $p$-values when $H$ and $G$ are unknown; the figure displays the upper bound rejection line (red) of Equation~\eqref{eq:upper_bound} in comparison to the rejection lines generated by the central limit theorem when $H$
and $G$ are known (blue). Thus when \emph{all} $\rho_{DCCA}$ coefficients lie above the rejection lines, the null is rejected. The fact that the red rejection line lies above every blue
rejection line confirms that the upper bound of Equation~\eqref{eq:upper_bound} allows for strict control over the type I error rate despite access to $H$ and $G$
only through estimation.
\label{fig:check_upper_bound}}
\end{figure}

\subsection{Checking the test is useful by checking test power}

The penultimate simulation tests the power of the test we present when the type I error rate is controlled at the 0.05 level. Thus we check that the test may be effectively used to reject the null in cases of dependence by generating 330 time-series for varying levels of long-range cross-correlation $\rho \in [0.005,0.2]$, $N=40000$, $H,G = 0.7,0.8$. The results are displayed in Figure~\ref{fig:power_function} and show 
that the null-hypothesis is rejected in more than 50 percent of cases when $\rho > 0.07$. Although this shows that the proposed method results in less frequent rejection of the null than in a standard correlation 
analysis (not taking long-range dependence into account), the power is sufficient for detection of weak long-range correlation when $N \geq 40000$ and possesses the advantages of robustness
to polynomial trends and to the presence of confounding short-range correlations.

\begin{figure}[h]
\begin{center}
\includegraphics[width=80mm]{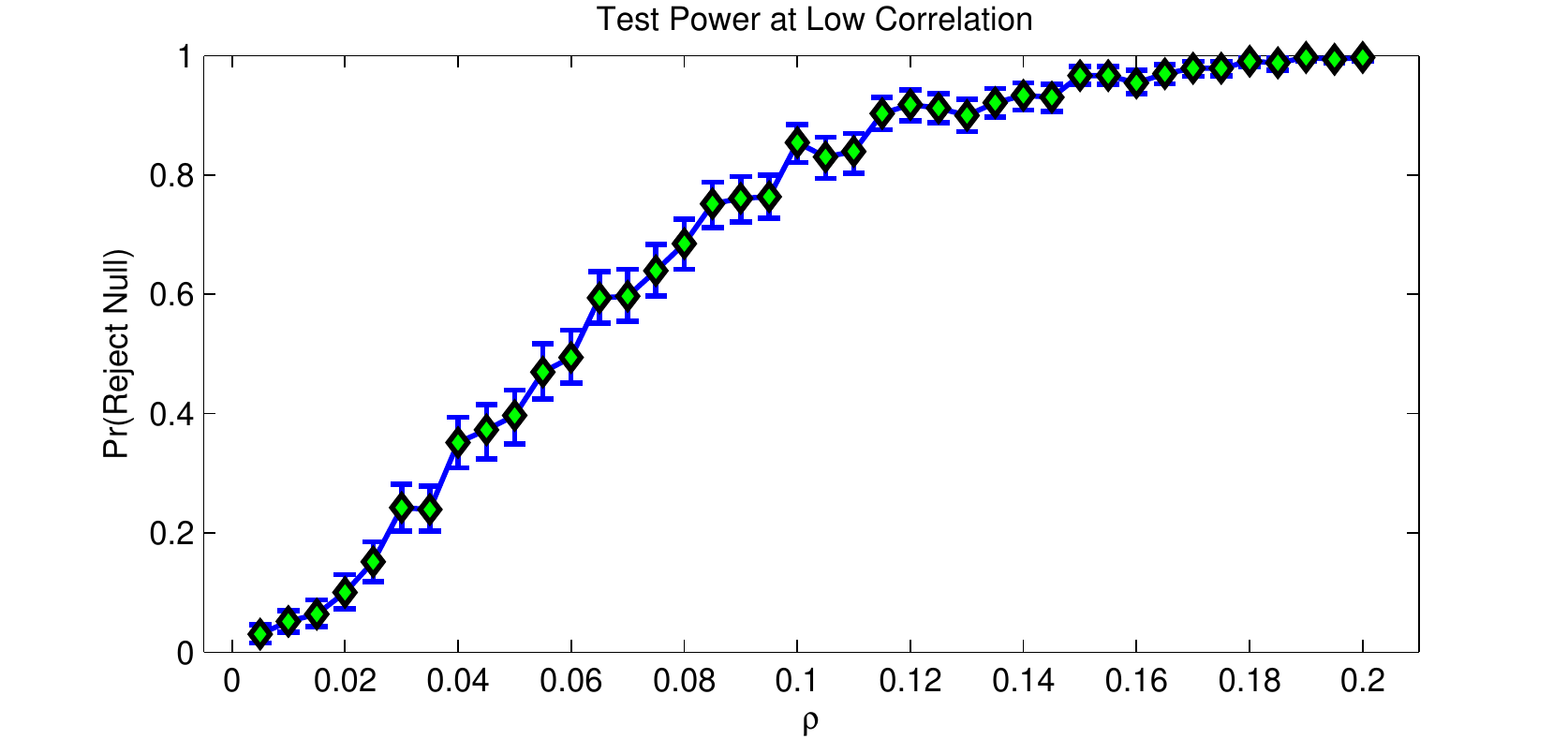}
\end{center}
\caption{Test power of the proposed method; the figure displays the test power plotted as a function of the correlation parameter $\rho$ of the bivariate fractional Gaussian noises which are generated.
The errorbars are calculated using bootstrapped means after simulation of 330 time-series for each value of $\rho$. The results show that for values of $\rho \geq 0.07$,
the null-hypothesis is rejected at the 0.05 level in over 50\% of cases.
\label{fig:power_function}}
\end{figure}  

\subsection{Checking the test is useful in terms of computational efficiency}
\label{sec:check_speed}
Finally we investigate the speed of the proposed method in comparison to computation of the test-distribution via simulated data as per \cite{podobnik2011statistical}. Here we estimate in simulation the amount
of time necessary for stable estimation of the 0.05 quantiles of the test distributions, given that $H$ and $G$ are known using the proposed method and simulation using a fractional 
Gaussian noise generator. Podobnik et al.~require 10,000 samples for this calculation in simulation. Thus in Figure~\ref{fig:speed} we display the amount of
computation time required for the proposed method and the simulation method. The results show that the proposed method is over 5 orders of magnitude faster than the method of
simulation. Although the proposed method involves sampling Gaussian data in order to evaluate the quantiles of the central limit, the simulation shows that drawing these
samples requires considerably less computation time than drawing samples explicitly from a fractional Gaussian noise simulator.

\begin{figure}[h]
\begin{center}
\includegraphics[width=60mm]{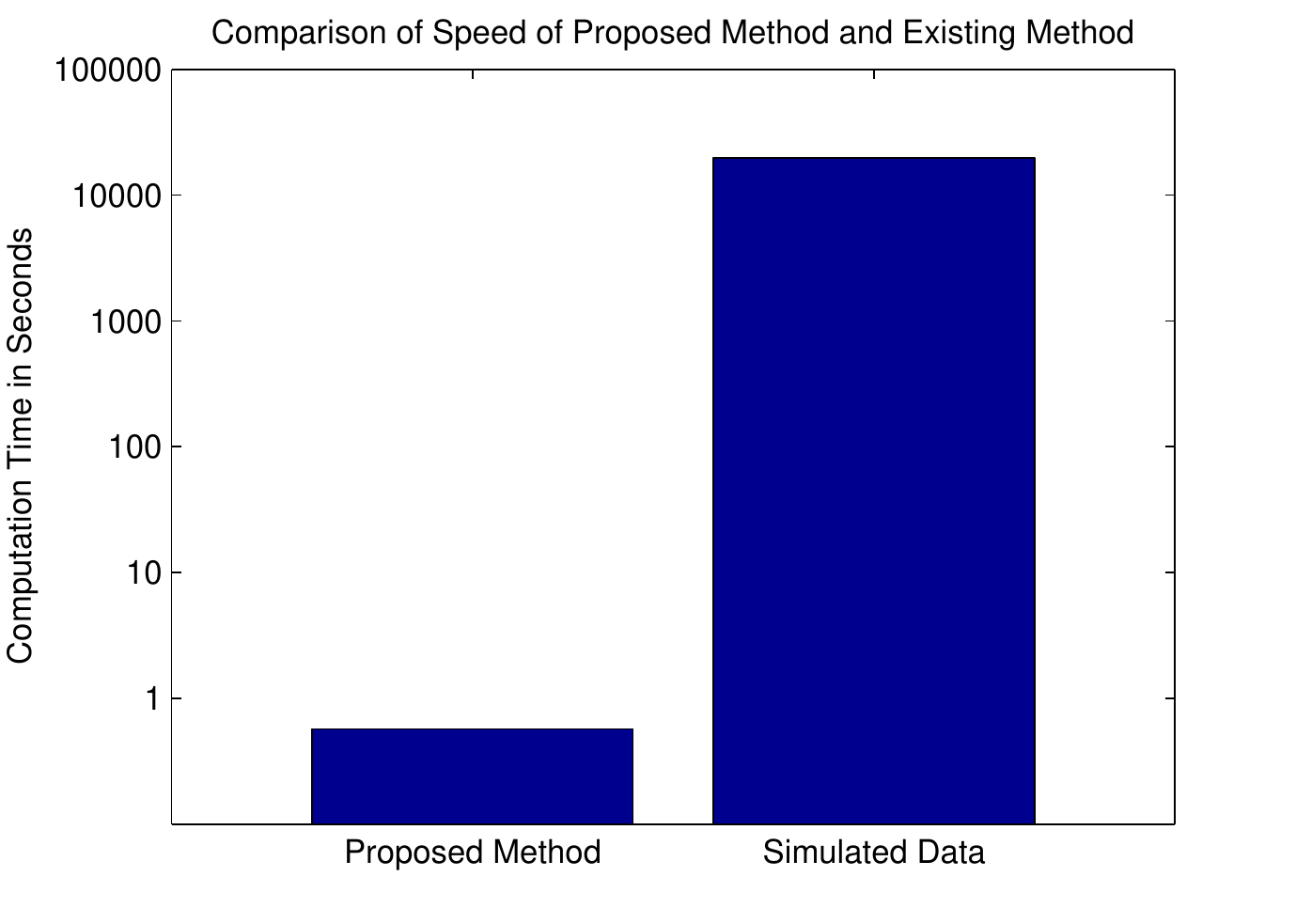}
\end{center}
\caption{Speed-up the proposed method over simulation of surrogate data; the respective values display
the estimated time required for estimation of the distribution of the $\rho_{DCCA}$ coefficients for a time-series of length $N = 80,000$. \label{fig:speed}}
\end{figure}  

\section{Conclusion}

This paper has detailed a test for power-law cross-correlated behaviour which incorporates information across time scales, may be executed efficiently without extensive simulation and 
has been shown to be robust to a range of distributional assumptions. Thus for applications
where interactions are weak we may rigorously check the significance of the observed $\rho_{DCCA}$ coefficients across time-scales. Moreover, because the derivations are based on the $\rho_{DCCA}$ coefficients, the method inherits the advantages possessed by DCCA, i.e. robustness to polynomial trends of a predetermined degree.

Further work will involve broader application of the test in physics, geophysics, biomedical fields and neuroscience, extension of the theory to cover a broader semi-parametric class, for example incorporating the assumption of non-linearity, investigation of alternative modes of detrending and
the development of efficient numerical methods for calculation of the limiting covariance matrix.

\section{Acknowledgements}

The author would like to thank Daniel Bartz for his comments and suggestions on the topic of this paper and Prof. Klaus-Robert M\"uller for valuable advice and support. Duncan Blythe was supported by a grant from the DFG research training group GRK 1589/1 "Sensory Computation in Neural Systems".
%\newpage

\pagestyle{empty}
\bibliographystyle{ieeetr}
{\small
\bibliography{../../Bibs/DFA}}

 \section*{Appendices}
\appendix

\section{Model assumptions and multivariate fractional Brownian motion}

Our model time-series obey the following: we assume we are given two stationary linear processes $Y_1(t) = Y_1^a(t) + Y_1^b(t)$ and $Y_1(t) = Y_1^a(t) + Y_1^b(t)$ where $Y_1^a(t)$ and $Y_2^a(t)$ each have auto covariance functions of the form: $L_1(t)t^{2H-2}$ and $L_2(t)t^{2G-2}$ where the $L_i$ are slowly varying functions at infinity\footnote{ $L_i(t)/L_i(ct) \rightarrow 1$ as $t \rightarrow \infty$ for any $c>0$.}  where the Hurst exponents $H$ and $G$ lie in $[1/2,1)$ \cite{karamata1930mode,lamperti1962semi} and the time-series $Y_i(b)(t)$ 
are deterministic trends of a fixed polynomial order $d$. This formulation allows for non-Gaussianity, for confounding non-stationarities and for the presence of a high frequency component of the power-spectrum differing
at those high frequencies from a power law\footnote{Notice that this semi-parametric class includes the the class analysed by Bardet el al. \cite{DFA_asymp}, viz. stationary Gaussian time-series with an auto covariance function proportional to $k^{2H-2}(1+\mathcal{O}(1/k^\beta))$ since $1+\mathcal{O}(1/k^\beta)$
is slowly varying at infinity.}; this formulation formalizes the formulation of Equations~\eqref{eq:LRTC} and~\eqref{eq:LRCC} of the main body of the paper. We believe that the class of functions we consider is, moreover, more restrictive than necessary and conjecture that the requirement of linearity may be replaced by a constrained non-linearity assumption.

Given this formulation, in the following, we lever existing theoretical work and use the covariance function of fractional Brownian motion in order to construct a test valid for this semi-parametric class (see Section~\ref{sec:theory}).
The covariance function of multivariate fractional Brownian motion is given as follows; if $X_1(t)$ and $X_2(t)$ are two components of the multivariate fractional Brownian motion, then:

\begin{enumerate}
\item If $H + G \neq 1$
\begin{eqnarray*}
\mathbb{E}(X_1(s)X_2(t)) &=& \frac{\sigma_1\sigma_2}{2} \bigg((\rho + \eta \text{sign}(s))|s|^{H+G} + (\rho - \eta \text{sign}(t) )|t|^{H + G} -  (\rho - \eta \text{sign}(t-s)) |t-s|^{H + G}\bigg) \label{eq:cov}
\end{eqnarray*}
\item  If $H + G = 1$ 
\begin{eqnarray*}
\mathbb{E}(X_1(s)X_2(t)) &=& \frac{\sigma_1\sigma_2}{2} \bigg(\rho(|s| + |t| - |t-s|) \\
+ \eta(t\text{log}(t) + s\text{log}(s) - (t-s)\text{log}(t-s))\bigg)
\end{eqnarray*}
\end{enumerate}

Here $H$ and $G$ are the Hurst exponents of $X_1$ and $X_2$ respectively.
Under certain regularity conditions Lavancier et al.~\cite{LavCov} are able to show that this covariance function moreover applies to any $L^2$ self-similar multivariate process with stationary increments.

\section{Method details}

Algorithm~\ref{alg:test} formalizes the proposed test procedure; subroutines are stated explicitly in Algorithms~\ref{alg:wccov} and~\ref{alg:critical_region} respectively.

\begin{algorithm}[h]
\caption{Worst case covariance}
\label{alg:wccov} 
\begin{algorithmic}[1]
	\Function{wcCov}{$n_1, \dots, n_r < N$, $d = \text{trend degree},H_{low},H_{high},G_{low},G_{high}$}
		\For{$i = \{ 1,\dots, r\}$}
			\State $C_{i,i} = \text{max}_{H \in [H_{low},H_{high}], G \in [G_{low},G_{high}]} \text{var}(\rho_{DCCA}(n_i,X_G,X_H))$
		\EndFor 
		\For{$i \neq j$}
			\State $r_{i,j} = \text{max}_{H \in [H_{low},H_{high}], G \in [G_{low},G_{high}]} \times \newline\hspace{1in} \text{corr}(\rho_{DCCA}(n_i,X_G,X_H),\rho_{DCCA}(n_j,X_G,X_H))$
			\State $C_{i,j} = r_{i,j} \sqrt{C_{i,i}} \sqrt{C_{j,j}} $
		\EndFor 
		\State \Return $C$  
	\EndFunction
\end{algorithmic}
\end{algorithm}

\begin{algorithm}[h]
\caption{Calculate the critical region}
\label{alg:critical_region} 
\begin{algorithmic}[1]
	\Function{critRegion}{$C$,$p$, $\kappa$}
		\For{$i = \{ 1,\dots, r\}$}
			\State $a_i(\theta) = \theta  \sqrt{C_{i,i}}$
		\EndFor 
		\State $A(\theta)$ = $\{ (x_1,\dots,x_r) | x_{r_1} >a_{r_1}(\theta),\dots,x_{r_{\kappa}} > a_{r_\kappa}(\theta) \text{ or } -x_{r_1} >a_{r_1}(\theta),\dots,-x_{r_{\kappa}} > a_{r_\kappa}(\theta) (\text{some } r_1, \dots) \}$.
		\State $A = A(\theta^*)$ s.t. \newline $\theta^* = \text{min}_\theta\left(Pr_{\mathcal{N}(0,C)}((x_1,...,x_r) \in A(\theta)) < p\right)$.
		\State \Return $A$  
	\EndFunction
\end{algorithmic}
\end{algorithm}

\begin{algorithm}[h]
\caption{Test for power-law interdependence}
\label{alg:test} 
\begin{algorithmic}[1]
	\Function{statDCCA}{$Y_1,Y_2, n_1, \dots, n_r$, $d$, $H_{high}$, $H_{low}$, $G_{high}$, $G_{low}$, $p$, $\kappa$}
		\State Compute $X_1$ and $X_2$ by $X_1(t) =  \sum_{j=1}^{t} Y_1(j)$ and  $X_2(t) =  \sum_{j=1}^{t} Y_2(j)$
		\For{$i = \{ 1,\dots, r\}$}
			\State Split $X_1$ and $X_2$ into windows of length $n_i$: $X_{j,n_i}^{1},\dots X_{j,n_i}^{[N/n_i]}$ and 
			\State Calculate the degree $d$ polynomial trend in each window to yield $\widehat{X}_{j,n_i}^{1},\dots \widehat{X}_{j,n_i}^{[N/n_i]}$.
			\State Calculate $F_{k,X_1,X_2}^2(n_i) = \widehat{\mathbb{E}}\left((X_{1,n_i}^{k}-\widehat{X}_{1,n_i}^{k})(X_{2,n_i}^{k}-\widehat{X}_{2,n_i}^{k})\right)$
			\State Calculate $F^2_{X_1,X_2}(n_i) = \widehat{\mathbb{E}}(F_{k,X_1,X_2}^2(n_i))$.
		    \EndFor
		\State $C$ = {\scriptsize{WC}\normalsize C\scriptsize OV\normalsize}$\left(n_1, \dots, n_r < N, d,H_{low},H_{high},G_{low},G_{high}\right)$
		\State $A$ = {\scriptsize{CRIT}\normalsize R\scriptsize EGION\normalsize}($C$, $p$, $\kappa$)
		\If{ $(F^2(n_1),\dots,F^2(n_r)) \in A$}{ \Return \emph{reject} $\mathcal{H}_0$}
 		\Else{ \Return \emph{do not reject}} $\mathcal{H}_0$  \EndIf
	\EndFunction
\end{algorithmic}
\end{algorithm}
\subsection{Technical Issues}
\subsubsection{Calculation of probabilities}

Algorithm~\ref{alg:critical_region} involves the calculation of a corresponding
Gaussian integral. Numerical calculation of this integral is inefficient and inaccurate when $r$ is large. In such cases, it is more straightforward and more efficient
to estimate the relevant probabilities using a Gaussian random number generator\footnote{We use {\tt mvnrnd} in MATLAB.}. Notice that this is a considerably more compact computation than the computation undertaken
by Podobnik et al. \cite{podobnik2011statistical}. When $r=25$, $10^6$ samples suffice to obtain a stable estimate for a case in which the corresponding probability is $\approx 0.0510$ to within a tolerance of $\pm 0.001$ which takes approximately $2.4$ seconds in MATLAB. The efficiency in estimation is due to the high correlations between the DCCA coefficients across scales, which reduces the effective dimensionality of the support of the distribution.

\subsubsection{Calculating asymptotic covariance}

We calculate the variance of the $F_{X_1,X_2}^2(n)$ by evaluating, for $H,G = 0.5,0.52,\dots,0.96,0.98$ an exact formula (Equation~\ref{eq:covariance_fun}) for large but computationally feasible $N$ and $n$. 
We then calculate $[N/n]$ times this variance which should correspond up to finite sample error to the asymptotic limit posited by the theory (below). Then for a new $N'$ and $n'$ the required variance
is approximately $1/[N'/n']$ times this limit. 

Moreover, in order to calculate the cross terms we tabulate the correlation between $F_{X_1,X_2}^2(n), F_{X_1,X_2}^2(n')$ for $H,G = 0.5,0.52,\dots,0.96,0.98$ and for feasible $N$ and $n$ and $n'=0.01\times n,  \dots, 0.98 \times n, 0.99 \times n$.
For $n' < 0.01 \times n$ the correlation is approximately zero but is upper bounded by the correlation when $n'=0.01$, to be conservative. The covariance term of $F^2(n)$ is then calculated from the variance and correlation terms.

Finally we require the mean of the $F^2_{X_1,X_1}(n)$ terms. For polynomial trending of degree one this mean is given analytically (see ref. \cite{DFA_asymp}, Property 3.1). 
Otherwise the mean may be calculated in a similar manner to the calculation of covariance.
Given these means and covariance matrix, the theory in the appendix yields the full covariance matrix of the DCCA correlation coefficients. 

%\subsubsection{Calculating the critical region}
%\label{sec:critical_region}
%Given the covariance matrix $C$ computed by Algorithm~\ref{alg:wccov}, we solve $2 \times \text{Pr}(\rho(n_1)> \theta \sqrt{C(1,1)},\dots,\rho(n_r)>\theta \sqrt{C(r,r)}) = p_{level}$ numerically for $\theta$, in Algorithm~\ref{alg:critical_region}.
%
\section{Derivation of covariance expressions}
\label{sec:derivation}
\subsection{Form of the correlation function between time varying DCCA (DFA) coefficients}
Let $P$ be the orthonormal projection onto the subspace of $\mathbb{R}^n$ spanned by $(1,1, \dots,1), \dots, (1,2^i \dots,n^i), \dots ,(1,2^d,\dots, n^d)$ and define
the subspace $E_j^d$ as the span of $(1,1, \dots,1), \dots, (((j-1)n+1)^i,((j-1)n+2)^i \dots,((j-1)n+n)^i), \dots , (((j-1)n+1)^d,((j-1)n+2)^d \dots,((j-1)n+n)^d)$. Then $PX$ is the least 
squares estimate of the polynomial trend of degree $d$ on $X$. Then define $Q:=I-P$. Then, following~\cite{DFA_asymp} we have:

\begin{eqnarray*}
&& \text{cov}(F^2_{1,X,X}(n),F^2_{j,X,X}(n)) \\
 &=& \frac{1}{n^2}\text{cov}\left( (X^{(1)} -  \mathbb{P}_d(X^{(1)}))^\top (X^{(1)} -  \mathbb{P}_d(X^{(1)})), (X^{(j)} -  \mathbb{P}_d(X^{(j)}))^\top (X^{(j)} -  \mathbb{P}_d(X^{(j)})) \right)\\
&=& \frac{1}{n^2}\text{cov}\left( (X^{(1)} -  PX^{(1)})^\top (X^{(1)} -  PX^{(1)}), (X^{(j)} - P X^{(j)})^\top (X^{(j)} - PX^{(j)}) \right) \\
&=&  \frac{1}{n^2}\text{cov}\left( (QX^{(1)})^\top (QX^{(1)}), (QX^{(j)})^\top (QX^{(j)}) \right) \\
&=&  \frac{1}{n^2}\mathbb{E}\left( (QX^{(1)})^\top (QX^{(1)}) (QX^{(j)})^\top (QX^{(j)}) \right) -  \frac{1}{n^2}\mathbb{E}\left( (QX^{(1)})^\top (QX^{(1)})\right)\mathbb{E}\left( (QX^{(j)})^\top (QX^{(j)})\right)\\
&=&  \frac{1}{n^2}\mathbb{E}\left(\sum_i (QX^{(1)})_i^2 \sum_k (QX^{(j)})_k^2 \right) -  \frac{1}{n^2}\mathbb{E}\left(\sum_i (QX^{(1)})_i^2\right) \mathbb{E}\left(\sum_k (QX^{(j)})_k^2\right) \\
&=& \frac{1}{n^2}\sum_{i,k} \mathbb{E}\left((QX^{(1)})_i^2 (QX^{(j)})_k^2\right) -  \frac{1}{n^2} \sum_i \mathbb{E}\left((QX^{(1)})_i^2\right) \sum_k \mathbb{E}\left( (QX^{(j)})_k^2\right) \\
&=& \frac{2}{n^2}\sum_{i,k} \left(\mathbb{E}\left((QX^{(1)})_i (QX^{(j)})_k\right)\right)^2 \\
&=& \frac{2}{n^2} \text{trace}\left(Q\Sigma^{1,j}Q \times (Q\Sigma^{1,j}Q)^\top \right) \\
&= & \frac{2}{n^2} \text{trace}\left(Q\Sigma^{1,j}Q(\Sigma^{1,j})^\top Q \right)  \\
&= & \frac{2}{n^2} \text{trace}\left((I-P)\Sigma^{1,j}(I-P)(\Sigma^{1,j})^\top \right) \\
\end{eqnarray*}

The transition from the sixth to the seventh line is justified by Isserlis's theorem.  Line seven is a 
Froebenius norm, thus justifying the transition to line 8. The transition to the penultimate line uses the fact that $Q$ is 
symmetric. The transition to the last line uses $\text{trace}(AB) = \text{trace}(BA)$ and $Q^2=Q$.

Similarly, under the null hypothesis of zero correlation we obtain the following form for the cross-covariance function:

\begin{equation}
\text{cov}(F^2_{j,X_1,X_2}(n),F^2_{j',X_1,X_2}(m)) = \frac{1}{n^2} \text{trace}\bigg((I_m-P_m)\Sigma_H^{j,j'} \\  \times (I_n-P_n)(\Sigma_G^{j,j'})^\top \bigg) \label{eq:covariance_fun}
\end{equation}

$\Sigma_H^{j,j'}$ denotes the covariance matrix between the $j^\text{th}$ window of size $n$ and the $j^\text{th}$ window of size $m$ of a fractional Brownian motion with Hurst parameter $H$.
\newpage
\subsection{Form of the cross-covariance between the DFA and DCCA coefficients and invariance of the covariance under the null to non-Gaussianity}
\label{sec:delta}
This is necessary to use the delta method in order to calculate the covariance function of $\rho_{DCCA}(n)$.

\begin{eqnarray*}
\text{cov}(F_{1,X_1,X_2}(n), F_{j,X_1,X_1}(n)) &=& \frac{1}{nm}\text{cov}((PX_1^{(1)})^\top PX_2^{(1)},(PX_1^{(j)})^\top PX_1^{(j)}) \\
&=& \mathbb{E}((PX_1^{(1)})^\top PX_2^{(1)}(PX_1^{(j)})^\top PX_1^{(j)}) - \mathbb{E}((PX_1^{(1)})^\top PX_2^{(1)})\mathbb{E}((PX_1^{(j)})^\top PX_1^{(j)}) \\
&=& \sum_{i,k} \mathbb{E}(PX_{1,i}^{(1)} PX_{2,i}^{(1)}PX_{1,k}^{j} PX_{1,k}^{(j)}) -  \mathbb{E}(PX_{1,i}^{(1)} PX_{2,i}^{(1)})\mathbb{E}(PX_{1,k}^{(j)} PX_{1,k}^{(j)}) \\
&=&  2 \sum_{i,k} \mathbb{E}(PX_{1,i}^{(1)} PX_{1,k}^{(j)}) \mathbb{E}(PX_{2,i}^{(1)} PX_{1,k}^{(j)}) \\
&=& 2 \times \text{trace}(P\Sigma_{X_1}^{1,j}P(\Sigma_{X_1,X_2}^{1,j})^\top)
\end{eqnarray*}

Thus in the null hypothesis case, the DCCA and DFA coefficients are uncorrelated.

\begin{eqnarray*}
\text{cov}(F^2_{1,X_1,X_1}(n), F^{2}_{j,X_2,X_2}(n)) &=& \frac{1}{nm}\text{cov}((PX_2^{(1)})^\top PX_2^{(1)},(PX_1^{(j)})^\top PX_1^{(j)}) \\
&=& \mathbb{E}((PX_2^{(1)})^\top PX_2^{(1)}(PX_1^{(j)})^\top PX_1^{(j)}) - \mathbb{E}((PX_2^{(1)})^\top PX_2^{(1)})\mathbb{E}((PX_1^{(j)})^\top PX_1^{(j)}) \\
&=& \sum_{i,k} \mathbb{E}(PX^{1}_{2,i} PX^{(1)}_{2,i}PX^{j}_{1,k} PX^{(j)}_{1,k}) -  \mathbb{E}(PX^{1}_{2,i} PX^{(1)}_{2,i})\mathbb{E}(PX^{j}_{1,k} PX^{(j)}_{1,k}) \\
&=&  2 \sum_{i,k} \mathbb{E}(PX^{(1)}_{2,i} PX^{(j)}_{1,k}) \mathbb{E}(PX^{(1)}_{2,i} PX^{(j)}_{2,k}) \\
&=& 2 \times \text{trace}(P\Sigma_{X_1,X_2}^{1,j}P(\Sigma_{X_1,X_2}^{1,j})^\top)
\end{eqnarray*}

Thus, not surprisingly, in the null hypothesis case, the DFA coefficients of 
each time-series are uncorrelated.

Proposition~\ref{prop:CLT} uses the delta method to derive the central limit on the $\rho_{DCCA}$ coefficients.
Now we explicitly derive the covariance matrix of $[N/n_i]\rho_{DCCA}(n_i,X_1,X_2)$ under this limit. (See~\cite{van2000asymptotic} for details on the delta method).

The input Gaussian is: 
\small
\begin{multline}
Z = ([N/n_1]F_{X_1,X_2}^2(n_1),[N/n_1]F_{X_1,X_1}^2(n_1),[N/n_1]F_{X_2,X_2}^2(n_1),\dots \\
 ,[N/n_r]F_{X_1,X_2}^2(n_r),[N/n_r]F_{X_1,X_1}^2(n_r),[N/n_r]F_{X_2,X_2}^2(n_r))
\end{multline}
\normalsize
 which has covariance $C$, say.
Then we need the covariance of $h(Z) = (\frac{Z_1}{\sqrt{Z_2 Z_3}},\dots,\frac{Z_{3r-2}}{\sqrt{Z_{3r-1} Z_{3r}}})$.
Thus since, 
\begin{eqnarray}
\frac{\partial}{\partial Z_1} h(Z)_1 &=&  \frac{1}{\sqrt{Z_2 Z_3}}  \label{eq:dis_invariance_1}\\
\frac{\partial}{\partial Z_2} h(Z)_1 &=& - \frac{Z_1}{(Z_2 Z_3)^{3/2}} \\
\frac{\partial}{\partial Z_3} h(Z)_1 &=&  -\frac{Z_1}{(Z_2 Z_3)^{3/2}} \label{eq:dis_invariance_3}
\end{eqnarray}

And since the mean under the null hypothesis of $h(Z)$ is $(0,f(H)n^{2H},f(G)n^{2G},\dots)$ (under $\mathcal{H}_0$), so that the second two derivative terms are zero, 
then we have:
\scriptsize
\begin{equation}
\text{cov}(h(Z)_{i,j}) = \\ \frac{C_{3(i-1)+1,3(j-1)+1}}{\sqrt{\mathbb{E}(Z_{3(i-1)+2})\mathbb{E}(Z_{3(i-1)+2})}\sqrt{\mathbb{E}(Z_{3(j-1)+2})\mathbb{E}(Z_{3(j-1)+2})}} \label{eq:second_order_stats}
\end{equation}
\normalsize
Thus the numerator depends only on the covariance of the DCCA coefficients\footnote{This means that the $\rho_{DCCA}$ coefficient also yields
robustness to non-Gaussianity under the null hypothesis. Intuitively the reason for this is that although $F_{X_1,X_1}^2(n_1)$ has variance of the same order of magnitude as $F_{X_1,X_2}^2(n_1)$, 
the former has non-zero mean which implies it contributes no variance asymptotically.}.

\section{Theory}
\label{sec:theory}

\begin{Prop}
\label{prop:expectation}
For any detrending degree: $\mathbb{E} [\rho_{DCCA}(n,X_1,X_2)] = 0$ if $X_1$ and $X_2$ are independent fBm; otherwise, if $X_1$ and $X_2$ are dependent components of VfBm $\mathbb{E} [\rho_{DCCA}(n,X_1,X_2)] > 0$, asymptotically for all n or $\mathbb{E} [\rho_{DCCA}(n,X_1,X_2)] < 0$, asymptotically for all $n$. More exactly:
\begin{multline}
\mathbb{E}(F^2_{1,X_1,X_2}(n)) = \\ \sigma_1 \sigma_2 \left(\rho f_1(H,G) + \eta f_2(H,G)\right)n^{H+G}(1+\mathcal{O}(1/n)) \label{eq:expectation}
\end{multline}
where $f_1(H,H) = f(H)$ from Bardet et al. \cite{DFA_asymp}, Equation 9. This implies the claim for $\rho_{DCCA}$ since the denominator of $\rho_{DCCA}$ is always positive.
\end{Prop}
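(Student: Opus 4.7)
}

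The plan is to reduce the claim to a direct computation of $\mathbb{E}(F^{2}_{1,X_{1},X_{2}}(n))$ via the trace formulation developed in Section~\ref{sec:derivation}, then extract the leading $n^{H+G}$ scaling by a Riemann-sum change of variables on the VfBm cross-covariance kernel. Since the denominator of $\rho_{DCCA}(n,X_{1},X_{2})$ equals $\sqrt{F^{2}_{X_{1},X_{1}}(n)F^{2}_{X_{2},X_{2}}(n)}$, which is almost surely positive, it suffices to treat the numerator; the sign statement for $\mathbb{E}[\rho_{DCCA}]$ will then follow from the sign of $\rho f_{1}(H,G)+\eta f_{2}(H,G)$.

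First I would write, using $Q=I-P$ as in Section~\ref{sec:derivation},
\begin{equation*}
\mathbb{E}(F^{2}_{1,X_{1},X_{2}}(n)) \;=\; \tfrac{1}{n}\,\mathbb{E}\bigl((QX_{1}^{(1)})^{\top}(QX_{2}^{(1)})\bigr) \;=\; \tfrac{1}{n}\,\mathrm{trace}\bigl(Q\,\Sigma_{X_{1},X_{2}}\,Q\bigr),
\end{equation*}
where $\Sigma_{X_{1},X_{2}}$ is the $n\times n$ cross-covariance matrix with entries $\mathbb{E}(X_{1}(s)X_{2}(t))$ given by the VfBm formula in the Appendix. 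In the independent case the kernel vanishes identically (both $\rho$ and $\eta$ equal zero for independent components), so the trace is zero and $\mathbb{E}[\rho_{DCCA}(n,X_{1},X_{2})]=0$ for every $n$ and every detrending degree $d$. This handles the independence statement without any asymptotic argument.

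For the dependent case the plan is to split the covariance kernel into its $\rho$-part and $\eta$-part and to rescale by setting $s=nu$, $t=nv$ with $u,v\in[0,1]$. Each of the three building blocks $|s|^{H+G}$, $|t|^{H+G}$ and $|t-s|^{H+G}$ is homogeneous of degree $H+G$, and so are $t\log t$, $s\log s$, $(t-s)\log(t-s)$ modulo an additive $(H+G=1)$ correction that can be absorbed into the $\mathcal{O}(1/n)$ remainder. Writing $Q_{n}$ as the discrete analogue of the $L^{2}[0,1]$ projection $\mathcal{Q}$ onto the orthogonal complement of the polynomials of degree $\le d$, one obtains
\begin{equation*}
\tfrac{1}{n}\,\mathrm{trace}(Q\Sigma_{X_{1},X_{2}}Q) \;=\; \sigma_{1}\sigma_{2}\,n^{H+G}\Bigl(\rho\,f_{1}(H,G)+\eta\,f_{2}(H,G)\Bigr) + \mathcal{O}(n^{H+G-1}),
\end{equation*}
where $f_{1}(H,G)$ and $f_{2}(H,G)$ are the continuous quadratic forms obtained by applying $\mathcal{Q}\otimes\mathcal{Q}$ to the symmetric ($\rho$) and antisymmetric ($\eta$) parts of the rescaled VfBm kernel and integrating over $[0,1]^{2}$. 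The specialisation $H=G$ with $\eta=0$ and $\rho=1$ recovers the DFA setting of Bardet et al., so by uniqueness $f_{1}(H,H)=f(H)$ as required. The error term is $\mathcal{O}(1/n)$ after the $n^{H+G}$ factor is extracted, uniformly in $H,G\in[1/2,1)$, because $Q_{n}$ differs from the Riemann-sum discretisation of $\mathcal{Q}$ by $\mathcal{O}(1/n)$ in operator norm on any fixed polynomial subspace.

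The main obstacle is establishing this $\mathcal{O}(1/n)$ bound rigorously: one needs (i) to justify the interchange of the trace and the rescaling limit uniformly in $n$, and (ii) to control the boundary contribution when $H+G$ is close to $1$, since the $t\log t$ kernel is only log-H\"older at the origin. Both difficulties can be dissolved by splitting the integration domain into a bulk region and a thin diagonal strip of width $1/n$, bounding the strip by a direct $L^{\infty}$ estimate on $Q_{n}$ applied to the singular kernel, and applying Euler–Maclaurin on the bulk. Once the asymptotic expansion is in hand, the sign claim follows: if $\rho f_{1}(H,G)+\eta f_{2}(H,G)>0$, then $\mathbb{E}(F^{2}_{1,X_{1},X_{2}}(n))>0$ for all sufficiently large $n$, and symmetrically for the negative case, which is exactly the asymptotic sign statement for $\mathbb{E}[\rho_{DCCA}(n,X_{1},X_{2})]$.
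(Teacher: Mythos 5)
Your proposal follows essentially the same route as the paper: express $\mathbb{E}(F^{2}_{1,X_{1},X_{2}}(n))$ as a trace of the detrending projection applied to the VfBm cross-covariance matrix, approximate the trace by integrals after the rescaling $s=nu$, $t=nv$, and use the degree-$(H+G)$ homogeneity of the kernel to extract the factor $n^{H+G}$, leaving $f_{1}$ and $f_{2}$ as integrals of the $\rho$- and $\eta$-parts of the rescaled kernel against the continuous polynomial projection; the identification $f_{1}(H,H)=f(H)$ by specializing to the DFA case is also how the paper connects to Bardet et al. Your handling of the independent case (the kernel vanishes identically, so the trace is zero for every $n$ and every $d$) and of the error term (diagonal strip plus Euler--Maclaurin on the bulk, with care near $H+G=1$) is, if anything, more explicit than the paper's one-line ``the traces may be approximated by integrals.''

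The one place you stop short of the stated claim is the sign dichotomy for dependent components. You conclude only that \emph{if} $\rho f_{1}(H,G)+\eta f_{2}(H,G)>0$ (resp.\ $<0$) then the expectation is eventually positive (resp.\ negative); but the proposition asserts that for dependent components one of these two alternatives \emph{must} occur, which requires showing that the leading constant cannot vanish when $(\rho,\eta)\neq(0,0)$. The paper closes this by arguing that the diagonal integral of the kernel cannot equal the projected double integral, on the grounds that $\mathbb{E}(X_{1}(t)X_{2}(s))$ is not a function of the form $p(s,t)$ reproduced by the polynomial projection --- itself a rather terse argument, but it is the step your write-up omits. Adding an argument ruling out $\rho f_{1}(H,G)+\eta f_{2}(H,G)=0$ for dependent components would complete the proof.
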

\begin{proof}
Following Bardet et al. we have that $\mathbb{E}(F^2_{1,X_1,X_2}(n))  =  \text{trace}( \Sigma^{1,n}_{G,H} - \mathbb{P}_d  \Sigma^{1,n}_{G,H})$ where $\Sigma^{1,n}_{G,H}$ is the cross-covariance matrix of
the bivariate fractional Brownian motion up until time $n$.
The traces may be approximated by integrals so that we have that 
\begin{multline}
\mathbb{E}(F^2_{1,X_1,X_2}(n)) = \sigma_1 \sigma_2 n^{H+G}\bigg(\int_0^1 \mathbb{E}(X_1(s),X_2(s))^2 dt \\
		+ \int_0^1\int_0^1 p(s,t) \mathbb{E}(X_1(t),X_2(s)) ds dt \bigg)\left(1 + \mathcal{O}(1/n)\right)
\end{multline}

Where the formula $p(ns,nt)$ parametrizes the entries of the matrix $\mathbb{P}_d$ up until
$\mathcal{O}(1/n)$.
Thus it suffices to show that:
\begin{equation}
\int_0^1 \mathbb{E}(X_1(s),X_2(s))^2 dt \neq
		\int_0^1\int_0^1 p(s,t) \mathbb{E}(X_1(t),X_2(s)) ds dt 
\end{equation}
For which it is sufficient to show that  $\mathbb{E}(X_1(t),X_2(s))$
is not equal to $p(s,t)$ on the set $[0,1] \times [0,1]$ which is
true for all $G,H$ and $d$, proving the proposition.

\end{proof}

\begin{Prop}
\label{lem:cov_order}
The covariance function: \newline $\text{cov}(F^2_{1,X_1,X_2}(n),F^2_{j,X_1,X_2}(m))$ has order $j^{2G+2H-8}$ for DCCA(d) under the null hypothesis of zero correlation;
i.e.:
\footnotesize
\begin{multline}
\text{cov}(F^2_{1,X_1,X_2}(n),F^2_{j,X_1,X_2}(m)) = \\ \sigma_1^2 \sigma_2^2 g(H,G) n^{H+G}m^{H+G}j^{2H+2G-8}\left(1+ \mathcal{O}(1/\text{min}(n,m)) + \mathcal{O}(1/j)\right) \label{eq:covariance}
\end{multline}
\end{Prop}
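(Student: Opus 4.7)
The plan is to start from the explicit trace formula established in Section~\ref{sec:derivation} (Equation~\ref{eq:covariance_fun}), expand the fBm covariance about the large lag $\sim j\cdot n$ by Taylor series, and exploit the key structural fact that the detrending projector $I-P$ annihilates polynomials of degree $\leq d$ in the window coordinate.

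First, I would use the multivariate fBm covariance (from the appendix section on vfBm) to write each entry of $\Sigma_H^{1,j}$ as $\frac{\sigma_1^2}{2}(|s|^{2H}+|t|^{2H}-|t-s|^{2H})$, where $s=p$ runs over window $1$ of size $n$ and $t=(j-1)m+q$ runs over window $j$ of size $m$. The term $|s|^{2H}$ depends only on the row index $p$, so as a matrix it has the form $u\cdot\mathbf{1}^{\top}$; similarly $|t|^{2H}$ has form $\mathbf{1}\cdot v^{\top}$. Since $\mathbf{1}$ is a degree-zero polynomial, $(I-P)\mathbf{1}=0$, and both terms drop out of the sandwiched trace. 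The analysis therefore reduces to the single term $-\frac{\sigma_1^2}{2}|t-s|^{2H}$.

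Next, for large $j$, the displacement $t-s=(j-1)m+q-p$ is dominated by $(j-1)m$, so I would Taylor expand
\[
|t-s|^{2H}=\sum_{k\geq 0}\binom{2H}{k}\bigl((j-1)m\bigr)^{2H-k}(q-p)^{k},
\]
then write $(q-p)^{k}=\sum_{a+b=k}\binom{k}{a}(-1)^{b}p^{b}q^{a}$. A monomial $p^{b}q^{a}$ is killed by $(I-P_{n})$ acting on the left if $b\leq d$ and by $(I-P_{m})$ acting on the right if $a\leq d$, so it survives the double projection only when $a,b\geq d+1$, forcing $k\geq 2d+2$. The first surviving term is thus $k=2d+2$ and equals a multiple of $\bigl((j-1)m\bigr)^{2H-2d-2}\,p^{d+1}q^{d+1}$; since this is rank one of the form $u v^{\top}$ with $u_{p}=p^{d+1}$, $v_{q}=q^{d+1}$, the Frobenius inner product against the analogous leading piece of $\Sigma_{G}^{1,j}$ factorizes as $\|(I-P_{n})u\|^{2}\cdot\|(I-P_{m})v\|^{2}$. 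These detrended norms scale as $n^{2d+3}$ and $m^{2d+3}$ respectively.

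Multiplying everything out and dividing by the $1/(nm)$ prefactor, the leading scaling becomes $g(H,G,d)\,\sigma_{1}^{2}\sigma_{2}^{2}\,n^{H+G}m^{H+G}j^{2H+2G-4d-4}$, which for $d=1$ is exactly the claimed $j^{2H+2G-8}$ order. The two error terms in~\eqref{eq:covariance} are natural: every higher-order Taylor term carries an additional factor of $(q-p)/((j-1)m)=\mathcal{O}(1/j)$, and the usual Euler--Maclaurin/Riemann-sum replacement of the discrete trace by the corresponding continuum integral yields the $\mathcal{O}(1/\min(n,m))$ correction, very much in the spirit of the analogous DFA argument in \cite{DFA_asymp}.

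The main technical obstacle will be the bookkeeping of the error terms uniformly in $n$, $m$, and $j$: one must show that the projected tail of the Taylor series (terms with $k>2d+2$) contributes at most $1/j$ times the leading order, rather than swamping it via, say, the growth of polynomial projection norms in $k$. A secondary subtlety is verifying that the leading combinatorial coefficient $\binom{2H}{2d+2}\binom{2d+2}{d+1}$ combined with the detrended-monomial norms yields a strictly non-vanishing constant $g(H,G)$ throughout the relevant range $H,G\in[1/2,1)$; this amounts to checking that $\binom{2H}{2d+2}$ does not accidentally vanish on a half-integer and that the rank-one leading approximations of $\Sigma_{H}^{1,j}$ and $\Sigma_{G}^{1,j}$ do not become orthogonal after projection, both of which follow from direct inspection.
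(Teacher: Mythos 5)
Your proposal is correct in substance but replaces the paper's argument with a genuinely more explicit one. The paper proves the $d=1$ case by asserting that the relevant terms of the Taylor expansion of $\mathrm{trace}(Q_1\Sigma_H Q_1\Sigma_G^\top)$ have been checked by explicit (computer-algebra) calculation, and then handles $d>1$ only indirectly: it shows $\mathrm{trace}(Q_d\Sigma_H Q_d\Sigma_H^\top)\leq\mathrm{trace}(Q_1\Sigma_H Q_1\Sigma_H^\top)$ by a Frobenius-norm monotonicity argument (the range of $Q_d$ sits inside that of $Q_1$) and extends to $H\neq G$ by Cauchy--Schwarz, so for higher detrending degrees it only obtains an \emph{upper bound} of order $j^{2H+2G-8}$. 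You instead identify the structural mechanism: the $|s|^{2H}$ and $|t|^{2H}$ pieces of the fBm covariance are rank-one against $\mathbf{1}$ and die under the projector, and in the expansion of $|t-s|^{2H}$ every monomial $p^bq^a$ with $\min(a,b)\leq d$ is annihilated, so the first survivor is $k=2d+2$ with the single term $a=b=d+1$. This buys you a uniform treatment of all $d$, an explicit leading constant, and in fact a \emph{sharper} conclusion: the true order for DCCA($d$) is $j^{2H+2G-4d-4}$, which coincides with the stated $j^{2H+2G-8}$ only at $d=1$. That is consistent with (and explains) the paper's one-sided bound for $d>1$, and it shows the displayed equality in the proposition should really be read as an upper bound on the order once $d>1$. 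Since the downstream use (Proposition~\ref{prop:semi}) only needs decay at least as fast as $j^{-4}$, either version suffices.

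Two caveats. First, your "direct inspection" claim that $\binom{2H}{2d+2}$ does not vanish on $[1/2,1)$ is false at the endpoint: for $H=1/2$, $d=1$ one has $\binom{1}{4}=0$, reflecting that $|t-s|$ is an exact degree-one polynomial on non-adjacent windows and is annihilated entirely (independent increments of ordinary Brownian motion). This only makes the decay faster, so it is harmless for the proposition's purpose, but $g(H,G)$ does degenerate there. Second, your expansion is centred at $(j-1)m$ and tacitly assumes $n\asymp m$; recovering the symmetric prefactor $n^{H+G}m^{H+G}$ and keeping the Taylor tail at relative size $\mathcal{O}(1/j)$ requires $\max(n,m)/(jm)$ to be small, which is the uniformity issue you already flag and is no worse than the level of rigour in the paper's own proof.
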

\normalsize
\begin{proof}
Define $Q_d = I - \mathbb{P}_d$ which is the orthogonormal projection onto the complement of the polynomials of degree $d$ in $\mathbb{R}^n$.
One may show by explicit calculation\footnote{Maple sheets are available for download at \url{http://www.user.tu-berlin.de/blythed/maple_DCCA}} of the corresponding terms of the Taylor expansion of
$\text{trace}(Q_1 \Sigma_H Q_1 \Sigma_G^\top)$ (this formula is calculated in Section~\ref{sec:derivation}) that the order is correct when $d=1$; here $\Sigma_H$ refers to the univariate covariance matrix
of the terms in the first and $j^{th}$ window of size $n$.
The extension to higher order detrending ($d>1$) is straightforward.
When $H=G$, then we have:
\begin{equation}  
\text{cov}(F^2_{1,X_1,X_2}(n),F^2_{j,X_1,X_2}(m)) =  \text{trace}(Q_d \Sigma_H Q_d \Sigma_H^\top)
\end{equation}
But $\text{trace}(Q_d \Sigma_H Q_d \Sigma_H^\top) < \text{trace}(Q_1 \Sigma_H Q_1 \Sigma_H^\top)$, since $\text{trace}(Q_d \Sigma_H Q_d \Sigma_H^\top)$ is
the Froebenius norm of $Q_d \Sigma Q_d$ and $Q_d$ projects to a subspace of the orthogonal complement of the polynomials of degree up to $d$. 
Moreover using the Cauchy-Schwarz inequality for the inner-product $\text{trace}(A,B^\top)$ on matrices, the result follows for $H \neq G$.
Thus, for higher degree polynomial detrending, the order of the covariance function is less than or equal to that of DCCA(1). 
\end{proof}

\begin{Prop}
$(\frac{\sqrt{[N/n_1]}}{n_1^{H+G}} F^2_{X_1,X_2}(n_1),\dots, \frac{\sqrt{[N/n_r]}}{n_r^{H+G}}F^2_{X_1,X_2}(n_r))$  $\rightarrow \mathcal{N}(0,\Gamma(G,H))$ as $[N/n_i]\rightarrow \infty$. where $\Gamma(G,H)$ is a covariance matrix which does not
depend on $n_i$ or $N$ for large $n_i$ and $N/n_i$. \label{prop:semi}
\end{Prop}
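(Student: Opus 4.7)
The plan is to prove the joint CLT in three steps: reduce to normalized sums of stationary weakly-dependent terms, establish marginal Gaussian convergence via the quadratic-form structure, and lift to joint convergence by Cram\'er--Wold, identifying $\Gamma(H,G)$ via self-similarity.

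\textbf{Setup and variance scaling.} Under $\mathcal{H}_0$ one has $X_1 \perp X_2$ and $\mathbb{E}[F^2_{j,X_1,X_2}(n_i)] = 0$ by Proposition~\ref{prop:expectation}. Setting $Z_j^{(i)} := F^2_{j,X_1,X_2}(n_i)/n_i^{H+G}$, the $i$-th coordinate of the statistic is
\[
\frac{\sqrt{[N/n_i]}}{n_i^{H+G}}\, F^2_{X_1,X_2}(n_i) \;=\; \frac{1}{\sqrt{[N/n_i]}}\sum_{j=1}^{[N/n_i]} Z_j^{(i)}.
\]
Stationarity of $(Z_j^{(i)})_j$ follows from the stationary-increments structure of fBm after degree-$d$ detrending. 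Proposition~\ref{lem:cov_order} with $n=m=n_i$ then gives $|\text{cov}(Z_1^{(i)}, Z_j^{(i)})| = O(j^{2H+2G-8})$, which is summable because $2H+2G - 8 < -4$ for $H,G \in [1/2,1)$. Hence $\text{var}\bigl(\tfrac{1}{\sqrt{[N/n_i]}}\sum_j Z_j^{(i)}\bigr) \to \Gamma_{ii}(H,G) < \infty$ as $N/n_i \to \infty$, confirming the chosen normalization and fixing the diagonal of $\Gamma$.

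\textbf{Marginal and joint CLT.} Under $\mathcal{H}_0$ each $F^2_{j,X_1,X_2}(n) = \tfrac{1}{n}(QX_1^{(j)})^\top QX_2^{(j)}$ is a bilinear form in independent Gaussian vectors, so stacking $Y := (X_1^\top,X_2^\top)^\top$ makes $\sum_j Z_j^{(i)}$ a symmetric quadratic form $Y^\top A^{(i)} Y$ in a centered Gaussian vector, i.e.\ an element of the second Wiener chaos. Marginal convergence to a Gaussian follows from the CLT for such quadratic forms (as used by Bardet et al.~\cite{DFA_asymp} for DFA), whose hypothesis is exactly the covariance summability just established. For joint convergence I would invoke Cram\'er--Wold: any linear combination $\sum_i a_i \tfrac{\sqrt{[N/n_i]}}{n_i^{H+G}} F^2_{X_1,X_2}(n_i)$ is again a single symmetric quadratic form in $Y$, so the same theorem yields joint Gaussianity. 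The off-diagonal entry $\Gamma_{ik}$ is then the limit of $\tfrac{\sqrt{[N/n_i][N/n_k]}}{n_i^{H+G} n_k^{H+G}}\text{cov}(F^2_{X_1,X_2}(n_i), F^2_{X_1,X_2}(n_k))$; substituting the trace formula~\eqref{eq:covariance_fun} and using the self-similarity $X_\ell(ct) \stackrel{d}{=} c^{H_\ell} X_\ell(t)$ rescales the integrand so the limit depends only on the ratio $n_i/n_k$ and on $(H,G,d)$, giving the required independence of $\Gamma$ from the absolute scales.

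\textbf{Main obstacle.} The most delicate point is the cross-scale covariance limit. Windows of different sizes $n_i \neq n_k$ partition $[1,N]$ into non-nested, partially overlapping segments, so the cross-scale covariance of the $F^2$-coefficients does not factor into a single stationary convolution; one must pass from the trace form~\eqref{eq:covariance_fun} to a scale-invariant double integral over the unit square via self-similarity, and uniformly control the partial-overlap contributions so that they sum to a finite, well-defined limit independent of $n_i$ and $N$. Once this cross-scale covariance has been identified, the second-Wiener-chaos CLT together with Cram\'er--Wold finishes the proof by routine computation.
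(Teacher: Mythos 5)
Your proposal is correct and follows essentially the same route as the paper: both reduce the statistic to a normalized sum of stationary windowed coefficients, invoke a central limit theorem for (quadratic) functionals of a stationary Gaussian vector whose applicability is verified through the covariance decay rate of Proposition~\ref{lem:cov_order}, and pass to the joint limit via the Cram\'er--Wold device, with self-similarity fixing the dependence of $\Gamma$ on $(H,G)$ alone. Your explicit flagging of the cross-scale covariance limit as the delicate step is a point the paper glosses over, but it does not change the underlying argument.
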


\begin{proof}
We need the following two lemmas: \newline

\noindent \emph{Lemma 1}
$E_1^d = E_j^d$ for any $d$ (detrending degree) and $r \geq 1$. See Section~\ref{sec:derivation} for the definition of $E_j^d$. \label{lem:invar}
\begin{proof}

This may be proven directly by looking at the binomial expansion of the coefficients of the vector $(((r-1)n+1)^d,((r-1)n+2)^d,\dots,(rn)^d)$.
\end{proof}

\noindent \emph{Lemma 2} $F^2_{k,X_1,X_2}(n_i)$ is stationary for any $n_i$.

\begin{proof}

The proof that $F^2_{k,X_1,X_2}(n_i)$ is stationary is identical for the proof for DFA in Bardet et al. \cite{DFA_asymp} using our Lemma 1. No modifications are necessary for polynomial trending.
Likewise the extension to the proof for the covariance is identical.
\end{proof}

Define $Z_i = \left(X_1^{(j)}-\mathbb{P}_d\left(X_1^{(j)}\right)\right)_i$ 
and $W_i = \left(X_2^{(j)}-\mathbb{P}_d\left(X_2^{(j)}\right)\right)_i$. Then by the second lemma $(Z_1,Z_2,\dots,Z_n,W_1,\dots,W_n)$ is a stationary Gaussian vector in $j$.
We may then use conditions on functions of a Gaussian Vector sufficient for a central limit theorem (\cite{CLT_GaussianVector}, Theorem 2). It is possible to show that in the large n limit, the elements of the covariance matrix required by these conditions, which is proportional to  $(I-P)\Sigma_{1,j}(1-P)$, have order $1/j^2$, 
and thus the vector $(Z_1,\dots,Z_n,W_1,\dots,W_n)$ satisfies the conditions\footnotemark[5] (since if the order is slower than this then the order of the DFA coefficients $\text{cov}(F_{1,X_1,X_1}(n),F_{j,X_1,X_1}(n))$ is slower than $j^{-4}$ which is impossible by Proposition~\ref{lem:cov_order}). The generalization to the multivariate case is straightforward by way of the Cramer Wold device \cite{cramer1936some}.  
\end{proof}

\begin{Prop}
\label{prop:CLT}
Under the null hypothesis of zero correlation for two channels of the multivariate fractional Brownian motion, with Hurst parameters $H,G \in (0,1)$, the $\rho_{DCCA}$ coefficients $(\sqrt{[N/n_1]}\rho_{DCCA}(n_1), ... , \sqrt{[N/n_r]}\rho_{DCCA}(n_r))$ obey a multivariate central limit theorem as $n_1,\dots,n_r \rightarrow \infty$ and $[N/n_i] \rightarrow \infty$ to a limit which is independent of $n_i$ and $N$ and depends only on $H$ and $G$.
\end{Prop}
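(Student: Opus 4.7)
The plan is to derive the joint CLT for the $\rho_{DCCA}$ coefficients by applying the multivariate delta method to the joint distribution of the DCCA \emph{and} DFA coefficients across scales, using Proposition~\ref{prop:semi} as the basic Gaussian input.

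First, I would assemble the full vector of detrended statistics relevant to the $\rho_{DCCA}$ coefficients: for each scale $n_i$, take the triple $(F^2_{X_1,X_2}(n_i),\,F^2_{X_1,X_1}(n_i),\,F^2_{X_2,X_2}(n_i))$. Proposition~\ref{prop:semi} already handles the joint asymptotic normality of the cross-scale DCCA vector after rescaling by $\sqrt{[N/n_i]}/n_i^{H+G}$. The same stationarity and central-limit argument used in that proposition (combined with the classical CLT for the DFA coefficients of Bardet et al.~\cite{DFA_asymp}) applies verbatim to the triple above, since the vector $(Z_1,\ldots,Z_n,W_1,\ldots,W_n)$ of detrended windowed values is a stationary Gaussian vector in $j$ and the same covariance-order bound kicks in. This yields that the appropriately rescaled triple-vector is jointly asymptotically normal with a covariance matrix $C$ depending only on $H$ and $G$.

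Next, I would apply the multivariate delta method to the smooth map $h$ taking each triple $(a,b,c)$ to $a/\sqrt{bc}$. By Proposition~\ref{prop:expectation}, under the null the numerator $F^2_{X_1,X_2}(n_i)$ has mean zero, while the DFA coefficients $F^2_{X_j,X_j}(n_i)$ concentrate around positive constants of the form $f(H)\,n_i^{2H}$ and $f(G)\,n_i^{2G}$. Evaluating the Jacobian of $h$ at the mean vector, as in Equations~(\ref{eq:dis_invariance_1})--(\ref{eq:dis_invariance_3}), the $\partial/\partial b$ and $\partial/\partial c$ components vanish (they are proportional to the mean of the numerator, which is zero under $\mathcal H_0$), so only the $1/\sqrt{bc}$ term survives. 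Consequently the limiting covariance of $\sqrt{[N/n_i]}\rho_{DCCA}(n_i)$ reduces to the expression in Equation~(\ref{eq:second_order_stats}), involving only second-order statistics of the DCCA numerators normalized by the deterministic DFA means.

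Finally I would verify the two structural claims: independence of the limit from $n_i$ and $N$ follows because Proposition~\ref{prop:semi} gives a rescaling $n_i^{H+G}$ in the numerator and the DFA means contribute a matching $n_i^H n_i^G$ in the denominator, so the scales cancel; and dependence only on $H,G$ follows from the analogous scaling in $\Gamma(G,H)$. The main obstacle I anticipate is the justification that the delta method applies uniformly across the joint vector rather than marginally: one has to confirm that the joint CLT of Proposition~\ref{prop:semi} extends to the enlarged vector including the DFA coordinates, and that the degenerate-Jacobian cancellation is genuinely a consequence of the null hypothesis and not of some finite-sample artifact. Handling the enlargement cleanly requires checking the Arcones-type conditions for each added coordinate and invoking the Cramér--Wold device one more time to glue everything into a single joint limit.
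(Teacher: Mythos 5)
Your proposal is correct and follows essentially the same route as the paper: the paper's proof of Proposition~\ref{prop:CLT} is exactly an application of the multivariate delta method to the stacked vector of triples $(F^2_{X_1,X_2}(n_i),F^2_{X_1,X_1}(n_i),F^2_{X_2,X_2}(n_i))$, using Proposition~\ref{prop:semi} together with the DFA limit of Bardet et al., with the Jacobian degeneracy under $\mathcal{H}_0$ (Equations~\eqref{eq:dis_invariance_1}--\eqref{eq:dis_invariance_3}) yielding the covariance of Equation~\eqref{eq:second_order_stats}. Your explicit attention to extending the joint CLT to the enlarged vector via Cram\'er--Wold is a point the paper leaves implicit, but it is the same argument.
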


\begin{proof}
This then follows from the multivariate delta method levering the results of the central limit theorem for DFA \cite{DFA_asymp} and DCCA (our Proposition~\ref{prop:semi}). 
The limiting covariance matrix is given by Equation~\eqref{eq:second_order_stats}.
\end{proof}

\begin{Prop}
\label{prop:worst_case}
The worst case covariance calculated in Algorithm~\ref{alg:wccov} correctly upper bounds the probability of a type I error.
\end{Prop}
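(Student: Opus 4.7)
The plan is to combine the Gaussian comparison inequality of Slepian with the dominating structure guaranteed by Algorithm~\ref{alg:wccov}. By Proposition~\ref{prop:CLT}, under $\mathcal{H}_0$ with true Hurst parameters $(H,G) \in [H_{low},H_{high}] \times [G_{low},G_{high}]$, the vector $\mathbf{Z} = (\sqrt{[N/n_1]}\rho_{DCCA}(n_1),\ldots,\sqrt{[N/n_r]}\rho_{DCCA}(n_r))$ is asymptotically $\mathcal{N}(0,\Sigma(H,G))$ for a covariance matrix $\Sigma(H,G)$ depending only on the Hurst exponents. The rejection region $A(\theta^*)$ produced by Algorithm~\ref{alg:critical_region} is a union of $2\binom{r}{\kappa}$ events of the form $\{z_i > \theta^*\sqrt{C_{ii}},\, i \in S\}$ or $\{z_i < -\theta^*\sqrt{C_{ii}},\, i \in S\}$, indexed by size-$\kappa$ subsets $S \subseteq \{1,\ldots,r\}$. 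The goal is then to establish $\Pr_{\Sigma(H,G)}(\mathbf{Z} \in A(\theta^*)) \leq p_{level}$ in this asymptotic regime.

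First I would apply a union bound over the $2\binom{r}{\kappa}$ constituent events. By centering symmetry the two sign orientations contribute equal probability, while the log-spacing inequality stated in Section~\ref{sec:method} dominates the probability for any subset $S$ by that of the subset $\{1,\ldots,\kappa\}$ of finest scales. This reduces the required estimate to
\[
\Pr_{\Sigma(H,G)}(\mathbf{Z} \in A(\theta^*)) \leq 2\binom{r}{\kappa}\Pr_{\Sigma(H,G)}\bigl(Z_i > \theta^*\sqrt{C_{ii}},\; i=1,\ldots,\kappa\bigr).
\]

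Next I would prove the core Slepian-type bound
\[
\Pr_{\Sigma(H,G)}\bigl(Z_i > \theta^*\sqrt{C_{ii}},\; i=1,\ldots,\kappa\bigr) \leq \Pr_{\mathcal{N}(0,C)}\bigl(Y_i > \theta^*\sqrt{C_{ii}},\; i=1,\ldots,\kappa\bigr),
\]
by standardizing both sides: writing $\tilde{Z}_i = Z_i/\sqrt{\Sigma_{ii}(H,G)}$ and $\tilde{Y}_i = Y_i/\sqrt{C_{ii}}$, the left event becomes $\{\tilde{Z}_i > \theta^*\sqrt{C_{ii}/\Sigma_{ii}(H,G)}\}$ while the right becomes $\{\tilde{Y}_i > \theta^*\}$. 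The construction in Algorithm~\ref{alg:wccov} ensures $C_{ii} \geq \Sigma_{ii}(H,G)$, so the standardized thresholds for $\tilde{Z}$ dominate those for $\tilde{Y}$ and this step only shrinks the left-hand probability. Algorithm~\ref{alg:wccov} also forces the pairwise correlations of $\tilde{Y}$ to pointwise dominate those of $\tilde{Z}$, so Slepian's inequality, in the upper-tail form that centered Gaussians with equal variances have larger simultaneous upper-exceedance probabilities when correlations are larger, yields the desired comparison. Combining with the calibration of $\theta^*$ in Algorithm~\ref{alg:critical_region} then gives $\Pr_{\Sigma(H,G)}(\mathbf{Z} \in A(\theta^*)) \leq p_{level}$.

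The main obstacle is the rigorous justification of the log-spacing inequality that is asserted but not proved in Section~\ref{sec:method}; this requires monotonicity of the limiting covariance expression derived in Section~\ref{sec:derivation} in the scale ratio $n_i/n_j$, which is plausible but nontrivial. A secondary subtlety is that Proposition~\ref{prop:CLT} is an asymptotic statement, so the bound is strictly valid only in the limit $n_i \to \infty$ and $N/n_i \to \infty$; additionally, the maxima defining $C$ in Algorithm~\ref{alg:wccov} must be achieved (or uniformly approximated) over the parameter box, which is handled in practice by the tabulation described in the ``Calculating asymptotic covariance'' subsection.
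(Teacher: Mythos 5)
Your proof is correct and rests on the same idea as the paper's: the paper's entire argument is the single sentence that choosing the maximum correlations and variances entrywise (together with the observation that the coefficients are positively correlated) guarantees the bound, which is precisely the Slepian-type Gaussian comparison you spell out. Your version is considerably more rigorous --- you name the comparison theorem, handle the standardization needed because the variances are unequal, and fold in the union-bound/log-spacing reduction of the rejection region that the paper treats separately in the main text rather than in this proposition's proof.
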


\begin{proof}
Since all $\rho_{DCCA}$ coefficients are positively correlated with each other (trace of a positive semidefinite matrix), then by choosing the maximum correlation and variances at each position in the matrix guarantees the upper bound.
\end{proof}

\begin{Prop}
\label{prop:Taqqu}
The low frequencies of $X_j(t)$ tend to those of a fBm; more formally: as $n \rightarrow \infty$ then we have that:
 $L_j(n)^{-1/2} X_j(nt) \xrightarrow{d} \sigma B_H(nt)$ 
with $\sigma >0$.
\end{Prop}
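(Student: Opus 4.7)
The plan is to verify that the semi-parametric model introduced in the Appendix places us squarely within the hypotheses of Taqqu's classical weak invariance principle \cite{taqqu1975weak}, which then yields the result essentially as a citation. Recall that the model assumes $Y_j(t) = Y_j^a(t) + Y_j^b(t)$ where $Y_j^a$ is a stationary linear process with autocovariance $\gamma_j(k) \sim L_j(k) k^{2H-2}$ for $L_j$ slowly varying at infinity and $H \in [1/2,1)$, while $Y_j^b$ is a deterministic polynomial trend. Since the proposition concerns the stochastic low-frequency behavior of the aggregated process, I would first peel off the deterministic component: the polynomial trend, after rescaling by $L_j(n)^{-1/2}$, either vanishes in the limit (when its growth is slower than $n^H$) or is removed by the detrending operator $\mathbb{P}_d$ used in the DCCA analysis. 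Hence the argument reduces to the stationary linear component $X_j^a(t) = \sum_{i=1}^t Y_j^a(i)$.

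Next, I would pin down the correct normalization via Karamata's theorem for regularly varying functions. Summing the autocovariance kernel $\gamma_j(k) \sim L_j(k) k^{2H-2}$ over the square $[1,n]^2$ gives
\begin{equation*}
\operatorname{Var}(X_j^a(n)) \sim c_H\, L_j(n)\, n^{2H}, \qquad c_H = \frac{1}{H(2H-1)},
\end{equation*}
for $H \in (1/2,1)$, with the standard linear-growth modification at $H = 1/2$. This identifies $\sigma = \sqrt{c_H} > 0$ as the correct constant and confirms that $L_j(n)^{-1/2} n^{-H}$ is the natural scaling. The self-similarity identity $B_H(nt) \stackrel{d}{=} n^H B_H(t)$ then shows that the scaling in the statement of the proposition is equivalent to the more familiar functional form $L_j(n)^{-1/2} n^{-H} X_j(\lfloor nt \rfloor) \xrightarrow{d} \sigma B_H(t)$ in $D[0,1]$.

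Finally, I would invoke Taqqu's theorem directly: for a stationary linear process with a regularly varying autocovariance of the above form, the rescaled partial-sum process converges weakly in the Skorokhod topology on $D[0,1]$ to $\sigma B_H$. Taqqu's argument establishes convergence of the finite-dimensional distributions by reducing them to Gaussian (or Hermite-expanded, in the non-Gaussian case) limit statements for linear functionals of the innovation sequence, and proves tightness via moment bounds on the increments implied by the regularly-varying covariance structure.

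The main verification step --- rather than a genuine obstacle --- is ensuring that all the hypotheses of Taqqu's theorem (stationarity, linear representation, slowly varying $L_j$, $H \in [1/2,1)$) hold under the semi-parametric class that the paper adopts; this is visible by inspection of the model. A secondary subtlety is that the proposition is stated pointwise in $t$ whereas the natural object of convergence is a process in $D[0,1]$; I would present the result in the functional form and remark that the pointwise statement is an immediate marginal consequence. Because the result is a straightforward transcription of Taqqu's theorem into the notation of the paper, no new analytic input is required.
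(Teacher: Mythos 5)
Your proposal is correct and follows essentially the same route as the paper, whose entire proof of this proposition is the single line ``See \cite{taqqu1975weak} for the proof'' --- i.e.\ a direct appeal to Taqqu's weak invariance principle. The extra detail you supply (verifying the hypotheses of the semi-parametric class, separating the deterministic trend, and fixing the normalization via Karamata) is a fuller write-up of the same citation-based argument rather than a different method.
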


\begin{proof}
See \cite{taqqu1975weak} for the proof.
\end{proof}

\begin{Prop}
\label{prop:non_gauss}
Assuming independence of $Y_1$ and $Y_2$ and assume that the $\text{Pr}(\text{sup}_{s\in[0,t]}X_j(s) \geq x) \leq \frac{C t^H}{x}$ as is the 
case for fractional Brownian motion (\cite{vardar2009results} (Theorem 2.2)) and assume that the H{\"o}lder exponents of $X_1$ and $X_2$ are greater than or equal to the exponents of fractional Brownian motion, then as $n \rightarrow \infty$ and $[N/n] \rightarrow \infty$ then $\frac{1}{n^{H+G}}F^2_{X_1,X_2}(n) \xrightarrow{d} \frac{\sqrt{L_1(n) L_2(n)}}{n^{H+G}}F^2_{X_1^B,X_2^B}(n)$, where
the $X_j^B$ are fractional Brownian motions with the same Hurst parameters as $Y_1$ and $Y_2$.  \end{Prop}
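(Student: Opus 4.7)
The plan is to transfer the functional weak convergence supplied by Proposition~\ref{prop:Taqqu} through the continuous DCCA windowed-average functional via the continuous mapping theorem, and then to manage the joint limits $n\to\infty$, $[N/n]\to\infty$ by an approximation/diagonal argument exploiting the fast decay of the per-window covariances. First, for each fixed $T>0$, I would introduce the rescaled processes
\begin{equation*}
\widetilde X_j^{(n)}(t) := n^{-H_j}L_j(n)^{-1/2}X_j(\lfloor nt\rfloor),\qquad t\in[0,T],
\end{equation*}
with $H_1=H$, $H_2=G$. Proposition~\ref{prop:Taqqu} supplies $\widetilde X_j^{(n)}\xrightarrow{d}\sigma_j B_{H_j}$ in $C([0,T])$. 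The assumed tail bound $\Pr(\sup_{s\in[0,t]}X_j(s)\geq x)\leq C\,t^H/x$ together with the Hölder exponent matching that of fBm supplies the modulus-of-continuity estimate required to upgrade Taqqu's finite-dimensional limit to full functional convergence in $C([0,T])$. The independence of $Y_1,Y_2$ carries over to $X_1,X_2$, so the pair converges jointly to independent $(\sigma_1 B_H,\sigma_2 B_G)$.

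Second, I would rewrite the normalized DCCA statistic as a continuous functional of the rescaled paths:
\begin{equation*}
\frac{F^2_{X_1,X_2}(n)}{n^{H+G}\sqrt{L_1(n)L_2(n)}} \;=\; \Phi_{\lfloor T\rfloor}\!\bigl(\widetilde X_1^{(n)},\widetilde X_2^{(n)}\bigr),\qquad T=N/n,
\end{equation*}
where
\begin{equation*}
\Phi_m(\phi,\psi) := \frac{1}{m}\sum_{k=1}^{m}\int_{k-1}^{k}(\phi-P_d^{(k)}\phi)(t)\,(\psi-P_d^{(k)}\psi)(t)\,dt,
\end{equation*}
and $P_d^{(k)}$ is the $L^2$-projection onto polynomials of degree $\leq d$ on $[k-1,k]$. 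Each $\Phi_m$ is continuous on $C([0,m])^2$ in the uniform norm, since it is a polynomial in the values of $\phi,\psi$ minus projections onto a fixed finite-dimensional subspace. For any fixed $T$ the continuous mapping theorem then gives $\Phi_{\lfloor T\rfloor}(\widetilde X_1^{(n)},\widetilde X_2^{(n)})\xrightarrow{d}\Phi_{\lfloor T\rfloor}(\sigma_1 B_H,\sigma_2 B_G)$, and by fBm self-similarity the right-hand side coincides in law with $n^{-(H+G)}F^2_{X_1^B,X_2^B}(n)$ evaluated on fBms of length $N=Tn$.

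The hard part will be coupling this to the regime $T=[N/n]\to\infty$. I would handle it by a truncation/diagonal argument: the covariance estimate of Proposition~\ref{lem:cov_order} (order $j^{2H+2G-8}$, which is summable for $H,G<1$) shows that the per-window contributions to $\Phi_T$ are square-summable, so truncating to the first $M$ windows yields an $L^2$-approximation with error tending to zero uniformly in $n$ as $M\to\infty$. Letting $n\to\infty$ first for fixed $M$, then $M\to\infty$, and choosing a diagonal subsequence identifies the limits on both sides with the same Gaussian law, giving the claimed equality of asymptotic distributions. The remaining ingredients — the functional upgrade of Taqqu, the continuity of $\Phi_m$, and the self-similarity rescaling — are routine once tightness is secured.
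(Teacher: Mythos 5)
Your route is genuinely different from the paper's: you push Proposition~\ref{prop:Taqqu} through a continuous-mapping argument on $C([0,T])^2$, whereas the paper compares the DCCA coefficients of $X_1,X_2$ \emph{pathwise} with those of a $k$-subsampled version ($k=n^\delta$), using the Cauchy--Schwarz inequality, the strong convexity of the least-squares fit, the H\"older continuity of the paths and the supremum tail bound to get an explicit polynomial-in-$n$ rate for the difference, then Boole's inequality over the $[N/n]$ windows, and finally a portmanteau/triangle-inequality step. The continuous-mapping idea is attractive and the continuity of your $\Phi_m$ is fine (modulo the discrete-versus-continuous projection discrepancy, which the paper also has to handle via Riemann sums and which you dismiss as routine).

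The genuine gap is in your truncation/diagonal step, which is the crux of the joint limit $n\to\infty$, $[N/n]\to\infty$. Writing $m=[N/n]$, the $L^2$ distance between $\frac{1}{m}\sum_{j=1}^{m}F^2_j$ and the truncation $\frac{1}{M}\sum_{j=1}^{M}F^2_j$ is of order $M^{-1/2}+m^{-1/2}$ (each per-window term has nondegenerate variance and the covariances are summable), so it does \emph{not} tend to zero uniformly in $n$ for fixed $M$; it only vanishes when $M\to\infty$, at which point both the truncated and the full statistics collapse to the degenerate limit $0$. Your argument therefore only identifies both sides of the claimed convergence with the constant $0$, which makes the conclusion vacuous at the stated normalization and, more importantly, provides no rate: Proposition~\ref{prop:glue} needs the approximation to survive multiplication by $\sqrt{[N/n]}$, which is exactly what the paper's explicit bound $A\bigl(n^2(n^\delta)^{H-\phi}+n^2(n^\delta)^{G-\phi}\bigr)/n^{2H+2G+\delta}$ together with the constraint $N=o(\min(n^{(1-H)(1-\delta)+1},n^{(1-G)(1-\delta)+1}))$ delivers and your qualitative argument cannot. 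A secondary issue: you invoke Proposition~\ref{lem:cov_order} for the per-window covariances of the \emph{pre-limit} process $X_1,X_2$, but that proposition is proved for fractional Brownian motion, so using it there is circular without an additional argument transferring the covariance decay to the general semi-parametric class.
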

\begin{proof}

Out proof consists of three steps; firstly we show that the DCCA coefficients of the process given by subsampling $Y_1$ and $Y_2$ tend to those of fractional Brownian motion in the limit of subsampling, by using
Proposition~\ref{prop:Taqqu}.
Secondly we show that for a given ratio of subsampling to window size the DCCA coefficients of both the unsubsampled version and the subsampled version tend to each 
other. Thirdly we show that these imply that in the limit of window size, the DCCA coefficients tend to the law of the coefficients of fractional Brownian motion. 
Together these imply that in a certain limit, the DCCA coefficients obey a central limit theorem with the same limiting distribution as the central limit 
proven above for fractional Brownian motion.

Part I: 
We define the subsampled processes as $L_j(n)^{-1/2} X_j(nt)$.
It is easy to show that the DCCA coefficients of the subsampled process
 tend to those of fBm; this follows by Proposition~\ref{prop:Taqqu} for large window sizes.
 
 %This is implied by Theorem 30.2 of p.390 in ref. \cite{ProbMeasure}. If $X_n$ have moments of all orders, X is determined by its moments and $\text{lim}_n \mathbb{E}(X_n^r) = \mathbb{E}(X^r)$ 
%then $X_n \rightarrow_d X$. So use the definition of convergence in law to show that if $n_i \rightarrow \infty$ and $N/n_i \rightarrow \infty$ then the moments of the $\rho_{DCCA}$ coefficients tend to those of the Gaussian.
%

Part II: We have:
\small
\begin{equation}
F_{1,X_1,X_2}^2(n) = \frac{1}{n^2}\sum_{t=1}^n (X_1(t)- \mathbb{P}_d(X_1(t))) (X_2(t)- \mathbb{P}_d(X_2(t)))
\end{equation}
\normalsize
Let $\widehat{k}(X_j) = (X_j(k),X_j(2k),\dots,X_j(t))$ and let $f^1_n$ be the least squares poly. fit of degree $d$ to 
$X_1$ and $f^1_{n/k}$ to $\widehat{k}(X_j)$. Now we can use the Cauchy Schwarz inequality:
\scriptsize
\begin{eqnarray}
&& |F^2_{1,X_1,X_2}(n) - F^2_{1,\widehat{k}(X_1),\widehat{k}(X_2)}(n/k)|^2  \\
&=& | <X_1-f^1_n,X_2-f^2_n> \\ 
&-& <\widehat{k}(X_1)-f^1_{n/k},\widehat{k}(X_2)-f^2_{n/k} >|^2 \\
&=& | <X_1-f^1_n,X_2-f^2_n> - <\widehat{k}(X_1)-f^1_{n/k},X_2-f^2_{n} > \\
&+&  <\widehat{k}(X_1)-f^1_{n/k},X_2-f^2_{n} > \\
&-& <\widehat{k}(X_1)-f^1_{n/k},\widehat{k}(X_2)-f^2_{n/k} >|^2 \\
& \leq&  |<X_1-f^1_n-\widehat{k}(X_1)+f^1_{n/k}, X_2-f^2_n>|^2\\
 &+& |<X_2-f^2_n-\widehat{k}(X_2)+f^2_{n/k}, \widehat{k}(X_1)-f^1_{n/k}>|^2  \\
 &\leq&  |X_1-f^1_n-\widehat{k}(X_1)+f^1_{n/k}| \times | X_2-f^2_n| \\
 &+& |X_2-f^2_n-\widehat{k}(X_2)+f^2_{n/k}| \times | \widehat{k}(X_1)-f^1_{n/k}| \\
 &\leq& (|X_1-\widehat{k}(X_1)| + |f^1_n-f^1_{n/k}|) \times | X_2-f^2_n| \\
  &+& (|X_2-\widehat{k}(X_2)| + |f^2_n-f^2_{n/k}|) \times | \widehat{k}(X_1)-f^1_{n/k}|  \label{eq:C_S}
 \end{eqnarray}
\normalsize

The most problematic term which we need to bound here is the second term in each sum under the 
bracket.

It is possible to show, using Riemann sums, that, if $f^*$ is the poly. of degree $d$ which minimizes the generative
mean squared error to $X$ ($f^* = \text{min}_f \int_0^1 (f(nt)-X(nt))^2 $) then:

\begin{eqnarray*}
|X - f_n|^2 &=& \frac{1}{n}\sum_{i=1}^n (X(i) - f_n)^2 \\
&=& |X-f^*|^2 + \mathcal{O}(\text{max}((X-f^*)^2)/n^2)
\end{eqnarray*}

Thus since least squares is a convex and differentiable optimization
problem, we have that $f_n$ and $f^*$ are close to one another and
so too are $f_n$ and $f_{n/k}$.
I.e.:

\begin{eqnarray*}
|f_{n/k} - f_n|^2 &\leq& |f_{n/k} - f^*|^2  +  |f_{n} - f^*|^2 \\
&\leq& r_{n}q_k
\end{eqnarray*}

To obtain these rates $r_n$ and $q_k$ we need to obtain the parameter of strong convexity 
for the m.s.e. ( $\int_0^1 (f(nt)-X(nt))^2 dt $): the generative Hessian of the m.s.e. for the linear polynomials is proportional to:

\begin{eqnarray}
\partial_a \partial_a \int_0^1 (ant+b-X(nt))^2dt &=& 1/3n^2\\
\partial_a \partial_b \int_0^1  (ant+b-X(nt))^2dt&=& n \\
\partial_b \partial_a \int_0^1  (ant+b-X(nt))^2dt&=& n \\
\partial_b \partial_b \int_0^1  (ant+b-X(nt))^2dt&=& 1 \\
\end{eqnarray}

This implies that the smallest eigenvalue $\lambda_1$ of the Hessian matrix is a constant and greater than zero.
Similarly for the polynomials of degree $d$ we have a non-degenerate and symmetric Hessian whose
smallest eigenvalue is thus greater than zero.

Therefore \cite{nesterov2004introductory}: 
\begin{eqnarray*}
&& \int_0^1 (X_1(nt)-f^1_n(nt))^2 dt - \int_0^1(X_1(nt)-f^*(nt))^2 dt \\
&\geq& \lambda_1 (a_n-a^*)^2 +  (b_n-b^*)^2
\end{eqnarray*}

Here we have the difference of two mean squared errors on the interval $[0,n]$.
By assumption we have that with probability $1-\frac{C n^H}{n}$:
\small
\begin{eqnarray*}
\frac{1}{n}\sum_{i}^n ((a_n-a^*) i - (b_n-b^*))^2 
&\leq& \frac{n-1}{2}((a_n-a^*)^2 +  (b_n-b^*)^2) \\
&\leq& D\frac{n-1}{2}(\text{max}((X-f^*)^2)/n^2) \\
&\leq&  D' n
\end{eqnarray*}
\normalsize

Thus with probability $1-\frac{C (n/k)^H}{n/k}$:
\small
\begin{eqnarray*}
\frac{1}{n/k}\sum_{i}^{n/k} ((a_{n/k}-a^*) i - (b_{n/k}-b^*))^2 
&\leq&  D' {n/k}
\end{eqnarray*}
\normalsize

Moreover we have that $X_1$ and $X_2$ are almost surely H\"older continuous with exponent $H-\phi$ for any $\phi<0$ \cite{decreusefond1999stochastic}
Thus there exists $E$ s.t.:

\normalsize
\begin{eqnarray*}
|X_1-\widehat{k}(X_1)|^2 &=& \frac{1}{n} \sum (X_1(i) - X_i(mod(i,k)+k))^2 \\
&\leq& E k^{2H-2\phi}
\end{eqnarray*}
and with probability $1-\frac{C n^H}{n}$:
\begin{eqnarray*}
|X_1-f^1_n| &\leq& \text{sup}|X_1| \\
&\leq& n
\end{eqnarray*}
Thus we have, with probability at least $1-\text{max}\left(\frac{Cn^H}{n} \frac{C (n/k)^H}{C{n/k}},\frac{C n^G}{n} \frac{C(n/k)^G}{{n/k}}\right)$ and a constant $A$:
\small
 \begin{eqnarray*}
  && \frac{1}{n^{2H+2G}}|F^2_{j,X_1,X_2}(n) - F^2_{j,\widehat{k}(X_1),\widehat{k}(X_2)}(n/k)|^2 \\ 
  &\leq& A \frac{n (n/k) k^{H-\phi} +n  (n/k) k^{G-\phi}}{n^{2H+2G}} 
\end{eqnarray*}
\normalsize
in probability.

Thus let $0<\delta < 1$ then, with probability at least $1-C' \text{max}\left( n^{(H-1)(1-\delta)}, n^{G-1)(1-\delta)}\right)$, letting $k = n^\delta$:

 \begin{eqnarray*}
  && \frac{1}{n^{2H+2G}}|F^2_{j,X_1,X_2}(n) - F^2_{j,\widehat{k}(X_1),\widehat{k}(X_2)}(n/k)|^2 \\ 
  &\leq& A \frac{ n^2 (n^\delta)^{H-\phi} + n^2 (n^\delta)^{G-\phi}}{n^{2H+2G+\delta}} \\
 \end{eqnarray*}

And this final expression tends to zero regardless of which values $H$ and $G$ take on.
This implies that $F^2_{j,X_1,X_2}(n)$ tends in probability to the $k$-subsampled version as $n,k \rightarrow \infty$.
Since $F^2_{j,X_1,X_2}$ is a stationary sequence, this convergence holds for all $j$.
Thus we can use Boole's equality to show that $F^2_{X_1,X_2} \rightarrow F^2_{\widehat{k}(X_1),\widehat{k}(X_2)}$ with constraints on $N$.

I.e.:
  \begin{eqnarray}
    \text{Pr}\bigg( \frac{1}{n^{2H+2G}}|F^2_{X_1,X_2}(n) - F^2_{\widehat{k}(X_1),\widehat{k}(X_2)}(n/k)|^2 \\ 
  \leq BA \frac{ n^2 (n^\delta)^{H-\phi} + n^2 (n^\delta)^{G-\phi}}{ n^{2H+2G+\delta}} \bigg)   \label{eq:converge_multiple}\\
  \geq 1-[N/n]\frac{2}{\pi}\text{max}\left( n^{(H-1)(1-\delta)}, n^{(G-1)(1-\delta)}\right)
 \end{eqnarray}
 
Thus: provided $N = o(\text{min}(n^{(1-H)(1-\delta)+1},n^{(1-G)(1-\delta)+1}))$ then the convergence in probability holds.

PART III: to show convergence in distribution we use the portmanteau theorem \cite{klenke2008probability}. Let $s$ be a bounded continuous function
and define $S(X_1,X_2) = \frac{1}{n^{H+G}}F^2_{X_1,X_2}(n)$ and $S_{X_1^B,X_2^B}(n)  = \frac{L(n)}{n^{H+G}}F^2_{X_1^B,X_2^B}(n)$ and $k = n^\delta$.
 \begin{multline}
   |\mathbb{E}(s(S_{X_1,X_2}(n)) - \mathbb{E}(s(S_{X_1^B,X_2^B}(n)))| \\
   \leq |\mathbb{E}(s(S_{X_1,X_2}(n)) -  \mathbb{E}(s(S_{\widehat{k}(X_1),\widehat{k}(X_2)}(n/k)))| \\ + | \mathbb{E}(s(S_{\widehat{k}(X_1),\widehat{k}(X_2)}(n/k)))-\mathbb{E}(s(S_{X_1^B,X_2^B}(n/k)))| \\
   +  | \mathbb{E}(s(S_{X_1^B,X_2^B}(n/k)))-\mathbb{E}(s(S_{X_1^B,X_2^B}(n)))|
 \end{multline}

The first and the last lines converge in virtue of Part II (conv. in prob. implies conv. in dist.) and the second line converges in virtue of Part I.
 \end{proof}

\begin{Prop}
\label{prop:DFA_non_gauss}
Under the assumptions of Proposition~\ref{prop:non_gauss}, as $n \rightarrow \infty$ and $[N/n] \rightarrow \infty$ then $\frac{1}{n^{H+G}}F^2_{X_j,X_j}(n) \xrightarrow{d} \frac{L(n)}{n^{H+G}}F^2_{X_j^B,X_j^B}(n)$, where
the $X_j^B$ is a fractional Brownian motion with the same Hurst parameter as $Y_i$.
\end{Prop}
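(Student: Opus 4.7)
The plan is to observe that $F^2_{X_j,X_j}(n)$ is formally the DCCA coefficient $F^2_{X_1,X_2}(n)$ evaluated at $X_1=X_2=X_j$, so that the entire argument of Proposition~\ref{prop:non_gauss} applies \emph{mutatis mutandis} with the two components identified. The independence hypothesis of Proposition~\ref{prop:non_gauss} is not needed here, because it plays no role in any of the three technical ingredients (Taqqu invariance for the subsampled process, a Cauchy--Schwarz bound between unsubsampled and subsampled coefficients, and a portmanteau argument); it was included in the statement of that proposition only to pin down the identity of the bivariate limiting law. In the present univariate setting the limit is automatic: the single component $X_j$ converges in finite-dimensional distribution to a fractional Brownian motion $X_j^B$ with the same Hurst parameter, and the DFA functional is continuous in an appropriate sense.

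Concretely I would proceed in three steps mirroring the proof of Proposition~\ref{prop:non_gauss}. In Part~I I would use Proposition~\ref{prop:Taqqu} to conclude that $L_j(n)^{-1/2}X_j(nt)\xrightarrow{d}\sigma B_H(nt)$, and hence that the DFA coefficient of the $k$-subsampled process $\widehat{k}(X_j)$ at large window size converges in distribution (after the $L_j(n)$ rescaling) to the DFA coefficient of fBm. In Part~II I would repeat the Cauchy--Schwarz computation with $X_1=X_2=X_j$: setting $f^j_n$ for the least-squares degree-$d$ polynomial fit to $X_j$ on length $n$, one obtains an upper bound of the form
\[
\frac{1}{n^{2H+2G}}\left|F^2_{X_j,X_j}(n)-F^2_{\widehat{k}(X_j),\widehat{k}(X_j)}(n/k)\right|^2 \le 2\bigl(|X_j-\widehat{k}(X_j)|+|f^j_n-f^j_{n/k}|\bigr)\cdot |X_j-f^j_n|,
\]
and then the same strong-convexity estimate for the polynomial fit, the $(H-\phi)$-H{\"o}lder bound on $X_j$, and the supremum tail bound $\mathrm{Pr}(\sup_{s\in[0,t]}X_j(s)\ge x)\le Ct^H/x$ assumed in the hypothesis show that this difference tends to zero in probability once $k=n^\delta$ and $N=o(n^{(1-H)(1-\delta)+1})$. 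In Part~III the portmanteau theorem combines the Part~I and Part~II estimates in the usual way and delivers the desired convergence in distribution against any bounded continuous test function.

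The main obstacle, just as in Proposition~\ref{prop:non_gauss}, is the Part~II estimate: one needs strong convexity of the least-squares problem to control $|f^j_n-f^j_{n/k}|$, together with the tail bound on the supremum and the H{\"o}lder regularity to control $|X_j-f^j_n|$. Both of these are required now only of a single process rather than two, so the constraint on $N$ is simpler (governed by the single exponent $H$ rather than $\min(H,G)$) and no new analytic ideas are needed beyond those already present in the DCCA proof; the statement is in effect a corollary of the more general bivariate result.
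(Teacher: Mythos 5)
Your proposal is correct and takes essentially the same approach as the paper, whose entire proof of this proposition is the single sentence that the argument is identical to the one given for the DCCA coefficients in Proposition~\ref{prop:non_gauss}. Your elaboration --- specializing the three-part argument to $X_1=X_2=X_j$ and noting that the independence hypothesis becomes vacuous for a single component --- simply makes explicit what the paper leaves implicit.
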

\begin{proof}
The proof is identical to the proof made for the DCCA coefficients.
\end{proof}
%\small
%\begin{eqnarray*} 
%&& |\frac{1}{n}\sum_{t=1}^n(\mathbb{P}_d(X_1^{(j)}(t))) -  \frac{1}{n/k}\sum_{t=1}^{n/k}\mathbb{P}_d((X_1^{(j)}(kt)))| \\ 
%&=&  |\frac{1}{n}\sum_{t=1}^n f_n(t) - \frac{1}{n/k}\sum_{t=1}^{n/k} f_{n/k}(kt)| \\
%&=&  |\frac{1}{n}\sum_{t=1}^n f_n(t) - \int_{0}^1 f^*(nt) dt + \int_{0}^1 f^*(nt) dt - \frac{1}{n/k}\sum_{t=1}^{n/k} f_{n/k}(kt)| \\
%&\leq& 2\frac{C}{(n/k)^2} \text{max}_{t \in (1,\dots,n)}(X_1^{j}(t))
%\end{eqnarray*}
%\normalsize
%Here we have used the assumption of sample continuity. Notice also that the final term is a random variable which 
%we will need to bound in probability.

%
% 
% The max term from the previous set of equations can be bounded in probability by a term 
% of order $n \times n^{H}$ which is of order $n^2$.  Thus the entire expression has order
% of $k^2$. When we normalize to get $S(n)$, we divide by $n^{2H}$ which gives the convergence,
% provided $k$ is of the correct order. The quoted theorem then implies that
% the subsampled coefficients tend to those of fBm.
% 
%Putting all of this together, we have that the DCCA coefficients tend to the subsampled DCCA coefficients in probability; and the subsampled coefficients tend in law to the law
%of the fBm coefficients. This implies that the DCCA coefficients converge in law to the DCCA coefficients of fBm.
% 
% (N.B.) the CS ineq. can also be used to adapt Bardet's proof.

\begin{Prop}
Under the assumptions of Proposition~\ref{prop:non_gauss}, as $n \rightarrow \infty$ and $[N/n] \rightarrow \infty$ then we have convergence to the central limit of Proposition~\ref{prop:CLT} for
the DCCA correlation coefficients assuming independence of $Y_1$ and $Y_2$.
\label{prop:glue}
\end{Prop}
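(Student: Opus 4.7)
The plan is to reduce this result to Proposition~\ref{prop:CLT} (the CLT for $\rho_{DCCA}$ in the bivariate fractional Brownian motion case) by transporting the convergence in distribution provided by Propositions~\ref{prop:non_gauss} and~\ref{prop:DFA_non_gauss} through the continuous mapping that defines $\rho_{DCCA}$. Concretely, I would assemble the vector
\[
V(N) = \left(\tfrac{\sqrt{[N/n_i]}}{n_i^{H+G}}F^2_{X_1,X_2}(n_i),\; \tfrac{1}{n_i^{2H}}F^2_{X_1,X_1}(n_i),\; \tfrac{1}{n_i^{2G}}F^2_{X_2,X_2}(n_i)\right)_{i=1}^r
\]
and argue that $V(N)$ has the same asymptotic distribution as the corresponding vector $V^{B}(N)$ formed from the fBm surrogates $X_j^B$ (after the slowly varying prefactors $\sqrt{L_1(n)L_2(n)}$ from Proposition~\ref{prop:non_gauss} cancel in the ratio defining $\rho_{DCCA}$).

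First I would upgrade Propositions~\ref{prop:non_gauss} and~\ref{prop:DFA_non_gauss}, which are stated marginally for a single $n$, to a joint statement over the finite collection $n_1,\dots,n_r$. The proof of Proposition~\ref{prop:non_gauss} is based on a subsampling argument that compares $F^2_{X_1,X_2}(n)$ with $F^2_{\widehat{k}(X_1),\widehat{k}(X_2)}(n/k)$ in probability, combined with weak convergence to the fBm surrogate on the subsampled process via Proposition~\ref{prop:Taqqu}. Since the bound in Equation~\eqref{eq:converge_multiple} is obtained coordinate-wise and the subsampled processes converge jointly to a bivariate fBm (by the multivariate analogue of Taqqu's result, invoked via the Cramér--Wold device), I would apply this coordinate-wise bound to each of the $3r$ entries of $V(N)$ with a common subsampling level $k = n_1^\delta$, then use Boole's inequality as in Equation~\eqref{eq:converge_multiple} to obtain a single event of high probability on which every entry is simultaneously close to its fBm counterpart. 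This yields $V(N) - V^{B}(N) \to 0$ in probability.

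Second, Proposition~\ref{prop:CLT} already gives $V^{B}(N) \Rightarrow \mathcal{N}(0,\Gamma(H,G))$ for the bivariate fBm. Combining this with the previous step via Slutsky's lemma gives $V(N) \Rightarrow \mathcal{N}(0,\Gamma(H,G))$. Finally, $\rho_{DCCA}(n_i)$ is a smooth function of $V(N)$ with nonvanishing denominator in the limit (the DFA means are of order $n_i^{2H}, n_i^{2G}$ and bounded away from zero by Property 3.1 of \cite{DFA_asymp}), so the continuous mapping theorem and the delta-method computation that already produced Equation~\eqref{eq:second_order_stats} yield the stated central limit for $(\sqrt{[N/n_1]}\rho_{DCCA}(n_1),\dots,\sqrt{[N/n_r]}\rho_{DCCA}(n_r))$, with a limit depending only on $H$ and $G$.

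The main obstacle is the joint upgrade in Step one: the probability estimates in Proposition~\ref{prop:non_gauss} are sharp enough only under the growth constraint $N = o(\min(n^{(1-H)(1-\delta)+1},n^{(1-G)(1-\delta)+1}))$, and one has to verify that the corresponding bound over all $r$ scales and over the three bilinear forms (numerator and two denominators) still closes. This amounts to taking the most binding of the $3r$ growth constraints, which is essentially the constraint coming from $n_1$ (the smallest scale) with the larger of $H,G$; under the assumption $n_i \to \infty$ with $[N/n_i] \to \infty$ this can always be arranged by choosing $\delta$ sufficiently small. The remaining steps (Slutsky, continuous mapping, delta method) are routine given Proposition~\ref{prop:CLT}.
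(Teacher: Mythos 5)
Your overall architecture is the same as the paper's: transport the fBm central limit of Proposition~\ref{prop:CLT} to the semi-parametric class via the approximation results (Propositions~\ref{prop:non_gauss} and~\ref{prop:DFA_non_gauss}), then pass through the ratio defining $\rho_{DCCA}$ by Slutsky/delta-method. However, there is a concrete gap at the decisive step, and it is exactly the step that the paper's (very short) proof of this proposition consists of. You define $V(N)$ with the CLT scaling $\sqrt{[N/n_i]}$ already built into the $F^2_{X_1,X_2}$ coordinate and then assert $V(N)-V^B(N)\to 0$ in probability ``by the coordinate-wise bound.'' But Equation~\eqref{eq:converge_multiple} only controls the \emph{unscaled} difference
\[
\frac{1}{n^{2H+2G}}\bigl|F^2_{X_1,X_2}(n)-F^2_{\widehat{k}(X_1),\widehat{k}(X_2)}(n/k)\bigr|^2
\;\leq\; BA\,\frac{n^2(n^\delta)^{H-\phi}+n^2(n^\delta)^{G-\phi}}{n^{2H+2G+\delta}},
\]
which suffices for Proposition~\ref{prop:non_gauss} but not automatically for the CLT: after multiplying by $\sqrt{[N/n]}$ the error could in principle blow up. The paper's proof is precisely the verification that one can impose
\[
\sqrt{[N/n]}\left(\frac{n^2(n^\delta)^{H-\phi}+n^2(n^\delta)^{G-\phi}}{n^{2H+2G+\delta}}\right)\rightarrow 0
\]
simultaneously with $[N/n]\to\infty$ and with the probability-side constraint $N=o(\min(n^{(1-H)(1-\delta)+1},n^{(1-G)(1-\delta)+1}))$, i.e.\ that there is a joint rate regime for $N$ and $n$ in which both the magnitude of the error (after CLT rescaling) and the exceptional-event probability vanish.

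Your ``main obstacle'' paragraph addresses only the second (probability-side) constraint — the Boole's-inequality budget over $[N/n]$ windows and $3r$ coordinates — and claims everything ``can always be arranged by choosing $\delta$ sufficiently small.'' That is not quite right on two counts: first, the magnitude-side condition above is an independent requirement you never state or check; second, both conditions force an \emph{upper} bound on how fast $N$ may grow relative to $n$, so the conclusion holds along a chosen joint rate for $(N,n)$ rather than for every sequence with $n\to\infty$ and $[N/n]\to\infty$ (a looseness the paper shares, since its proof also says ``we choose the rate on $N$ and $n$ to coincide with this rate''). The remaining machinery you invoke (joint convergence over the $r$ scales, cancellation of the slowly varying factors in the ratio, Slutsky, delta method with the nonvanishing DFA means) is consistent with the paper, but to close the argument you must add the explicit rate verification that $\sqrt{[N/n]}$ times the approximation error tends to zero.
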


\begin{proof}

Equation~\eqref{eq:converge_multiple} implies that we require:

\begin{multline}
 \sqrt{[N/n]}\left(\frac{ n^2 (n^\delta)^{H-\phi} + n^2 (n^\delta)^{G-\phi}}{ n^{2H+2G+\delta}}\right) \rightarrow 0 \\
\text{i.e.  } \sqrt{N}\frac{ n^2 (n^\delta)^{H-\phi} + n^2 (n^\delta)^{G-\phi}}{ n^{2H+2G+\delta+1/2}} \rightarrow 0 \\
\end{multline}

This statement is satisfiable while keeping $[N/n] \rightarrow \infty$. Thus we choose the rate on $N$ and $n$ to coincide 
with this rate and the rate given by Proposition~\ref{prop:non_gauss}.
\end{proof}

Note here that the rate we have calculated may be considerably improved by replacing the bound given by Boole's inequality with a more 
sophisticated bound calculated using distribution specific information w.r.t the structure of our time-series.

We finally note a mathematically interesting observation made by numerical evaluation of Equation~\eqref{eq:covariance_fun} of the Appendix.

\begin{Conj}
The correlation between DCCA correlation coefficients is maximized asymptotically for low $H$ and the variance maximized for high $H$ under the null hypothesis.
\end{Conj}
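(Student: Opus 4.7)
The plan is to reduce the conjecture to a monotonicity statement about explicit integral expressions in $H$ and $G$, and then attempt to verify that monotonicity either analytically or via convexity arguments.

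First, I would invoke Proposition~\ref{prop:CLT} together with the delta-method calculation from Section~\ref{sec:derivation} (Equation~\eqref{eq:second_order_stats}) to write the asymptotic variance and cross-correlation of the $\rho_{DCCA}$ coefficients entirely in terms of the asymptotic second moments of the $F^2_{X_1,X_2}(n_i)$ and the means of $F^2_{X_j,X_j}(n_i)$. Because the denominators in $\rho_{DCCA}$ normalize out the $n^{H+G}$ scaling, one obtains
\[
\lim_{N/n \to \infty} [N/n]\,\mathrm{var}(\rho_{DCCA}(n)) = \frac{\Gamma_{11}(H,G)}{f(H)f(G)},
\]
with a similar ratio $R(n_1/n_2;H,G)$ for the limiting correlation between $\rho_{DCCA}(n_1)$ and $\rho_{DCCA}(n_2)$, where $f(\cdot)$ is the mean constant of Bardet et al.~\cite{DFA_asymp} and $\Gamma$ is the limiting covariance of the normalized DCCA coefficients.

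Second, I would convert the trace formula in Equation~\eqref{eq:covariance_fun} to a double-integral representation, mimicking the Riemann-sum argument used in Proposition~\ref{prop:expectation}. Writing $K_H(s,t) = \tfrac{1}{2}(|s|^{2H}+|t|^{2H}-|t-s|^{2H})$ for the fBm covariance kernel and letting $Q_d$ denote the continuum projection onto the orthogonal complement of polynomials of degree $d$ on $[0,1]$, one arrives at expressions of the schematic form
\[
\Gamma_{11}(H,G) \;=\; c_d \int_0^1\!\!\int_0^1 [Q_dK_H Q_d](s,t)\,[Q_dK_G Q_d](s,t)\,ds\,dt,
\]
and analogously for the cross-scale terms, with an extra ratio parameter $\tau=n_2/n_1$. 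After this reduction, the entire conjecture becomes a statement about how two explicit integral functionals behave as $H$ (and $G$) vary over $[1/2,1)$.

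Third, for the variance claim I would try to show that $\Gamma_{11}(H,H)/f(H)^2$ is monotone increasing in $H$ by differentiating under the integral sign: $\partial_H K_H(s,t)$ is a well-understood log-times-power expression, and the positive-semidefiniteness of $Q_d K_H Q_d$ should let one factor the derivative against a nonnegative kernel, yielding a sign. For the correlation claim I would compare the cross-scale integral to its diagonal analogue: intuitively, large $H$ makes $K_H$ concentrate its mass near the diagonal and degrade the overlap between widely separated windows, so the normalized cross-scale integral should decrease in $H$.

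The main obstacle is the monotonicity step, because the projection $Q_d$ does not commute with $K_H$ and the $H$-derivative of $K_H$ changes sign on $[0,1]^2$, so signs in the differentiated trace are not manifest. I would attempt two routes to get around this: (i) handle $d=0$ first (rank-one projection, explicit antiderivatives via Beta and Gauss hypergeometric functions) and then lift to $d\geq 1$ by a polynomial-basis perturbation argument, and (ii) try a spectral approach using the Karhunen--Lo\`eve expansion of fBm on $[0,1]$, where $\Gamma_{11}$ becomes a sum of products of eigenvalues $\lambda_k(H)\lambda_k(G)$ projected away from the polynomial subspace, and compare term-by-term decay rates in $H$. A successful proof most likely will combine both: closed-form evaluation for a reduced kernel and eigenvalue monotonicity to handle the residual.
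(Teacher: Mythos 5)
The statement you are addressing appears in the paper only as a \emph{Conjecture}: the author offers no proof at all, merely numerical evidence obtained by evaluating Equation~\eqref{eq:covariance_fun} over a grid of $(H,G)$ values and displaying the result in Figure~\ref{fig:maximum_corr_against_H_G}. Your reduction of the conjecture to explicit integral functionals is sensible and consistent with the paper's own machinery---the passage from the trace formula to Riemann integrals mirrors the argument in Proposition~\ref{prop:expectation}, and the normalization by $f(H)f(G)$ matches Equation~\eqref{eq:second_order_stats}. But the proposal does not constitute a proof: everything hinges on the monotonicity in $H$ of $\Gamma_{11}(H,H)/f(H)^2$ and of the normalized cross-scale functional, and you explicitly concede that the differentiation-under-the-integral route fails because $\partial_H K_H$ changes sign on $[0,1]^2$ and $Q_d$ does not commute with $K_H$. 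The two escape routes you sketch (closed forms for $d=0$ plus a perturbation to $d\geq 1$, or a Karhunen--Lo\`eve term-by-term comparison) are named but not carried out, so the decisive step remains open and the conjecture is exactly as unproven after your argument as before it.

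One further caution: your heuristic for the correlation claim---that large $H$ concentrates the mass of $K_H$ near the diagonal---points in the wrong direction for the raw kernel. As $H\to 1$ one has $K_H(s,t)\to st$, a rank-one, maximally spread kernel, and larger $H$ means \emph{longer}-range dependence, hence \emph{more} correlation between distant windows before detrending. The conjecture is only plausible because the projection $Q_d$ annihilates precisely the low-order polynomial component that dominates $K_H$ for $H$ near $1$, so the competition between the smoothing effect of large $H$ and the detrending is the entire content of the problem; any successful argument must quantify that cancellation, and an intuition that ignores it is likely to mislead the analysis. As it stands, your proposal is a reasonable research programme for upgrading the paper's numerical observation to a theorem, but it is not a proof.
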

%
%\begin{Conj}
%The subsampling step is unnecessary for the convergence of the test statistics.
%%\end{Conj}
%
%\begin{Prop}
%For $H=G$ and $d=1$ and assuming the null hypothesis and the semi-parametric class of Bardet et el. the central limit theorem of Proposition~\ref{prop:CLT} holds.
%\end{Prop}
%
%\begin{proof}
%The proof is identical to the proof given by Bardet et al. for DFA.
%\end{proof}

\begin{figure}[h]
\begin{center}
$\begin{array}{c c c}
\includegraphics[width=50mm]{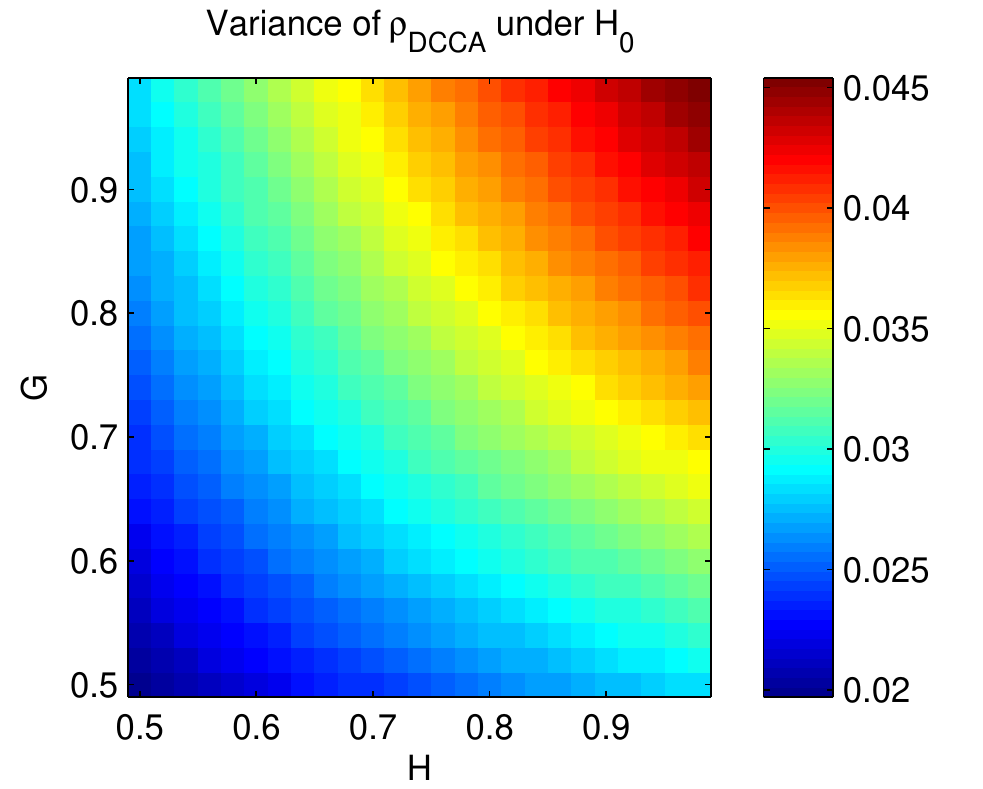} & \includegraphics[width=50mm]{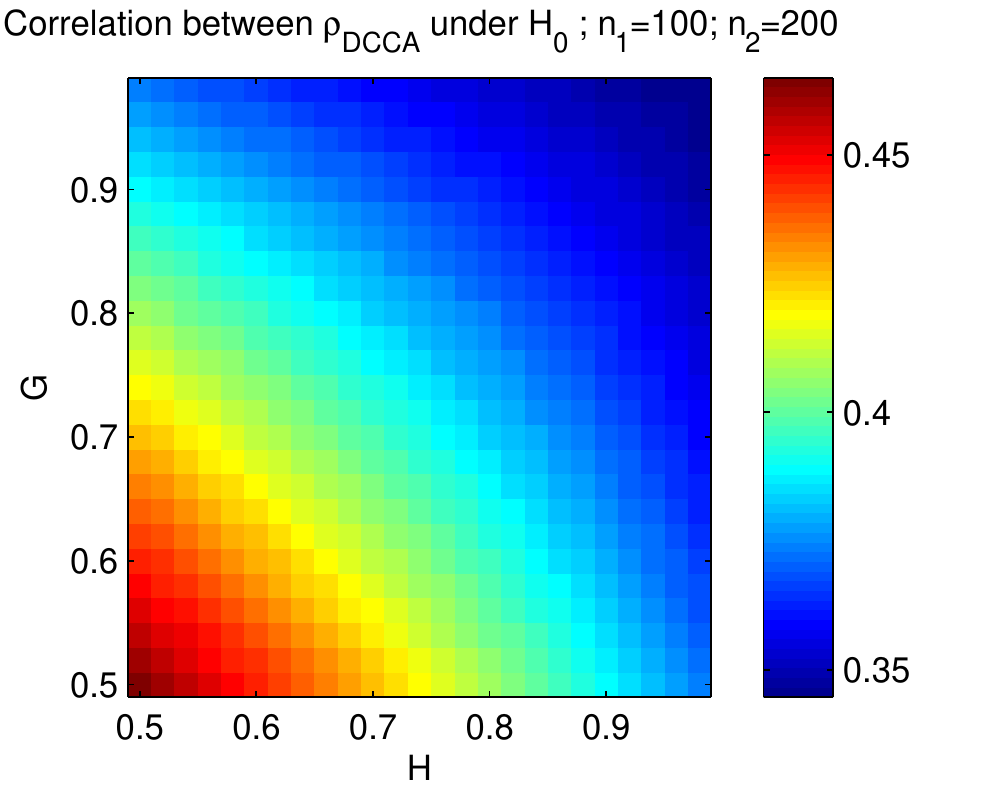}  &  \includegraphics[width=50mm]{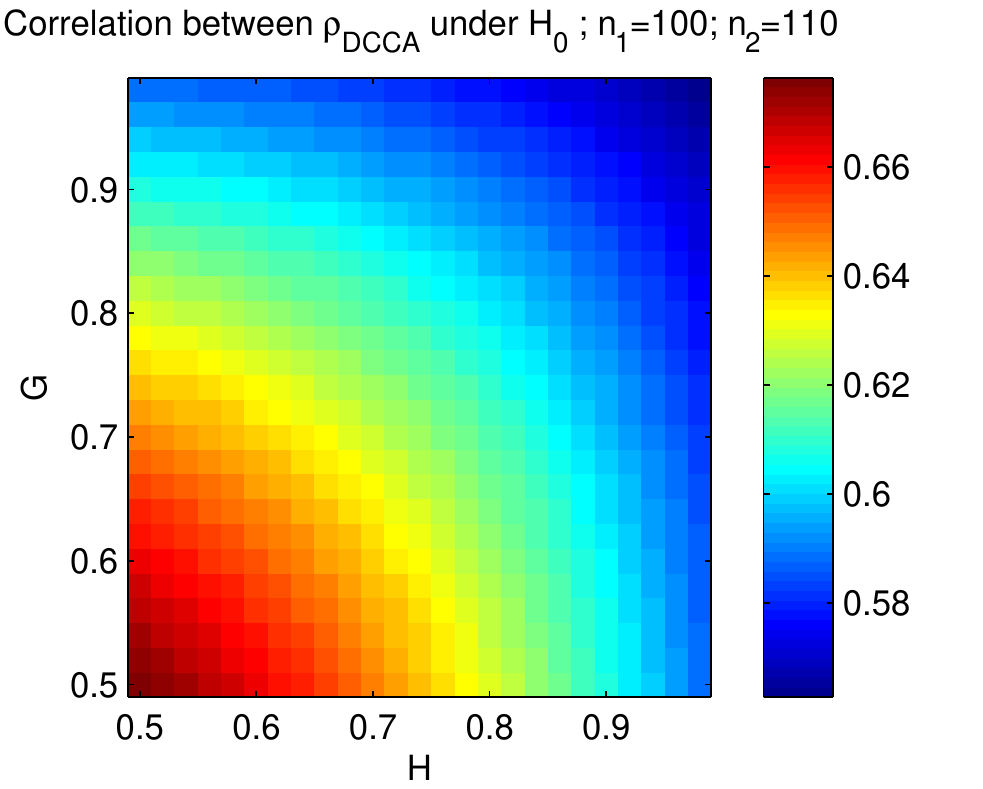} \end{array}$
\end{center}
\caption{The figure demonstrates the conjecture that variance in the $\rho_{DCCA}$ is maximized for large $H$ and $G$ but that the correlation between coefficients in minimized for small $H,G$. \label{fig:maximum_corr_against_H_G}}
\end{figure}

\section{Software}
\label{sec:software}
In the accompanying software to this paper, we supply the following functions programmed in \newline MATLAB\footnote{The software is available for download from: \url{http://www.user.tu-berlin.de/blythed/DCCA_matlab}}:
\begin{enumerate}
\item{\tt DCCA\_fast.m} -- a fast implementation of DCCA for arbitrary detrending degree $d$.
\item{\tt DCCA\_rho.m} -- calculates the DCCA correlation coefficients $\rho_{DCCA}$ on the basis of {\tt DCCA\_fast.m}.
\item {\tt covariance\_rho\_DCCA.m} -- exactly calculates $\text{cov}(\rho_{DCCA}(n),\rho_{DCCA}(m))$ for small $N$.
\item {\tt asymptotic\_test.m}  -- implements the test given by Algorithm~\ref{alg:test} using tabulation of the 
	covariance plus the central limit theorem, using sampling to calculate the quantiles of the normal distribution.
	Tabulation of the covariance function for $d=1$ is saved in {\tt .mat} files in the download package.
	\end{enumerate}

\end{document}